\newif\ifarxiv
\setlist[itemize]{itemsep=0pt}
\setlist[enumerate]{itemsep=0pt}
\crefname{subsection}{Section}{Sections}
\Crefname{subsection}{Section}{Sections}
\crefname{subsubsection}{Section}{Sections}
\Crefname{subsubsection}{Section}{Sections}
\Crefname{lemma}{Lemma}{Lemmas}
\Crefname{fact}{Fact}{Facts}
\Crefname{theorem}{Theorem}{Theorems}
\Crefname{corollary}{Corollary}{Corollaries}
\Crefname{claim}{Claim}{Claims}
\Crefname{example}{Example}{Examples}
\Crefname{problem}{Problem}{Problems}
\Crefname{definition}{Definition}{Definitions}
\Crefname{notation}{Notation}{Notations}
\Crefname{assumption}{Assumption}{Assumptions}
\Crefname{subsection}{Subsection}{Subsections}
\Crefname{section}{Section}{Sections}
\Crefname{algorithm}{Algorithm}{Algorithms}
\Crefname{algocf}{Algorithm}{Algorithms}
\newtheorem{theorem}{Theorem}[section]
\newtheorem{lemma}[theorem]{Lemma}
\newtheorem{corollary}[theorem]{Corollary}
\theoremstyle{definition}
\newtheorem{definition}[theorem]{Definition}
\newtheorem{exercise-easy}[theorem]{Exercise}
\newtheorem{exercise-med}[theorem]{Exercise}
\newtheorem{exercise-hard}[theorem]{Exercise$^\star$}
\newtheorem{remark}[theorem]{Remark}
\newtheorem{problem}[theorem]{Problem}
\newtheorem{fact}[theorem]{Fact}
\DeclareMathOperator*{\E}{\mathbb E}
\renewcommand{\Pr}{\operatorname*{\mathbf{Pr}}}
\newcommand{\eps}{\varepsilon}
\newcommand{\abs}[1]{\left| #1 \right|}
\newcommand{\vabs}[1]{\left\| #1 \right\|}
\newcommand{\pbra}[1]{\left( #1 \right)}
\newcommand{\sbra}[1]{\left[ #1 \right]}
\newcommand{\cbra}[1]{\left\{ #1 \right\}}
\newcommand{\ceilbra}[1]{\left\lceil #1 \right\rceil}
\renewcommand{\mid}{\,\middle\vert\,}
\newcommand{\bin}{\{0,1\}}
\newcommand{\binBT}{\{\bot,\top\}}
\newcommand{\TruncLap}{\mathsf{TruncLap}}
\newcommand{\poly}{\mathsf{poly}}
\newcommand{\taumin}{\tau_\textsf{min}}
\newcommand{\taumax}{\tau_\textsf{max}}
\newcommand{\indicator}{\mathsf{1}}
\newcommand{\Dec}{\mathsf{Dec}}
\newcommand{\Lap}{\mathsf{Lap}}
\newcommand{\Normal}{\mathsf{N}}
\newcommand{\Median}{\mathsf{Median}}
\newcommand{\bone}{{\bm1}}
\newcommand{\bzero}{{\bm0}}
\newcommand{\leaves}{\mathsf{leaves}}
\newcommand{\nodes}{\mathsf{nodes}}
\newcommand{\depth}{\mathsf{depth}}
\newcommand{\RMSE}{\mathsf{RMSE}}
\newcommand{\mRMSE}{\mathsf{mRMSE}}
\newcommand{\REL}{\mathsf{REL}}
\newcommand{\errltsq}{\ell_2^2\text{-}\mathsf{error}}
\newcommand{\errlinf}{\ell_\infty\text{-}\mathsf{error}}
\newcommand{\linflone}{\infty, 1}
\newcommand{\linfltwo}{\infty, 2}
\newcommand{\linflp}{\infty, p}
\newcommand{\Nbb}{\mathbb{N}}
\newcommand{\Rbb}{\mathbb{R}}
\newcommand{\Zbb}{\mathbb{Z}}
\newcommand{\Acal}{\mathcal{A}}
\newcommand{\Bcal}{\mathcal{B}}
\newcommand{\Ecal}{\mathcal{E}}
\newcommand{\Fcal}{\mathcal{F}}
\newcommand{\Hcal}{\mathcal{H}}
\newcommand{\Mcal}{\mathcal{M}}
\newcommand{\Pcal}{\mathcal{P}}
\newcommand{\Scal}{\mathcal{S}}
\newcommand{\Tcal}{\mathcal{T}}
\newcommand{\cT}{\Tcal}
\newcommand{\cM}{\mathcal{M}}
\newcommand{\R}{\mathbb{R}}
\newcommand{\Z}{\mathbb{Z}}
\newcommand{\N}{\mathbb{N}}
\newcommand{\bW}{{\bm W}}
\newcommand{\bU}{{\bm U}}
\newcommand{\bR}{{\bm R}}
\newcommand{\bA}{{\bm A}}
\newcommand{\bx}{{\bm x}}
\newcommand{\bu}{{\bm u}}
\newcommand{\bv}{{\bm v}}
\newcommand{\bz}{{\bm z}}
\newcommand{\ind}{{\mathbb{I}}}
\newcommand{\tz}{\tilde{z}}
\newcommand{\tw}{\tilde{w}}
\newcommand{\bw}{{\bm w}}
\newcommand{\tbw}{\tilde{\bw}}
\providecommand{\Comments}{0} 
\newcommand{\mytodo}[1]{\ifnum\Comments=1{#1}\fi}
\renewcommand{\tilde}{\widetilde}
\title{On Differentially Private Counting on Trees}
\author{
Badih Ghazi\thanks{Google, Mountain View, CA. Email: \texttt{\{badihghazi, ravi.k53\}@gmail.com, pritish@alum.mit.edu}.}
\and
Pritish Kamath\footnotemark[1]
\and
Ravi Kumar\footnotemark[1]
\and
Pasin Manurangsi\thanks{Google, Thailand. Email: \texttt{pasin@google.com}.}
\and
Kewen Wu\thanks{University of California, Berkeley.  Email: \texttt{shlw\_kevin@hotmail.com}. Most of this work was done while at Google.}
}
\date{}
\begin{document}
\maketitle

\begin{abstract}
We study the problem of performing counting queries at different levels in hierarchical structures while preserving individuals' privacy.  Motivated by applications, we propose a new error measure for this problem by considering a combination of multiplicative and additive approximation to the query results.  
We examine known mechanisms in differential privacy (DP) and prove their optimality, under this measure, in the pure-DP setting. In the approximate-DP setting, we design new algorithms achieving significant improvements over known ones.
\end{abstract}

\thispagestyle{empty}
\setcounter{page}{0}
\newpage

 \section{Introduction}\label{sec:introduction}

With the increasing need to preserve the privacy of users, differential privacy (DP) \cite{DBLP:journals/jpc/DworkMNS16,DBLP:conf/eurocrypt/DworkKMMN06} has emerged as a widely popular notion that provides strong guarantees on user privacy and satisfies compelling mathematical properties. There have been many deployments of DP in the field of data analytics both in industry \cite{dp2017learning, ding2017collecting} and by government agencies \cite{abowd2019economic}.

We start by recalling the formal definition of DP, tailored to our setting. 
\begin{definition}[Differential Privacy]\label{def:differential_private}
Let $\Acal$ be a randomized algorithm taking an integer vector as input.   We say $\Acal$ is \emph{$(\eps,\delta)$-differentially private} (i.e., \emph{$(\eps,\delta)$-DP}) if 
$$
\Pr\sbra{\Acal(\bx)\in S}\le e^\eps\cdot\Pr\sbra{\Acal(\bx')\in S}+\delta,
$$
holds for any measurable subset $S$ of $\Acal$'s range and any two neighboring inputs $\bx, \bx'$, where $\bx, \bx'$ are considered \emph{neighbors} iff $\vabs{\bx - \bx'}_1 = 1$.

When $\delta=0$, we say $\Acal$ is \emph{$\eps$-DP} (aka \emph{pure-DP}); the case $\delta > 0$ is \emph{approximate-DP}.
\end{definition}

\paragraph*{Estimating Counts in Trees.}
A fundamental task in data analytics is to aggregate counts over hierarchical subsets (specifically, trees) of the input points. For example, the government might be interested in the number of households, aggregated at the state, country, and city levels. As another example, online advertisers might be interested in the number of user clicks on product ads, when there is a category hierarchy on the products. 
The tree aggregation problem has been the subject of several previous works in DP including in the context of range queries~\cite{CormodePSSY12,XiaoXFGL14,DworkNRR15,ZhangXX16}, the continuous release model~\cite{DworkNPR10,ChanSS11}, private machine learning \cite{DBLP:conf/icml/KairouzM00TX21,537076}, and the US census top-down algorithms~\cite{census-arxiv,census-main,Cohen2022Private,CohenDMS21}, to name a few\footnote{We remark that there is a reduction from our problem to that of releasing thresholds, which we discuss in more detail in \cref{sec:threshold}.}. 
In this work, we revisit this basic problem and present new perspectives and results.

Let $\Tcal$ be a rooted tree of depth\footnote{The \emph{depth} is defined to be the maximum number of nodes along a root-to-leaf path of the tree.} $d$ and arity $k$; the structure of $\Tcal$ is known a priori.
Let $\nodes(\cT)$ be the set of \emph{nodes} and $\leaves(\cT)$ be the set of \emph{leaves} in $\cT$.  The problem of private aggregation in trees can be formalized as follows. 
\begin{problem}[Tree Aggregation] \label{prob:tree_counting_problem}
Given a tree $\Tcal$, 
the input to the problem is a vector
$\bx \in \Nbb^{\leaves(\Tcal)}$, where
$x_v\in\Nbb$ is a value for  $v\in\leaves(\Tcal)$. For each node $u\in\Tcal$, define its weight $w_u$ by
$$
w_u=\sum_{v \text{\rm{ is a leaf under }}u} x_v.
$$
The desired output is a DP estimate vector $\tbw \in \Rbb^{\nodes(\Tcal)}$ of $\bw$.
\end{problem}

In the above formulation, the input $x_v$ represents the number of individuals that contribute to the leaf $v$, and the weight $w_u$ counts all the number of individuals that contribute to any of its descendants (or itself).  As before, 
$\bx, \bx'$ are neighbors iff $\vabs{\bx - \bx'}_1 = 1$.

Besides being a natural problem on its own, algorithms for tree aggregation also serve as subroutines for solving other problems such as range queries~\cite{CormodePSSY12,XiaoXFGL14,DworkNRR15,ZhangXX16}.

\paragraph*{Linear Queries and Error Measure.}
Tree aggregation in fact belongs to a class of problems called \emph{linear queries} --- one of the most widely studied problems in DP (see, e.g., \cite{DinurN03,DworkMT07,HardtT10,BhaskaraDKT12,NikolovTZ13,BunUV18,Nikolov15,BlasiokBNS19,EdmondsNU20,Nikolov-JL}). In its most general form, the problem can be stated as follows.

\begin{problem}[Linear Queries]\label{prob:linear_queries}
For a given workload matrix $\bW \in \R^{m \times n}$, the input to the \emph{$\bW$-linear query} problem is a vector $\bx \in \Nbb^n$ and the  output is a DP estimate of $\bW\bx$.
\end{problem}

It is easy to see that the tree aggregation problem can be viewed as a linear query problem, where the binary workload matrix $\bW^{\cT} \in \{0, 1\}^{\nodes(\cT) \times \leaves(\cT)}$ encodes if each leaf (corresponding to a column index) is a descendant of (or itself) each node (corresponding to a row index).

Two error measures have been studied in the literature: 
the (expected) \emph{$\ell^2_2$-error}\footnote{In some previous work, $\errltsq(\cM; \bW)$ is defined as $\max_{\bx \in \N^n}  \frac{1}{m} \E\sbra{\vabs{\cM(\bx) - \bW\bx}_2^2}$ (without the square root). We use the current version as it is more convenient to deal with in our error analysis. In any case, we can obviously convert a bound in one version to the other.}
\begin{align*}
\errltsq(\cM; \bW) := \max_{\bx \in \N^n}  \sqrt{\frac{1}{m} \E\sbra{\vabs{\cM(\bx) - \bW\bx}_2^2}},
\end{align*}
and the (expected) \emph{$\ell_\infty$-error}
\begin{align*}
\errlinf(\cM; \bW) := \max_{\bx \in \N^n} \E\sbra{\vabs{\cM(\bx) - \bW\bx}_{\infty}},
\end{align*}
where $\cM$ is a DP mechanism for the $\bW$-linear query problem.
Indeed, previous works have characterized the best possible errors in the approximate-DP case up to polylogarithmic factors for any given workload $\bW$. (See the discussion in \cite{EdmondsNU20} for more details.) 

It is worth noting that these measures focus only on the \emph{additive error} of the query, i.e., $\cM(\bx) - \bW\bx$. In many scenarios, however, this is not the only possible measure of error. Specifically, in this work, we seek to expand the error measure by additionally incorporating \emph{multiplicative error}. Intuitively, multiplicative errors are meaningful when the true answer (i.e., $(\bW\bx)_i$) is quite large; e.g., if the true error is $10^6$, then we should not be distinguishing whether the additive error is $10$ or $100$ as both of them are very small compared to $10^6$. 
In addition to this intuition, multiplicative errors have also been used in other contexts such as in empirical evaluations of range queries (e.g., \cite{CormodePSSY12,QardajiYL13,ZhangXX16}).

With the above discussion in mind, we now proceed to define the error measure.

\begin{definition}[Multiplicative Root Mean Squared Error] \label{def:err-rmse}
Given parameter $\alpha > 0$, we define an \emph{$\alpha$-multiplicative root mean squared error ($\alpha$-$\RMSE$)} of an estimate $\tz$ of the true answer $z\ge0$ as
\begin{align*}
\RMSE_{\alpha}(\tz, z) := \sqrt{\E_{\tz}\sbra{\pbra{\max\cbra{\abs{\tz - z} - \alpha\cdot z, 0}}^2}}.
\end{align*}

For $\bW$-linear query, we define an \emph{$\alpha$-multiplicative maximum root mean squared error ($\alpha$-$\mRMSE$)} of a mechanism $\Mcal$ to be
\begin{align*}
\mRMSE_{\alpha}(\cM; \bW) := \max_{\bx \in \Z^n} \max_{i \in [m]} \RMSE_{\alpha}\pbra{\cM(\bx)_i, (\bW\bx)_i}.
\end{align*}
\end{definition}

Note that when $\alpha = 0$ (i.e., the error is only additive), our notion of $\alpha$-$\RMSE$ coincides with that of the standard RMSE. 
By taking the maximum error across all queries when defining the error for linear queries, we mitigate the weakness of $\ell_2^2$-error bound, which allows some queries to incur huge errors, while still avoiding the ``union bound issue'' faced in the $\ell_\infty$-error. 
The latter can be significant as the number of queries here can be exponential in the depth $d$.

We remark that our algorithms also achieve the usual with high probability guarantees, i.e., with probability at most $\eta$, $|\tz - z| \leq \alpha\cdot \max\{z, \tau\}$ for some threshold $\tau$.
We defer such a statement to later sections for simplicity of comparing the bounds. Furthermore, our error notion implies upper bounds on ``smoothed relative errors'' used for empirical evaluations in previous works \cite{QardajiYL13,ZhangXX16}. We provide a formal statement in \Cref{app:err-comp}.

When $\alpha = 0$, we drop the ``$\alpha$-multiplicative'' or ``$\alpha$-'' prefixes and refer to the errors simply as maximum RMSE or $\mRMSE$. Similarly, we also drop $\alpha$ from the subscript and simply write $\mRMSE$ instead of $\mRMSE_0$.

\subsection{Our Results}\label{sec:our_results}

\begin{table}[ht]
	\renewcommand\arraystretch{1.5}
	\centering
\scalebox{0.95}{%
	\begin{tabu}{|[1.2pt]l|[1.2pt]cl|cl|[1.2pt]}
		\tabucline[1.2pt]{-}
		{\bf Type of error} & \multicolumn{2}{c|}{\bf\boldmath$\eps$-DP} & \multicolumn{2}{c|[1.2pt]}{\bf\boldmath$(\eps, \delta)$-DP}\\
		\tabucline[1.2pt]{-}
		\multirow{2}{*}{Additive-only ($\alpha = 0$)} &
		$O(d/\eps)$ & :~Laplace &
		$O_{\eps,\delta}(\sqrt{d})$ & :~Gaussian \\
		\cline{2-5}
		&
		$\Omega(d/\eps)$ & :~\Cref{thm:lower_bound} &
		$\Omega_{\eps,\delta}(\sqrt{d})$ & :~\Cref{thm:add-err-lb}\\
		\hline
		Additive-Multiplicative ($0 < \alpha < 1$) &
		$\Omega(d/\eps)$ & :~\Cref{thm:lower_bound} &
		$O_{\eps,\delta}(\log d)$ & :~\Cref{thm:upper_bound}\\
		\tabucline[1.2pt]{-}
	\end{tabu}}
	\caption{Overview of results; entries indicate upper/lower bounds on $\alpha$-$\mRMSE$. Upper bounds corresponding to Laplace and Gaussian mechanisms are formally stated in \Cref{cor:baseline_algorithms}. For simplicity we omit the dependence on $\alpha$ in the additive-multiplicative bounds here.}
	\label{tab:overview}
\end{table}

Two known baselines for tree aggregation are the $\eps$-DP Laplace mechanism and $(\eps, \delta)$-DP Gaussian mechanism, which achieve $\mRMSE$ of $O(d/\eps)$ and $O(\sqrt{d \log (1/\delta)} / \eps)$ respectively. We start by showing that these are already tight for the additive-only errors:

\begin{theorem}[Informal; see \Cref{thm:lower_bound}]\label{thm:thm:lower_bound_informal_1}
There is no $\eps$-DP algorithm for tree aggregation with $\mRMSE$ $o(d/\eps)$, even for binary trees.
\end{theorem}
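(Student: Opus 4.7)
The plan is to establish the lower bound via a packing argument combined with a reconstruction procedure. I consider the complete binary tree $\Tcal$ of depth $d$ and, for each leaf $v \in \leaves(\Tcal)$, define a ``single-leaf'' input $\bx^{(v)}$ with $\bx^{(v)}_v = N$ and all other entries zero, for a parameter $N$ to be determined on the order of $d/\eps$. Under $\bx^{(v)}$, every ancestor of $v$ has true weight $N$ while every other node has weight $0$, and the $2^{d-1}$ inputs are pairwise at $\ell_1$ distance $2N$.

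First, I would design a reconstruction procedure $\mathsf{Rec}$ that, given the output $\cM(\bx^{(v)})$, attempts to recover $v$. The procedure traverses the tree top-down starting from the root; at each level it moves to the child whose estimated weight (coordinate of the output) is larger. Assuming $\mRMSE(\cM; \bW^\Tcal) \leq \sigma$, each single-level comparison distinguishes the true value $N$ from $0$ with failure probability at most $O(\sigma^2 / N^2)$ by Chebyshev's inequality; a union bound over the $d - 1$ levels shows that $\mathsf{Rec}(\cM(\bx^{(v)})) = v$ with probability at least $1/2$, provided $N$ is sufficiently large compared to $\sigma$.

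Next, I would apply the packing inequality for pure DP. Since $\mathsf{Rec}$ separates the $2^{d-1}$ inputs with constant probability, and these inputs are pairwise within $\ell_1$ distance $2N$, $\eps$-DP implies $2^{d-1} \leq 2 \cdot e^{2 N \eps}$, which rearranges to $N = \Omega(d/\eps)$. Combining this packing lower bound on $N$ with the reconstruction's accuracy threshold (which relates $N$ to $\sigma$) yields the desired contradiction with $\sigma = o(d/\eps)$.

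The main obstacle is closing the gap between what the straightforward union bound gives and the claimed $\Omega(d/\eps)$. A naive Chebyshev argument with a union bound over $d - 1$ levels requires $N \gtrsim \sqrt{d}\, \sigma$, which combined with $N = \Omega(d/\eps)$ only yields $\sigma = \Omega(\sqrt{d}/\eps)$. To reach the tight $\Omega(d/\eps)$, the analysis must be sharpened---either by replacing Chebyshev with a tail bound that leverages the bounded density ratio specific to pure DP, or by choosing a more sophisticated reconstruction that sidesteps the per-level union bound (e.g., a single-coordinate identification procedure built on a different packing family that amplifies the signal). I expect this to be the technically delicate step of the proof.
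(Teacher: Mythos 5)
Your packing-plus-reconstruction framework matches the paper's strategy, and you correctly diagnose that a deterministic top-down decoder with a union bound over levels only yields $\Omega(\sqrt{d}/\eps)$. But you stop short of the idea that actually closes the gap, and neither of the two directions you gesture at (a sharper tail bound, or a ``different packing family'') is what the paper does. The missing ingredient is a \emph{randomized} decoder: instead of always moving to the child with the larger estimate, the paper's decoder moves to the child whose estimate is above a fixed threshold with probability $\kappa = 1-4\eta$ and to the other child with probability $1-\kappa$ (and flips a fair coin when both children's estimates fall on the same side of the threshold). This makes the decoding success probability a smooth product $\E\bigl[2^{-A}\kappa^{B}(1-\kappa)^{C}\bigr]$ over the $d-1$ internal levels, where $B$ counts levels classified correctly, $C$ levels classified in the wrong direction, and $A$ levels where the two children agree. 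The point is that this quantity can be lower bounded using only the \emph{expected} values $\E[d-1-B] \le 2\eta(d-1)$ and $\E[C] \le \eta(d-1)$, which follow from the per-node guarantee that each fixed node is accurate with probability $1-\eta$; Markov plus a union bound over these two expectations (not over all $d$ levels!) gives the decoder a success probability of roughly $\frac14 \cdot 2^{-(d-1)\Hcal(4\eta)}$ rather than the $\geq 1/2$ your approach demands. This exponentially small but still much-larger-than-$2^{-d}$ success probability is exactly what the packing inequality needs: comparing $\frac14 \cdot 2^{-(d-1)\Hcal(4\eta)}$ against $2^{-d}\cdot e^{\eps D}$ forces $D = \Omega(d/\eps)$ once $\eta$ is a small constant.

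So the gap is real: your proposal requires all $d-1$ comparisons to succeed simultaneously, which no amount of tail-bound sharpening will fix, because the error guarantee is genuinely only per-coordinate and the output dimension is $2^d$. You need to restructure the decoder so that its success probability degrades gracefully with the number of misclassified nodes, which is what the biased random walk accomplishes. Without that step, the argument as written caps out at $\Omega(\sqrt{d}/\eps)$, as you noted.
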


\begin{theorem}[Informal; see \Cref{thm:add-err-lb}] \label{thm:add-err-apx-dp-informal}
There is no $(\eps, \delta)$-DP algorithm for tree aggregation with $\mRMSE$ $o_{\eps, \delta}(\sqrt{d})$, even for binary trees.
\end{theorem}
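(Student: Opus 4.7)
The bound $\Omega_{\eps,\delta}(\sqrt{d})$ naturally matches the $\ell_2$-sensitivity of the workload $\bW^{\cT}$: in a depth-$d$ binary tree each leaf contributes to the weights of exactly $d$ ancestors (itself and its $d-1$ proper ancestors), so every column of $\bW^{\cT}$ has Euclidean norm $\sqrt{d}$. This is precisely what drives the $O(\sqrt{d\log(1/\delta)}/\eps)$ Gaussian upper bound, and it is what I aim to match from below.

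My plan is a fingerprinting / packing argument in the spirit of \cite{BunUV18,EdmondsNU20}, adapted to the tree workload. I would sample the input $\bx$ from a product distribution on the leaves (e.g.\ each $x_v$ uniform over $\{0,1\}$) and use a fingerprinting correlation bound to argue that any mechanism $\Mcal$ with small $\mRMSE$ must produce outputs whose covariance with each leaf value $x_v$ is non-trivially large, since $x_v$ influences the $d$ ancestor-weights of $v$. Summing the per-leaf covariance bounds over all $2^{d-1}$ leaves, and combining with the standard $(\eps,\delta)$-DP upper bound on the total covariance between output and input, should force the per-query error to be $\Omega_{\eps,\delta}(\sqrt{d})$. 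An equivalent route is to appeal directly to the general factorization / discrepancy lower bound of \cite{EdmondsNU20}, using that $\gamma_2(\bW^{\cT}) = \Theta(\sqrt{d})$.

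The main obstacle I anticipate is handling the hierarchical correlation between the queries. The node-weight queries are nested rather than independent, so the standard BUV fingerprinting argument does not apply off the shelf; in particular, a reduction from continual counting along a single root-to-leaf path would give only $d$ ``independent'' queries and yield at best a $\log d$-type bound, which is far from the target $\sqrt{d}$. The resolution is to combine queries across all depth levels simultaneously: at any fixed depth the queries partition the leaves and are therefore disjoint, so fingerprinting can be applied level-by-level before aggregating the per-level contributions via Cauchy--Schwarz---precisely the aggregation that the factorization-norm machinery performs automatically via $\gamma_2$. The care needed in this combinatorial bookkeeping (and in propagating the $(\eps,\delta)$ constants through the trace bound) is where I expect the bulk of the technical work to lie.
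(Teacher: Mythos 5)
Your second route---appealing to the general factorization-norm lower bound of \cite{EdmondsNU20} together with $\gamma_2(\bW^{\cT_{\mathsf{bin}}}) = \Theta(\sqrt{d})$---is indeed the route the paper takes (\Cref{thm:linear-queries-characterization} plus \Cref{thm:gamma2-dual}). However, your write-up treats the lower bound $\gamma_2(\bW^{\cT_{\mathsf{bin}}}) \geq \Omega(\sqrt{d})$ as an invocable fact, when in fact establishing it is the entire technical content of the paper's proof. The column-norm computation $\|\bW^{\cT_{\mathsf{bin}}}\|_{\linfltwo} = \sqrt{d}$ that you lead with only gives the trivial upper bound $\gamma_2(\bW) \le \|\bW\|_{\linfltwo}$ (taking $\bR = \bW$, $\bA = \bm I$ in the factorization); it says nothing about a lower bound, and in general $\gamma_2$ can be much smaller than $\|\cdot\|_{\linfltwo}$. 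To get the lower bound the paper invokes the dual (max-based) characterization $\gamma_2(\bW) = \max_{\|\bu\|_2,\|\bv\|_2\le1}\|\bW\circ\bv\bu^\top\|_*$ from \cite{mathias1993hadamard,LeeSS08}, picks the explicit witnesses $u_j \equiv 1/\sqrt{n}$ and $v_i = 1/\sqrt{2^{\depth(i)}d}$, diagonalizes $\bU^\top\bU$ in a Haar-type basis indexed by internal nodes (eigenvectors $\pm1$ on left/right descendants), and sums the resulting singular values over levels to get $\Omega(\sqrt{d})$. None of that appears in your proposal, so as written there is a genuine gap.

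Two further cautions. First, the EdmondsNU20 characterization is stated for the averaged $\ell_2^2$-error, whereas you need a per-query (max-over-$i$) bound for $\mRMSE$; the paper explicitly modifies their argument, replacing $\|\bR\|_F$ by $\|\bR\|_{\linfltwo}$, so even "appealing directly" is not quite off-the-shelf. Second, your primary fingerprinting route (level-by-level, then Cauchy--Schwarz) is not obviously sound for this error notion: fingerprinting most naturally lower-bounds the sum of per-query errors, and it is unclear your aggregation recovers a per-query $\Omega(\sqrt{d})$ rather than merely a bound on the aggregate. Your own observation that a single-path reduction to prefix sums only yields $\log d$ is a correct warning sign that the nesting structure matters, and the dual-$\gamma_2$ witness is precisely what resolves it; I would drop the fingerprinting sketch and supply the explicit $\bu,\bv$ construction instead.
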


Given the above results, it is therefore natural to ask whether multiplicative errors can help reduce the error bound. For pure-DP, we show that this unfortunately is not the case.

\begin{theorem}[Informal; see \Cref{thm:lower_bound}]\label{thm:thm:lower_bound_informal_2}
For any constant $\alpha < 1$, there is no $\eps$-DP algorithm for tree aggregation with $\alpha$-$\mRMSE$ $o(d/\eps)$, even for binary trees.
\end{theorem}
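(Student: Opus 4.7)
The plan is a packing argument tailored to the tree structure, combined with a top-down decoder from the noisy output. For each leaf $v \in \leaves(\Tcal)$ I will define the input $\bx^{(v)}$ that places mass $M$ at leaf $v$ and zero elsewhere; these $N = 2^{d-1}$ inputs form a packing at pairwise $L_1$-distance $2M$. Under $\bx^{(v)}$, the $d$ nodes on the root-to-$v$ path have weight $M$ while all other nodes have weight $0$. Crucially, at the many $0$-weight nodes the multiplicative slack $\alpha w_u$ vanishes, so the $\alpha$-$\mRMSE$ bound coincides there with a purely additive RMSE bound of $\tau$; this is how the argument for $\alpha > 0$ reduces essentially to the additive case, at the cost of only a $(1-\alpha)$ factor.

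Given any $\eps$-DP mechanism $\cM$ with $\alpha$-$\mRMSE$ at most $\tau$, I would construct a decoder $g$ that walks top-down through the tree: at each internal node it compares its two children's estimates $\tilde w_{c_1}, \tilde w_{c_2}$ and descends into the child with the larger estimate. Applying Chebyshev's inequality to the per-node RMSE bound shows that each comparison succeeds with probability at least $1 - O\bigl(\tau^2/((1-\alpha) M)^2\bigr)$: on the correct (path) child the error is at most $\alpha M + O(\tau)$, and on the incorrect (off-path) child it is at most $O(\tau)$, so the two estimates are typically separated whenever $(1-\alpha)M$ dominates $\tau$. Picking $M$ to be a suitably large constant multiple of $\tau/(1-\alpha)$ makes the whole $d$-level walk output $v$ with probability at least $1/2$ on every input in the packing.

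I would finish by invoking the classical pure-DP packing lower bound: if a post-processed decoder recovers the correct index from an $N$-input packing at pairwise $L_1$-distance at most $2M$ with probability at least $1/2$ on each input, then $2\eps M \gtrsim \log N$. Substituting $N = 2^{d-1}$ gives $M \gtrsim d/\eps$, and combining this with the choice $M = \Theta(\tau/(1-\alpha))$ from the decoder step yields $\tau = \Omega\bigl((1-\alpha)\,d/\eps\bigr)$, which is $\Omega(d/\eps)$ for every constant $\alpha < 1$.

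The main obstacle I anticipate is keeping the union bound over the $d$ levels of the top-down walk from introducing a $\sqrt d$ loss: a naive Chebyshev-plus-union-bound analysis only guarantees decoder success for $M = \Omega(\tau\sqrt d/(1-\alpha))$, leading to the weaker bound $\tau = \Omega(\sqrt d/\eps)$. To reach the full $\Omega(d/\eps)$ bound I expect one must either (i) replace the explicit decoder with a direct information-theoretic argument that bounds $I\bigl(V;\cM(\bx^{(V)})\bigr)$ through a KL-divergence calculation from the per-coordinate RMSE bound and concludes via Fano's inequality, thereby bypassing the union bound entirely, or (ii) exploit the tree's hierarchical structure so that the comparisons at different levels reinforce each other and the union-bound cost degrades to only a polylogarithmic rather than $\sqrt d$ factor. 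The rest of the argument is classical.
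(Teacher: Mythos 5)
Your setup is exactly the paper's: same packing (mass on a single leaf), same idea of a top-down decoder, and the same reduction via \Cref{lem:from_mrmse} from the $\alpha$-$\mRMSE$ bound to a per-node high-probability guarantee. You have also correctly diagnosed the central obstacle --- a deterministic ``descend to the larger child'' decoder needs all $d-1$ comparisons to be simultaneously correct, and a naive Chebyshev-plus-union-bound only allows per-node failure probability $O(1/d)$, which forces $M = \Omega(\tau\sqrt{d})$ and hence the weaker $\Omega(\sqrt{d}/\eps)$ bound. Where the proposal falls short is that you stop at naming the obstacle; neither of your suggested workarounds is developed, and neither is what the paper actually does.

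The paper's fix is a \emph{randomized} decoder (Lemma~\ref{lem:estimation_lower_bound_rwalk}): at each internal node it compares the two children's noisy estimates to a threshold $\taumax$, and then moves toward the ``indicated'' child only with probability $\kappa = 1-4\eta$ (taking the other child with probability $1-\kappa$; ties are broken uniformly). This replaces the all-or-nothing requirement with a soft one. Writing $B$ for the number of levels where the comparison points the right way and $C$ for the number where it points the wrong way, the probability of decoding the correct leaf is $\kappa^{B}(1-\kappa)^{C}2^{-A}$ with $A+B+C=d-1$. Since each comparison is correct with probability $\geq 1-2\eta$, one has $\E[d-1-B] \le 2\eta(d-1)$ and $\E[C] \le \eta(d-1)$, and Markov's inequality gives a constant-probability event on which $d-1-B$ and $C$ are both $\le 4\eta(d-1)$; plugging in $\kappa=1-4\eta$ yields a success probability at least $\tfrac14 \cdot 2^{-(d-1)\Hcal(4\eta)}$, where $\Hcal$ is binary entropy. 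For constant $\eta<1/8$ this is $2^{-cd}$ with $c<1$, so it strictly beats the $2^{-d}$ term coming from group privacy at distance $D = 2\lceil\taumax/(1-\alpha)\rceil$, forcing $\taumax = \Omega((1-\alpha)d/\eps)$ and hence the $\Omega((1-\alpha)^2 d/\eps)$ lower bound on $\alpha$-$\mRMSE$ after invoking \Cref{lem:from_mrmse}. In short: the missing idea is to make the decoder's per-level choice probabilistic so that errors are \emph{penalized} rather than \emph{fatal}, and to analyze the resulting success probability via concentration of the number of correct comparisons plus a binary-entropy computation --- this sidesteps the union bound entirely. Your alternative (i) (Fano / KL) would still face the difficulty that per-coordinate RMSE bounds do not control the joint law of the full output vector, so it is not obviously a rescue; alternative (ii) is the right intuition but unsubstantiated.
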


Our next --- and perhaps the most surprising --- result is that, unlike in the pure-DP case, allowing multiplicative approximation in approximate-DP allows us to reduce the upper bounds exponentially from $O_{\eps, \delta}(\sqrt{d})$ to $O_{\eps, \delta}(\log d)$:

\begin{theorem}[Informal; see \Cref{thm:upper_bound}]\label{thm:upper_bound_informal}
For any constant $\alpha > 0$, there is an efficient $(\eps, \delta)$-DP algorithm for tree aggregation with $\alpha$-$\mRMSE$ $O(\log(d/\delta) / \eps)$.
\end{theorem}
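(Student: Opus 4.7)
The plan is to design a top-down hierarchical release that exploits the $\alpha$-multiplicative slack to prune light subtrees aggressively, so that the privacy cost depends only logarithmically on $d$ rather than polynomially.

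\textbf{Algorithm.} Process the tree in BFS order from the root. For each node $u$ whose parent has \emph{not} been pruned, compute $\hat w_u = w_u + \Lap(c/\eps)$ for a suitable constant $c$. Fix a threshold $\tau = \Theta(\log(d/\delta)/\eps)$. If $\hat w_u < \tau$, declare $u$ pruned: set $\tilde w_u = 0$ as well as $\tilde w_v = 0$ for every descendant $v$ of $u$, and issue \emph{no further queries} on this subtree. Otherwise output $\tilde w_u = \hat w_u$ and recurse on the children of $u$.

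\textbf{Utility.} For a released node, the error is $|\Lap(c/\eps)| = O(\log(d/\delta)/\eps)$ with high probability, giving an $\alpha$-$\RMSE$ contribution of $O(1/\eps)$ after converting the Laplace tail bound to moments. For a pruned node, $\tilde w_u = 0$; the threshold test together with a union bound over the at-most-$d$ queries on any root-to-leaf path implies $w_u \le \tau + O(\log(d/\delta)/\eps) = O(\log(d/\delta)/\eps)$ with high probability, so $|\tilde w_u - w_u| = w_u$ sits inside the $\alpha w_u + O(\log(d/\delta)/\eps)$ budget (using that $\alpha$ is a constant, and choosing the implicit constant in $\tau$ in terms of $\alpha$). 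Squaring and taking expectation gives the claimed $\alpha$-$\mRMSE$ of $O(\log(d/\delta)/\eps)$.

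\textbf{Privacy.} A naive layer-by-layer stability-based-histogram argument (within one layer each individual touches only one node, so the layer release is $(\eps_L,\delta_L)$-DP with $\tau_L = O(\log(1/\delta_L)/\eps_L)$), followed by sequential composition over the $d$ layers, yields only the $\sqrt{d}$ saving of the Gaussian baseline. To kill the $\sqrt{d}$ factor, the analysis must exploit the pruning: switching one individual changes the algorithm's trace only along that individual's root-to-leaf path, and on that path almost all nodes are either kept on both traces or pruned on both traces. The intended proof performs an AboveThreshold/sparse-vector-style one-shot analysis on the joint transcript, coupling the two traces outside the differing path and handling the $d$ path nodes with a direct hybrid whose aggregate $(\eps,\delta)$ cost is absorbed once $\tau$ is set to $\Theta(\log(d/\delta)/\eps)$, thanks to a union bound of size $d$ over threshold-crossing failure events.

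\textbf{Main obstacle.} The difficult step is making the one-shot privacy analysis rigorous --- i.e., showing that the joint output distributions on neighbors can be coupled up to total variation $\delta$ with pointwise log-likelihood ratio at most $\eps$, \emph{without} going through generic composition across the $d$ levels. The $\alpha$-multiplicative slack is what allows the threshold to be set as high as $\Theta(\log(d/\delta)/\eps)$ (so that the union bound of size $d$ over failure events still sits inside the additive-error target); that high threshold in turn is what drives the path-hybrid argument through with only a logarithmic, rather than $\sqrt{d}$, dependence on the depth. Translating this intuition into a clean proof of $(\eps,\delta)$-DP is the main work.
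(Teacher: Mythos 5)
Your algorithm, as stated, does not satisfy $(\eps,\delta)$-DP for the parameters claimed, and the proposal stops exactly where the actual work lies. Changing one individual perturbs $w_u$ by $1$ for every node $u$ on a root-to-leaf path of length $d$. If the adversary's mass is concentrated on a single leaf, \emph{all} $d$ ancestors are kept, and your algorithm releases $d$ independent Laplace perturbations of $d$ unit-sensitivity quantities. By a direct computation the log-likelihood ratio on that transcript can be as large as $d\eps/c$, and no coupling argument can save you: the ``kept-on-both-traces'' nodes are precisely the ones whose released values leak, so arguing that almost all path nodes are kept on both sides does not help. To make the $\Theta(\log d)$ bound appear one must avoid paying a per-node Laplace release in the first place, not merely set a larger $\tau$.

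The paper's fix has three components, none of which appear (except by name) in your proposal. First, the sparse vector technique is used to probe one batch query per depth level, namely $f_i=\max_{u\in\Scal_i}w_u$, so that the SVT only ``pays'' for the levels where the maximum crosses the threshold, with a bounded cutoff $c$. Second, for the nodes that do get examined more closely, the noise is \emph{truncated} Laplace: this gives a deterministic bound on the error (so no union bound over the exponentially many nodes is needed), and combined with parallel composition across same-depth nodes (which have disjoint leaves), the per-level cost is $\eps/(2c)$. Third, the cutoff $c$ is kept $O(1)$ by running a \emph{geometric cascade} of thresholds $\tau_\ell > \tau_{\ell-1} > \cdots$: at step $i$ the subtrees under examination have root weight at most $M_i = (1+\alpha_{i+1})\tau_{i+1}$, so $M_i/\tau_i$ is a constant and the number of transition nodes (hence the SVT cutoff) is $O(1)$. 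Privacy/accuracy budgets are assigned geometrically so the composition telescopes. Without the SVT batching and the geometric cascade, your single-threshold BFS is essentially $d$ adaptive Laplace releases along a path, which is exactly the $\Omega(d/\eps)$ or $\Omega(\sqrt d/\eps)$ baseline the theorem is supposed to beat.
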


We remark that \Cref{thm:upper_bound_informal} has worse dependency on $\delta$ than the $(\eps,\delta)$-DP Gaussian mechanism. 
Indeed, the former has $\log(1/\delta)$ whereas the latter only has $\sqrt{\log(1/\delta)}$.
However this gap is somewhat unavoidable as we will discuss in \Cref{rmk:optimality_apx} when $\delta$ is small, say, $\delta=2^{-\Omega(d)}$.
Our results are summarized in \Cref{tab:overview}.

Our bounds do \emph{not} depend on the arity $k$ of $\Tcal$. 
This is immediate for the lower bounds (\Cref{thm:thm:lower_bound_informal_1,thm:add-err-apx-dp-informal,thm:thm:lower_bound_informal_2}) since it suffices to prove it for binary trees $k=2$.
The reason for the upper bounds (\Cref{thm:upper_bound_informal}) is less clear, relying on the fact that the weights of two nodes are correlated iff they are on the same root-to-leaf path, which is irrelevant of the arity. 

\subsection{Proof Overview}\label{sec:proof_overview}

\paragraph*{Probabilistic Utility Guarantee.}
Recall that we defined the error $\alpha$-$\mRMSE$ as a variant of RMSE but with a multiplicative error subtracted out.
While this gives us a nice scalar quantity (once $\alpha$ is fixed) to work with and state the results, it will be useful in the subsequent analyses to define a probabilistic version of the guarantee with additional fixed thresholds.

In this different utility guarantee (formalized in~\Cref{eq:thresh}), every node $u$ is given an additional threshold $\tau_u$, and a randomized output $\tbw$ is accurate if for all $u\in\nodes(\Tcal)$ we have with probability at least $1-\eta$ that
$$
\abs{\tw_u-w_u}\le\alpha\cdot\max\cbra{w_u,\tau_u}.
$$
The smaller the thresholds $\tau_u$'s are, the better the accuracy will be.

The benefit of having $\tau_u$ is that it is independent of $w_u$ and is explicitly available to the algorithms; this formulation may be of independent interest.
On the other hand, this probabilistic guarantee is closely related to the original $\alpha$-$\mRMSE$ in \Cref{def:err-rmse}:
\begin{enumerate}
\item\label{itm:proof_overview_1} Any algorithm with low $\alpha$-$\mRMSE$ has good probabilistic guarantee (\Cref{lem:from_mrmse}).
\item\label{itm:proof_overview_2} Any algorithm with good probabilistic guarantee can be converted into one with low $\alpha$-$\mRMSE$ (\Cref{lem:to_mrmse}).
\end{enumerate}

\paragraph*{Improved Approximate-DP Algorithm.}
Given \Cref{itm:proof_overview_2}, it suffices to design an algorithm with a good probabilistic guarantee, i.e., works for thresholds $\tau_u$'s as small as possible.
Our algorithm can be decomposed into two parts: a reduction step and a classification algorithm.

\smallskip
\textbf{Reduction.} 
Since the error measure is multiplicative when $w_u$ is large, we design a geometric sequence of thresholds and classify each $w_u$ into the correct interval created by the thresholds.
To do so, every time we use a classification algorithm to find nodes that are above the current threshold, and in the next round we only focus on the ones below the threshold.
Assuming previous classifications are all correct, the weights of the nodes above the current threshold are actually below the previous threshold. Such a sandwiching relation provides a good approximation if the granularity of the thresholds is not too large compared with the ratio $\alpha$ (\Cref{lem:reduction}).

Moreover, we assign privacy and error parameters in the same geometric fashion, thus their telescoping sum (from composition theorems) converges (\Cref{thm:estimation}).

\smallskip
\textbf{Classification.}
Given any fixed threshold $\tau$, the goal is to correctly classify each $w_u$ to be either \emph{above} or \emph{below} $\tau$.
Naively, to ensure every node is correctly classified, we need to apply a union bound over all the nodes. This will incur a $\poly(d)$ overhead if we use Laplace noise or Gaussian noise as in the standard mechanisms (\Cref{lem:laplace_mechanism,lem:gaussian_mechanism}) since the tree can have exponential size.
To deal with this issue, we use a \emph{truncated Laplace mechanism} where the Laplace noise is truncated to be bounded (\Cref{lem:truncated_laplace_mechanism}); this ensures that the estimation error is always at most the truncation range and thus can obviate a union bound. 
However, we still need to pay the privacy loss from compositions.
If one node is the ancestor of another, then the input leaves they depend on must overlap, which means simply estimating every node's weight will incur $d$ rounds of composition.
To improve this, our classification algorithm will find the transition nodes in the tree: a \emph{transition node} is one whose weight exceeds $\tau$ but none of its children has weight above $\tau$.
Given the locations of the transition nodes, we can easily classify the other nodes: a node is above $\tau$ iff it is the ancestor of (or itself) a transition node.
Assuming the previous classification with threshold $\tau'>\tau$ succeeds, none of the nodes'  weights should exceed $\tau'$ now.
Since there are at most $\tau'/\tau$ transition nodes in a tree, we can use the \emph{sparse vector technique} \cite{DBLP:conf/stoc/DworkNRRV09} to find them, and incur fewer rounds of composition.

\smallskip
We remark that the idea of using increasing thresholds and sparse vector techniques has been used in \cite{bolot2013private,DBLP:conf/esa/FichtenbergerHO21,epasto2023differentially}.

\paragraph*{Pure-DP Lower Bounds.}
Given \Cref{itm:proof_overview_1}, it suffices to rule out DP algorithms with very strong probabilistic guarantees, i.e., works for thresholds $\tau_u$'s that are too small.

Our proof uses the \emph{packing argument} \cite{HardtT10}: 
we construct extremal datasets where any two datasets have a large distance.
Then if the output has small error, we can correctly identify the input dataset.
On the other hand, by the privacy guarantee, the output distribution for different datasets, though having a large distance, should not be too different, contradicting the fact that they decode to different input datasets.

Not surprisingly, our extremal datasets place the maximal value on a leaf and keep other leaves empty.
But the key issue is the decoding step.
Indeed, previous packing arguments work with $\errlinf$, where the error on \emph{all} output coordinates is small with high probability, thus admitting simple decoding algorithms.
We, however, can only guarantee the error on any \emph{fixed} node is small with high probability; we also cannot use a union bound since the output size can be exponential.

The way we circumvent this is by designing a novel probabilistic decoding algorithm where we will correctly decode to the input dataset with probability large enough to derive a contradiction.
The decoding algorithm itself performs a random walk on the tree where each step favors the larger estimated weight. 
Then the success probability of decoding can be lower bounded in terms of the number of correctly classified nodes, which in turn can be lower bounded by its expectation and our probabilistic guarantee suffices.

\paragraph*{Additive-Only Lower Bounds.}
The above packing argument only works for pure-DP setting (or $(\eps,\delta)$-DP but with exponentially small $\delta$).
Indeed, as shown by our improved approximate-DP algorithm (\Cref{thm:upper_bound_informal}), the bound can be exponentially small if we allow both approximate-DP and $\alpha>0$.
Therefore we now turn to the only remaining case: approximate-DP and $\alpha=0$. 
In this case, by \Cref{def:err-rmse}, $\alpha$-$\mRMSE$ is an additive-only error.

Our proof starts by slightly modifying the error characterization of linear queries from \cite{EdmondsNU20}. This shows that $\mRMSE$ for $\bW$-linear query is characterized by a factorization norm of $\bW$. 
Thus proving \Cref{thm:add-err-apx-dp-informal} boils down to showing a lower bound on this factorization norm of the binary tree matrix. 
Following previous works on range queries (e.g., \cite{rny033}), we do so by invoking a dual (maximum-based) characterization of the factorization norm from \cite{mathias1993hadamard,LeeSS08} and give an explicit solution to this dual formulation. 

\subsection{Relation Between Tree Aggregation and Releasing Thresholds}
\label{sec:threshold}

There is a simple reduction from the tree aggregation problem (\Cref{prob:estimation}) to the problem of releasing thresholds. 
Recall that the problem of releasing thresholds is the same as linear queries with the workload matrix $\bW^{\mathrm{th}} \in \{0, 1\}^{m \times m}$ being the matrix with upper-triangular entries (including the diagonal entries) equal to one.

The reduction works as follows. First, we may index the leaves from $1, \dots, |\leaves(\cT)|$ based, say, on their order in the DFS traversal of the tree. It is not hard to see that $w_u$ of any node in the tree corresponds to $(\bW^\mathrm{th}\bx)_b - (\bW^\mathrm{th}\bx)_{a - 1} = \sum_{v \in [a, b]} x_v$ for some $a, b \in \{1, \dots, |\leaves(\cT)|\}$. Therefore, we may run any DP threshold releasing algorithm (with $m = |\leaves(\cT)|$) and solve the tree aggregation problem.

The above reduction yields an mRMSE error for the tree aggregation problem that is of the same order as that of releasing thresholds (with $m = |\leaves(\cT)|$). For the latter, it is known that the tight error is $\Theta_{\eps, \delta}(\log m)$~\cite{Henzinger-continual-counting}\footnote{In fact, the lower bound in \cite{Henzinger-continual-counting} is even stronger as it holds against the $\ell_2^2$-error.}. Assuming that each node at depth less than $d$ has at least two children, $|\leaves(\cT)| \geq 2^d$ and therefore, the error yielded by this reduction is at least $\Omega_{\eps, \delta}(d)$. In other words, this is not even as good as the straightforward Gaussian mechanism for our problem, which yields an error of $O_{\eps, \delta}(\sqrt{d})$ (and we have shown this to be tight in \Cref{thm:add-err-lb}).

We note that there is another line of research that studies privately learning threshold functions. Recent work has shown that learning threshold functions with $(\eps, \delta)$-DP in the PAC model only requires $O_{\alpha, \eps, \delta}\pbra{(\log^* m)^{O(1)}}$ samples~\cite{KaplanThreshold}. Due to the connection between learning thresholds and releasing threshold functions presented in~\cite{BunThreshold}, this gives an algorithm for the latter with an error bound of $O_{\eps, \delta}\pbra{(\log^* m)^{O(1)} \cdot (\log n)^{2.5}}$, where $n$ denotes $\|\bx\|_1$ (i.e., the total count across all leaves in our setting).
When combining this bound with the above reduction, one gets a tree aggregation algorithm with error $O_{\eps, \delta}\pbra{(\log^* |\leaves(\cT)|)^{O(1)} \cdot (\log n)^{2.5}}$. Although the term $(\log^* |\leaves(\cT)|)^{O(1)}$ is very small, this error bound is not directly comparable to the lower and upper bounds achieved in our paper because our bounds are \emph{independent} of $n$ whereas there is a dependency of $(\log n)^{2.5}$ in this releasing threshold-based bound. 
(Note also that the dependency on $n$ cannot be removed while keeping the dependency on $m$ sublogarithmic, as this would contradict with the aforementioned lower bound of~\cite{Henzinger-continual-counting}.)

Finally, we remark that the reduction does \emph{not} work if we allow the threshold releasing algorithm to incur multiplicative errors. This is because we need to subtract two thresholds to get $w_u$ for each node $u$ in the tree, and subtraction does not preserve multiplicative approximation guarantees.

\subsection*{Paper Organization.}
We formalize the notation in \Cref{sec:preliminary}.
Then in \Cref{sec:util_formulation} we introduce the error measure we will actually use in designing algorithms and proving lower bounds; and relate it to the $\alpha$-$\mRMSE$ measure.
The improved approximate-DP algorithm is presented in \Cref{sec:upper_bounds} and the corresponding lower bounds are in \Cref{sec:lower_bounds}.
The concluding remarks are in \Cref{sec:conclusion}.
\section{Preliminaries}\label{sec:preliminary}

We use $\ln(\cdot)$ and $\log(\cdot)$ to denote the logarithm with base $e$ and $2$ respectively. 
For a positive integer $n$, let $[n]$  denote the set $\cbra{1,\ldots,n}$.
Let $\Nbb$ denote the set of non-negative integers.

\subsection{Norms}

We use boldface uppercase (e.g., $\bA$) to denote matrices and boldface lowercase (e.g., $\bx$) to denote vectors. We use $\bzero, \bone$ to denote the all-zeros and all-ones vectors / matrices.

For $\bx \in \R^m$, its 
\emph{$\ell_p$-norm} is defined as $\vabs{\bx}_p := \pbra{\sum_{i \in [m]}\abs{x_i}^p}^{1/p}$ for any $1 \leq p < \infty$. Its \emph{$\ell_\infty$-norm} is defined as $\vabs{\bx}_\infty := \max_{i \in [m]} |x_i|$.

\subsection{Tools from Differential Privacy}

Here we note some useful facts regarding differential privacy \cite{DBLP:journals/jpc/DworkMNS16,DBLP:conf/eurocrypt/DworkKMMN06,DBLP:books/sp/17/Vadhan17}.

\begin{fact}[Post-Processing]\label{fct:post-processing}
Let $\Acal_1$ be an $(\eps,\delta)$-DP algorithm and $\Acal_2$ be a (randomized) post-processing algorithm. Then the algorithm $\Acal(\bx)=\Acal_2(\Acal_1(\bx))$ is still an $(\eps,\delta)$-DP algorithm.
\end{fact}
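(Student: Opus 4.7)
The plan is to prove the post-processing guarantee in two stages: first for deterministic $\Acal_2$, then lift to randomized $\Acal_2$ by conditioning on its internal randomness. Fix neighboring inputs $\bx, \bx'$ (i.e.\ $\vabs{\bx-\bx'}_1 = 1$) and a measurable subset $S$ of the range of $\Acal_2$; the goal is to establish
\[
\Pr\sbra{\Acal_2(\Acal_1(\bx))\in S}\le e^\eps\cdot\Pr\sbra{\Acal_2(\Acal_1(\bx'))\in S}+\delta.
\]

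First I would handle the case where $\Acal_2$ is a deterministic measurable map. Then the preimage $T := \Acal_2^{-1}(S)$ is itself a measurable subset of the range of $\Acal_1$, and by definition of the composed algorithm,
\[
\Pr\sbra{\Acal_2(\Acal_1(\bx))\in S} = \Pr\sbra{\Acal_1(\bx)\in T}.
\]
Applying the $(\eps,\delta)$-DP guarantee of $\Acal_1$ to the event $T$ and the neighbors $\bx, \bx'$ immediately yields the desired inequality.

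Next I would reduce the randomized case to the deterministic one. I would model $\Acal_2$ as a deterministic function $f(y, r)$ where $y$ is the input from $\Acal_1$ and $r$ is an independent random seed drawn from some distribution $\mu$; this is the standard representation of a randomized algorithm. Conditioning on $r = r_0$, the map $y\mapsto f(y, r_0)$ is deterministic, so by the previous step applied with $S_{r_0} := \cbra{y : f(y,r_0)\in S}$ (which is measurable for each fixed $r_0$),
\[
\Pr\sbra{\Acal_1(\bx)\in S_{r_0}}\le e^\eps\cdot\Pr\sbra{\Acal_1(\bx')\in S_{r_0}}+\delta.
\]
Since $r$ is independent of $\Acal_1$'s randomness, integrating both sides against $\mu(\sd r_0)$ (using Fubini/Tonelli, which is the one technical step that should be checked but is routine in the standard measure-theoretic setup) gives exactly $\Pr\sbra{\Acal(\bx)\in S}\le e^\eps\cdot\Pr\sbra{\Acal(\bx')\in S}+\delta$, as desired.

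The main (minor) obstacle is purely measure-theoretic: making sure the sets $S_{r_0}$ and the map $(y,r_0)\mapsto \indicator\sbra{f(y,r_0)\in S}$ are jointly measurable so that Fubini applies. In any reasonable model (e.g.\ $\Acal_2$ being a randomized algorithm on standard Borel spaces with $r$ drawn from a Borel probability measure), this is automatic, and the rest of the argument is just applying DP pointwise and integrating.
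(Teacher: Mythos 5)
Your proof is correct and is the standard argument: reduce to deterministic post-processing via preimages, then handle randomized $\Acal_2$ by conditioning on its independent seed and integrating the pointwise DP inequality. The paper states this as a \Cref{fct:post-processing} with citations and gives no proof of its own, but the argument you give is precisely the one found in the cited references (e.g., Dwork--Roth and Vadhan), so there is nothing to reconcile.
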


\begin{fact}[Group Privacy]\label{fct:group_privacy}
Let $\Acal$ be an $(\eps,\delta)$-DP algorithm and $\bx,\bx'$ be two arbitrary inputs.
Define $k=\vabs{\bx-\bx'}_1$. Then for any measurable subset $S$ of $\Acal$'s range, we have
$$
\Pr\sbra{\Acal(\bx)\in S}\le e^{k\cdot\eps}\cdot\Pr\sbra{\Acal(\bx')\in S}+\delta\cdot\frac{e^{k\cdot\eps}-1}{e^\eps-1}.
$$
\end{fact}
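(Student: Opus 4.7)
The plan is to prove this by induction on $k=\vabs{\bx-\bx'}_1$, which is a non-negative integer since the algorithm $\Acal$ takes integer vector inputs (per \Cref{def:differential_private}). The base case $k=0$ is trivial (both sides equal $\Pr\sbra{\Acal(\bx)\in S}$) and $k=1$ is precisely the definition of $(\eps,\delta)$-DP applied to the neighboring pair $\bx,\bx'$. The geometric-series coefficient $\tfrac{e^{k\eps}-1}{e^\eps-1}$ on the $\delta$ term is exactly what one obtains from telescoping, so the structure of the bound essentially dictates the argument.

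For general $k\ge 1$, the first step is to build a chain $\bx=\bx^{(0)},\bx^{(1)},\ldots,\bx^{(k)}=\bx'$ of integer vectors in which every consecutive pair $\bx^{(i)},\bx^{(i+1)}$ is a neighbor, i.e., $\vabs{\bx^{(i)}-\bx^{(i+1)}}_1=1$. This is possible because $\bx-\bx'\in\Zbb^n$ has $\ell_1$-norm exactly $k$, so we can decrement/increment one coordinate of $\bx$ at a time (in any order consistent with the sign pattern of $\bx-\bx'$) to reach $\bx'$ in exactly $k$ unit steps.

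Next, I would apply $(\eps,\delta)$-DP to each consecutive pair in the chain. Concretely, by induction, assume for some $j<k$ that
\[
\Pr\sbra{\Acal(\bx)\in S}\le e^{j\eps}\cdot\Pr\sbra{\Acal(\bx^{(j)})\in S}+\delta\cdot\sum_{i=0}^{j-1}e^{i\eps}.
\]
Applying the one-step $(\eps,\delta)$-DP guarantee to the pair $\bx^{(j)},\bx^{(j+1)}$ yields $\Pr\sbra{\Acal(\bx^{(j)})\in S}\le e^\eps\Pr\sbra{\Acal(\bx^{(j+1)})\in S}+\delta$; substituting and simplifying gives the same inequality with $j$ replaced by $j+1$. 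Iterating until $j=k$ produces
\[
\Pr\sbra{\Acal(\bx)\in S}\le e^{k\eps}\cdot\Pr\sbra{\Acal(\bx')\in S}+\delta\cdot\sum_{i=0}^{k-1}e^{i\eps},
\]
and summing the finite geometric series gives the claimed $\delta\cdot\tfrac{e^{k\eps}-1}{e^\eps-1}$.

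There is no real obstacle here beyond a small bookkeeping point: one must be careful that the DP inequality is being applied in the correct ``direction'' at each step of the chain, which is fine because the definition of $(\eps,\delta)$-DP is symmetric in its treatment of neighboring inputs (the inequality holds both with $\bx^{(j)},\bx^{(j+1)}$ and with $\bx^{(j+1)},\bx^{(j)}$). The only other item worth noting is that the bound degenerates correctly: as $\eps\to 0$ the coefficient $\tfrac{e^{k\eps}-1}{e^\eps-1}$ tends to $k$, recovering the familiar $k\delta$ slack that one sees in textbook statements of group privacy.
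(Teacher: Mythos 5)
Your proof is correct. The paper states \Cref{fct:group_privacy} as a known fact with references (Dwork--McSherry--Nissim--Smith, Vadhan's survey) and does not supply its own proof; the telescoping argument you give — interpolate $\bx$ to $\bx'$ by a chain of $k$ neighboring integer vectors, apply the one-step $(\eps,\delta)$-DP inequality along the chain, and sum the resulting geometric series $\sum_{i=0}^{k-1} e^{i\eps} = \frac{e^{k\eps}-1}{e^\eps-1}$ — is precisely the standard proof in those references.
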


\begin{fact}[Basic Composition]\label{fct:basic_composition}
Let $\Acal_1$ be an $(\eps_1,\delta_1)$-DP algorithm and $\Acal_2$ be an $(\eps_2,\delta_2)$-DP algorithm. Then $\Acal(\bx)=\pbra{\Acal_1(\bx),\Acal_2(\Acal_1(\bx),\bx)}$ is an $(\eps_1+\eps_2,\delta_1+\delta_2)$-DP algorithm.
\end{fact}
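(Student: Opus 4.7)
The plan is to prove Basic Composition by layering the two privacy guarantees one on top of the other, over neighboring inputs $\bx, \bx'$ with $\vabs{\bx - \bx'}_1 = 1$. Fix any measurable event $S$ in the joint range of $\Acal$ and slice it by the first coordinate: set $S_{y_1} := \cbra{y_2 : (y_1, y_2) \in S}$, and define
\[
A(y_1) := \Pr\sbra{\Acal_2(y_1,\bx) \in S_{y_1}}, \qquad B(y_1) := \Pr\sbra{\Acal_2(y_1,\bx') \in S_{y_1}},
\]
so that $\Pr\sbra{\Acal(\bx) \in S} = \E_{y_1 \sim \Acal_1(\bx)}\sbra{A(y_1)}$ and analogously for $\bx'$.

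In the first step I apply DP of $\Acal_2$ pointwise in $y_1$: since $\bx$ and $\bx'$ remain neighboring inputs to $\Acal_2(y_1,\cdot)$, we have $A(y_1) \le e^{\eps_2} B(y_1) + \delta_2$ for every $y_1$. Taking expectation over $y_1 \sim \Acal_1(\bx)$ gives
\[
\Pr\sbra{\Acal(\bx) \in S} \;\le\; e^{\eps_2}\cdot \E_{y_1 \sim \Acal_1(\bx)}\sbra{B(y_1)} + \delta_2.
\]
The crucial observation is that once $\bx'$ and $\Acal_2$ are fixed, $B : y_1 \mapsto [0,1]$ is a \emph{fixed} $[0,1]$-valued function independent of $\Acal_1$'s randomness. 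So in the second step I apply DP of $\Acal_1$ to this bounded functional via the layer-cake identity $\E[B(Y)] = \int_0^1 \Pr\sbra{B(Y) \ge t}\,dt$ and the $(\eps_1,\delta_1)$-DP inequality applied to each level set $\cbra{y_1 : B(y_1) \ge t}$, yielding
\[
\E_{y_1 \sim \Acal_1(\bx)}\sbra{B(y_1)} \;\le\; e^{\eps_1}\cdot \E_{y_1 \sim \Acal_1(\bx')}\sbra{B(y_1)} + \delta_1 \;=\; e^{\eps_1}\cdot \Pr\sbra{\Acal(\bx') \in S} + \delta_1.
\]

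Chaining the two bounds yields the correct $e^{\eps_1+\eps_2}$ multiplicative slack, but naively combines the additive error into $e^{\eps_2}\delta_1 + \delta_2$ rather than the stated $\delta_1 + \delta_2$. This last tightening is the main (standard) obstacle: it is handled by reformulating $(\eps,\delta)$-DP via the $\delta$-approximate max-divergence $D_\infty^\delta$ (as in Dwork--Roth Lemma~3.17 / Vadhan's notes), under which each $\delta_i$-slack is charged additively rather than multiplicatively and composition of max-divergences gives the desired $\delta_1+\delta_2$ bound directly. Swapping the roles of $\bx$ and $\bx'$ then gives the symmetric inequality, completing the proof.
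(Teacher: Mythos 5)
The paper states this as a \textbf{Fact} citing \cite{DBLP:conf/eurocrypt/DworkKMMN06,DBLP:books/sp/17/Vadhan17} and gives no proof, so there is no internal proof to compare against; I assess your argument on its own.

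Your conditioning decomposition (slice $S$ by $y_1$, apply DP of $\Acal_2$ pointwise, then apply DP of $\Acal_1$ to the bounded functional $B$ via the layer-cake identity) is a correct and standard start, and you are right that it yields $e^{\eps_2}\delta_1 + \delta_2$. But the final step --- where the actual content of the claim lives --- is waved off, and the pointer you give does not close it. Re-running the layer-cake bound after a Dwork--Roth Lemma~3.17 style replacement of $\Acal_1(\bx), \Acal_1(\bx')$ by TV-close pure-DP surrogates $Y', Z'$ still produces two TV error terms, one of which sits inside the $e^{\eps_1+\eps_2}$ blow-up; chasing constants gives an additive term $\frac{e^{\eps_1+\eps_2}+1}{e^{\eps_1}+1}\,\delta_1 + \delta_2$, which exceeds $\delta_1+\delta_2$ whenever $\eps_2>0$. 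Reversing the order of the two DP applications just trades this for $\delta_1 + e^{\eps_1}\delta_2$. So ``composition of max-divergences gives the desired bound directly'' is not accurate as stated.

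The fix that actually yields $\delta_1+\delta_2$ is a \emph{truncation (sub-measure) decomposition of the first mechanism}, applied \emph{before} $\Acal_2$'s guarantee is invoked. Writing $p_1,p_1'$ for the densities of $\Acal_1(\bx),\Acal_1(\bx')$, set $q_1:=\min(p_1,\,e^{\eps_1}p_1')$ and $r_1:=p_1-q_1=(p_1-e^{\eps_1}p_1')^+$; then $(\eps_1,\delta_1)$-DP of $\Acal_1$ gives $\int r_1 \le \delta_1$, and
\begin{align*}
\Pr\sbra{\Acal(\bx)\in S}
&=\int q_1 A + \int r_1 A
\;\le\; \int q_1 A + \delta_1
\;\le\; e^{\eps_2}\!\int q_1 B + \delta_2 + \delta_1
\;\le\; e^{\eps_1+\eps_2}\Pr\sbra{\Acal(\bx')\in S} + \delta_1 + \delta_2,
\end{align*}
using $A\le 1$, $\int q_1\le 1$, and $q_1\le e^{\eps_1}p_1'$. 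The crucial change relative to your chaining is that the mass of size $\delta_1$ is removed \emph{up front}, so it is charged additively rather than riding through the $e^{\eps_2}$ factor. This is the precise content behind the approximate-max-divergence intuition you gesture at, and it is what you should spell out; as written, the hand-waved step is exactly where the proof would break.
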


\begin{fact}[Parallel Composition]\label{fct:parallel_composition}
Let $\Acal_1$ be an $(\eps_1,\delta_1)$-DP algorithm and $\Acal_2$ be an $(\eps_2,\delta_2)$-DP algorithm. Assume $\Acal_1$ and $\Acal_2$ depend on disjoint subsets of input coordinates. Then the algorithm $\Acal(\bx)=\pbra{\Acal_1(\bx),\Acal_2(\Acal_1(\bx),\bx)}$ is a $(\max\cbra{\eps_1,\eps_2},\max\cbra{\delta_1,\delta_2})$-DP algorithm.
\end{fact}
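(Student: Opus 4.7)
The plan is to fix any neighboring pair $\bx, \bx'$ with $\vabs{\bx-\bx'}_1 = 1$, identify the unique coordinate $i$ where they differ, and split into three mutually exclusive cases according to where $i$ lies relative to the two disjoint subsets $S_1, S_2$ of input coordinates that $\Acal_1, \Acal_2$ depend on.

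First I would dispose of the two easy cases. If $i \notin S_1 \cup S_2$, then neither $\Acal_1$ nor $\Acal_2$ can see the change, so $\Acal(\bx)$ and $\Acal(\bx')$ are identically distributed and there is nothing to prove. If $i \in S_1$, then $\bx, \bx'$ agree on $S_2$, so the (randomized) map $y_1 \mapsto \Acal_2(y_1, \bx)$ coincides in distribution pointwise in $y_1$ with $y_1 \mapsto \Acal_2(y_1, \bx')$. Thus $\Acal(\bx)$ is obtained by applying a \emph{single} randomized post-processing to $\Acal_1(\bx)$ that does not depend on which of $\bx, \bx'$ we started from, so \Cref{fct:post-processing} immediately delivers $(\eps_1,\delta_1)$-DP between $\Acal(\bx)$ and $\Acal(\bx')$.

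The main step, and the one where a little care is needed, is the case $i \in S_2$. Here $\bx, \bx'$ agree on $S_1$, so $\Acal_1(\bx)$ and $\Acal_1(\bx')$ are identically distributed; call this common law $\mu$. For each fixed $y_1$ in the support of $\mu$, I would treat $\bx \mapsto \Acal_2(y_1, \bx)$ as a stand-alone $(\eps_2,\delta_2)$-DP algorithm in its second argument (with $y_1$ viewed as auxiliary randomness independent of $\bx$); this is the interpretation required by the hypothesis that $\Acal_2$ is $(\eps_2,\delta_2)$-DP and depends only on coordinates in $S_2$. Then for any measurable set $S$ in $\Acal$'s range, writing $S_{y_1} := \cbra{y_2 : (y_1,y_2) \in S}$, an integration against $\mu$ gives
\begin{align*}
\Pr\sbra{\Acal(\bx) \in S}
  &= \int \Pr\sbra{\Acal_2(y_1,\bx) \in S_{y_1}}\,d\mu(y_1) \\
  &\le \int \pbra{e^{\eps_2}\Pr\sbra{\Acal_2(y_1,\bx') \in S_{y_1}} + \delta_2}\,d\mu(y_1)
   = e^{\eps_2}\Pr\sbra{\Acal(\bx') \in S} + \delta_2,
\end{align*}
which is $(\eps_2,\delta_2)$-DP for this case.

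Combining the three cases, the privacy loss on any neighboring pair is bounded by $\pbra{\max\cbra{\eps_1,\eps_2}, \max\cbra{\delta_1,\delta_2}}$, establishing the claim. The only mildly delicate point is the pointwise interpretation of $\Acal_2(y_1,\cdot)$ as an $(\eps_2,\delta_2)$-DP mechanism used in the $i \in S_2$ case together with the coupling of the first outputs under $\mu$; all remaining steps are direct invocations of the definition of DP and \Cref{fct:post-processing}.
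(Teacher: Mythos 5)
The paper states Fact~\ref{fct:parallel_composition} without proof, treating it (along with post-processing, group privacy, and basic composition) as a standard tool cited from the DP literature. There is therefore no in-paper proof to compare against.

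Your proof is correct. The trichotomy on the differing coordinate is the natural route: for $i\notin S_1\cup S_2$ the two output laws are equal; for $i\in S_1$, since the map $y_1\mapsto\pbra{y_1,\Acal_2(y_1,\bar\bx)}$ (with $\bar\bx$ the common restriction to $S_2$) is a randomized post-processing applied identically to $\Acal_1(\bx)$ and $\Acal_1(\bx')$, \Cref{fct:post-processing} gives $(\eps_1,\delta_1)$-DP; and for $i\in S_2$, conditioning on $\Acal_1=y_1$ (whose law $\mu$ is identical under $\bx$ and $\bx'$ since $\bx|_{S_1}=\bx'|_{S_1}$), applying the pointwise $(\eps_2,\delta_2)$-DP guarantee of $\Acal_2(y_1,\cdot)$ and integrating against $\mu$ yields $e^{\eps_2}\Pr[\Acal(\bx')\in S]+\delta_2$ because $\mu$ is a probability measure, so $\int\delta_2\,d\mu=\delta_2$. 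The one interpretive point you flag --- that ``$\Acal_2$ is $(\eps_2,\delta_2)$-DP and depends only on $S_2$'' should be read as: for each fixed first argument $y_1$, $\Acal_2(y_1,\cdot)$ is $(\eps_2,\delta_2)$-DP in the second argument and its output distribution depends on the second argument only through the coordinates in $S_2$ --- is exactly the reading the paper implicitly uses when it invokes \Cref{fct:parallel_composition} (e.g., for \textsf{Lines 11--18} of \Cref{alg:classification}). No gaps.
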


Two of the most ubiquitous mechanisms in DP are the Laplace and Gaussian mechanisms \cite{DBLP:conf/tamc/Dwork08,DBLP:conf/eurocrypt/DworkKMMN06,DBLP:journals/fttcs/DworkR14}. 
We use $\Lap(\sigma)$ to denote the \emph{Laplace distribution} with parameter $\sigma$, whose density function is $\frac1{2\sigma}\exp\pbra{-\frac{|x|}\sigma}$.
We use $\Normal(\mu,\sigma^2)$ to denote the \emph{Gaussian distribution} with mean $\mu$ and variance $\sigma^2$, whose density function is $\frac1{\sigma\sqrt{2\pi}}\exp\pbra{-\frac12\frac{(x-\mu)^2}{\sigma^2}}$.

For matrix $\bA$ and $p \geq 1$, we  use $\vabs{\bA}_{\linflp}$ to denote the maximum $\ell_p$-norm among all column vectors of $\bA$.  The Laplace and Gaussian mechanisms for linear queries are stated next.

\begin{lemma}[Laplace Mechanism, \cite{DBLP:conf/tamc/Dwork08,DBLP:journals/vldb/LiMHMR15}]\label{lem:laplace_mechanism}
For the $\bW$-linear query problem, the algorithm that outputs $\bW \bx + \bz$ is $\eps$-DP, where each entry of $\bz$ is drawn i.i.d. from $\Lap\pbra{\vabs{\bW}_{\linflone}/\eps}$.
\end{lemma}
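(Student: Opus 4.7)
The plan is to carry out the standard $\ell_1$-sensitivity analysis for the Laplace mechanism, tailored to the $\bW$-linear query setting.

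First I would bound the $\ell_1$-sensitivity of the map $\bx\mapsto \bW\bx$. For any two neighboring inputs $\bx,\bx'\in\Nbb^n$ we have $\vabs{\bx-\bx'}_1=1$, so $\bx-\bx'=\pm\mathbf{e}_j$ for some coordinate $j\in[n]$. Then $\bW\bx-\bW\bx'=\pm\bW_{\cdot,j}$, the $j$-th column of $\bW$, and hence
\[
\vabs{\bW\bx-\bW\bx'}_1 \;=\; \vabs{\bW_{\cdot,j}}_1 \;\le\; \max_{j\in[n]} \vabs{\bW_{\cdot,j}}_1 \;=\; \vabs{\bW}_{\linflone}.
\]

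Next, with $\sigma:=\vabs{\bW}_{\linflone}/\eps$, I would compare the densities of the outputs $\bW\bx+\bz$ and $\bW\bx'+\bz$ at an arbitrary point $\by\in\Rbb^m$. Writing $f_\sigma(t)=\tfrac{1}{2\sigma}e^{-|t|/\sigma}$ for the Laplace density and using independence across the $m$ coordinates,
\[
\frac{\prod_{i\in[m]} f_\sigma\pbra{y_i-(\bW\bx)_i}}{\prod_{i\in[m]} f_\sigma\pbra{y_i-(\bW\bx')_i}}
\;=\; \exp\pbra{\frac{1}{\sigma}\sum_{i\in[m]} \pbra{\abs{y_i-(\bW\bx')_i}-\abs{y_i-(\bW\bx)_i}}}.
\]
Applying the reverse triangle inequality coordinate-wise and then the sensitivity bound,
\[
\sum_{i\in[m]} \pbra{\abs{y_i-(\bW\bx')_i}-\abs{y_i-(\bW\bx)_i}}
\;\le\; \vabs{\bW\bx-\bW\bx'}_1 \;\le\; \vabs{\bW}_{\linflone},
\]
so the density ratio is at most $\exp(\vabs{\bW}_{\linflone}/\sigma)=e^{\eps}$.

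Finally, I would integrate this pointwise density bound over any measurable set $S\subseteq\Rbb^m$ to conclude $\Pr[\bW\bx+\bz\in S]\le e^{\eps}\cdot\Pr[\bW\bx'+\bz\in S]$, which is exactly $\eps$-DP (with $\delta=0$). I do not anticipate any genuine obstacle here; the only mild care is ensuring the sensitivity is measured in $\ell_1$ (matching the Laplace calibration) rather than $\ell_2$ or $\ell_\infty$, which is why the parameter $\vabs{\bW}_{\linflone}$ is the correct one to use.
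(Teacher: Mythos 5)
Your proof is correct and is exactly the standard argument one finds in the cited references: bound the $\ell_1$-sensitivity of $\bx\mapsto\bW\bx$ by $\vabs{\bW}_{\linflone}$ using that neighboring integer inputs differ by $\pm\mathbf{e}_j$, then run the usual pointwise density-ratio calculation for i.i.d.\ Laplace noise. The paper itself states this lemma by citation without reproducing a proof, so there is nothing to compare against beyond noting that your approach is the canonical one.
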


\begin{lemma}[Gaussian Mechanism, \cite{DBLP:conf/eurocrypt/DworkKMMN06,DBLP:journals/fttcs/DworkR14}]\label{lem:gaussian_mechanism}
Assume $\eps, \delta \in(0,1)$.
For the $\bW$-linear query problem, the algorithm that outputs $\bW \bx + \bz$ is $(\eps,\delta)$-DP, where each entry of $\bz$ is drawn i.i.d. from $\Normal\pbra{0, 2\ln \pbra{1.25/\delta}\vabs{\bW}_{\linfltwo}^2/\eps^2}$.
\end{lemma}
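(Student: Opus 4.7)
The plan is to reduce the lemma to the classical (scalar) Gaussian mechanism by first computing the $\ell_2$-sensitivity of the map $\bx \mapsto \bW\bx$, and then invoking the standard privacy-loss-random-variable analysis. Specifically, for neighboring $\bx, \bx' \in \Nbb^n$ with $\vabs{\bx-\bx'}_1 = 1$, the difference $\bx - \bx'$ is supported on a single coordinate $j \in [n]$ with entry $\pm 1$, so $\bW\bx - \bW\bx' = \pm \bW_{\cdot,j}$ is (up to sign) a single column of $\bW$. Hence $\vabs{\bW\bx - \bW\bx'}_2 \le \max_j \vabs{\bW_{\cdot,j}}_2 = \vabs{\bW}_{\linfltwo}$, giving an $\ell_2$-sensitivity of $\Delta := \vabs{\bW}_{\linfltwo}$.

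Next, I would prove the abstract Gaussian mechanism: for any deterministic function $f : \Nbb^n \to \R^m$ with $\ell_2$-sensitivity $\Delta$, the release $f(\bx) + \bz$ with $\bz \sim \Normal(\bzero, \sigma^2 I)$ is $(\eps,\delta)$-DP provided $\sigma \ge \sqrt{2\ln(1.25/\delta)} \cdot \Delta/\eps$. The standard route is to write, for a fixed pair of neighbors, the privacy loss random variable
\begin{align*}
Z \;=\; \ln \frac{p_{f(\bx)+\bz}(\by)}{p_{f(\bx')+\bz}(\by)} \quad \text{when } \by \sim f(\bx) + \bz.
\end{align*}
A direct computation using the spherical Gaussian density shows $Z$ is itself a one-dimensional Gaussian with mean $\mu_Z = \vabs{f(\bx)-f(\bx')}_2^2/(2\sigma^2)$ and variance $\sigma_Z^2 = \vabs{f(\bx)-f(\bx')}_2^2/\sigma^2$, so $\vabs{f(\bx)-f(\bx')}_2 \le \Delta$ forces $\mu_Z \le \Delta^2/(2\sigma^2)$ and $\sigma_Z \le \Delta/\sigma$. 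Then I would apply the standard connection: if $\Pr[Z > \eps] \le \delta$, the mechanism is $(\eps,\delta)$-DP. A Gaussian tail bound on $Z - \mu_Z$ gives $\Pr[Z > \eps] \le \delta$ for the claimed $\sigma$, after routine manipulation of $\ln(1.25/\delta)$ (this is where the somewhat peculiar constant $1.25$ appears, matching Dwork--Roth).

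Finally, I would plug the sensitivity $\Delta = \vabs{\bW}_{\linfltwo}$ into the abstract statement with the chosen $\sigma^2 = 2\ln(1.25/\delta)\,\vabs{\bW}_{\linfltwo}^2/\eps^2$, yielding exactly the stated mechanism. The only delicate step is the tail bound / constant chase in the privacy-loss analysis; everything else (sensitivity computation and plugging in) is mechanical. I would also remark, as a sanity check, that the Gaussian noise is added coordinatewise with the \emph{same} scale depending only on the worst column norm, which is necessary because the ``bad'' coordinate $j$ in $\bx - \bx'$ is chosen by the adversary and may be any column of $\bW$.
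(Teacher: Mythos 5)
The paper does not prove \Cref{lem:gaussian_mechanism}; it is imported directly from the cited references \cite{DBLP:conf/eurocrypt/DworkKMMN06,DBLP:journals/fttcs/DworkR14}, so there is no in-paper proof to compare against. Your proposal is a correct reconstruction of the standard Dwork--Roth argument: the $\ell_2$-sensitivity computation $\Delta = \vabs{\bW}_{\linfltwo}$ is right (the difference $\bx-\bx'$ of two $\ell_1$-distance-1 vectors in $\Nbb^n$ is $\pm e_j$, so the output difference is a single column), the privacy-loss random variable is indeed Gaussian with the mean and variance you state, the reduction from $(\eps,\delta)$-DP to a one-sided tail bound on $Z$ is correct, and the constant $1.25$ is exactly what the Gaussian tail calculation produces when one threads the inequality through for $\eps<1$ (the worst-case pair of neighbors is the one achieving $\vabs{f(\bx)-f(\bx')}_2=\Delta$, which your separate upper bounds on $\mu_Z$ and $\sigma_Z$ handle correctly since both are attained simultaneously at that extreme).
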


Recall that for a tree $\cT$, we let $\bW^{\cT} \in \bin^{\nodes(\cT) \times \leaves(\cT)}$ be the indicator matrix whether a leaf is a descendant of (or itself) a node. 
This represents the tree aggregation problem as $\bW^{\cT}$-linear queries. Observe also that 
$\vabs{\bW^{\cT}}_{\linflone} = d$ and $\vabs{\bW^{\cT}}_{\linfltwo} = \sqrt{d}$.
Therefore, we can apply \Cref{lem:laplace_mechanism} and \Cref{lem:gaussian_mechanism} (together with tail bounds for Laplace and Gaussian distributions) to obtain the following baselines for tree aggregation.

\begin{corollary}[Baseline Algorithms]\label{cor:baseline_algorithms}
For the tree aggregation problem, there exists an $\eps$-DP (resp., $(\eps,\delta)$-DP) algorithm with $\mRMSE$ $O(d/\eps)$ (resp., $O(\sqrt{d\cdot\log(1/\delta)}/\eps)$).
\end{corollary}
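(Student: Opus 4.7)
The proof is essentially a direct calculation combining the column-norm identities with the noise-variance guarantees of the Laplace and Gaussian mechanisms. My plan would be to proceed in the following order.

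First, I would verify the two norm identities $\vabs{\bW^{\cT}}_{\linflone} = d$ and $\vabs{\bW^{\cT}}_{\linfltwo} = \sqrt{d}$. Recall that column $v$ of $\bW^{\cT}$ is indexed by a leaf $v \in \leaves(\cT)$ and its entries indicate which nodes have $v$ as a descendant (or equal to themselves), which is exactly the set of ancestors of $v$ together with $v$ itself. Since $\cT$ has depth $d$, the root-to-$v$ path has at most $d$ nodes, so every column has at most $d$ ones, with equality for any leaf at depth $d$. Hence $\vabs{\bW^{\cT}}_{\linflone} \le d$ and $\vabs{\bW^{\cT}}_{\linflone} \ge d$, and since the entries are $0/1$ we also get $\vabs{\bW^{\cT}}_{\linfltwo}^2 = \vabs{\bW^{\cT}}_{\linflone} = d$.

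Second, I would plug these norms into the two mechanism lemmas. Applying \Cref{lem:laplace_mechanism} gives an $\eps$-DP mechanism $\cM(\bx) = \bW^{\cT}\bx + \bz$ where each $z_u \sim \Lap(d/\eps)$ independently, so $\E[z_u^2] = \Var(z_u) = 2d^2/\eps^2$. For any fixed $\bx \in \Nbb^{\leaves(\cT)}$ and any $u \in \nodes(\cT)$, since $\alpha = 0$ we have
\[
\RMSE_0(\cM(\bx)_u, (\bW^{\cT}\bx)_u) = \sqrt{\E[z_u^2]} = \sqrt{2}\,d/\eps,
\]
and taking the maximum over $\bx$ and $u$ yields $\mRMSE(\cM; \bW^{\cT}) = O(d/\eps)$. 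For the approximate-DP case, \Cref{lem:gaussian_mechanism} gives an $(\eps,\delta)$-DP mechanism with each $z_u \sim \Normal(0, 2\ln(1.25/\delta)\cdot d/\eps^2)$, so by the same coordinate-wise computation $\RMSE_0(\cM(\bx)_u, (\bW^{\cT}\bx)_u) = \sqrt{2\ln(1.25/\delta)\cdot d}/\eps = O(\sqrt{d\log(1/\delta)}/\eps)$, completing the second bound.

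There is essentially no obstacle beyond verifying the column-norm identities, since the $\mRMSE$ with $\alpha = 0$ collapses to the pointwise root-mean-squared error of a single coordinate, which is precisely the per-coordinate standard deviation of the injected noise; the max over $\bx$ is vacuous as the noise distribution is independent of $\bx$. The mention of tail bounds in the surrounding text is presumably for the with-high-probability variants stated later in the paper, but they are not needed for the $\mRMSE$ bound itself.
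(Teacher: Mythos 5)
Your proof is correct and follows essentially the same route the paper intends: verify the column-norm identities $\vabs{\bW^{\cT}}_{\linflone} = d$ and $\vabs{\bW^{\cT}}_{\linfltwo} = \sqrt{d}$, then plug into \Cref{lem:laplace_mechanism} and \Cref{lem:gaussian_mechanism} and read off the per-coordinate noise standard deviation as the $\mRMSE$ with $\alpha = 0$. Your side observation is also right that the paper's passing reference to ``tail bounds'' is unnecessary for the $\mRMSE$ claim itself (it is relevant only for the high-probability variant mentioned elsewhere), since $\mRMSE_0$ reduces to the coordinate-wise noise standard deviation and the max over $\bx$ is vacuous for these data-independent additive-noise mechanisms.
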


We will also use the Laplace mechanism with a bounded range.
For any $R>0$, we use $\TruncLap(\sigma,R)$ to denote the \emph{truncated Laplace distribution} with parameter $\sigma$ and range $[-R,R]$, whose density function is proportional to $\exp\pbra{-|x|/\sigma}$ for $x\in[-R,R]$ and is $0$ if $|x|>R$.
Note that $\Lap(\sigma)=\TruncLap(\sigma,+\infty)$.

\begin{lemma}[Truncated Laplace Mechanism, \cite{DBLP:conf/aistats/GengDGK20}]\label{lem:truncated_laplace_mechanism}
The algorithm that, on input $x\in\Zbb$, outputs $x+z$ is $(\eps,\delta)$-DP, where $z\sim\TruncLap\pbra{\frac1\eps,\frac1\eps\ln\pbra{1+\frac{e^\eps-1}{2\delta}}}$.
\end{lemma}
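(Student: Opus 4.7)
The plan is to verify the $(\eps,\delta)$-DP guarantee directly from \Cref{def:differential_private}. Fix neighboring integer inputs $x,x'$; by symmetry we may assume $x' = x+1$. Let $\sigma = 1/\eps$, and let $f(z) = \frac{1}{Z} e^{-|z|/\sigma} \indicator[z \in [-R,R]]$ be the density of $\TruncLap(\sigma, R)$, where $Z = 2\sigma(1 - e^{-R/\sigma})$. The output $x+z$ is supported on $[x-R, x+R]$ and $x'+z$ on $[x+1-R, x+1+R]$. I will partition any measurable set $S$ into its intersection with the overlap $[x+1-R,\, x+R]$ and its complement inside $[x-R,\, x+R]$, which is contained in the length-$1$ interval $[x-R,\, x-R+1)$ where only $x+z$ has support.

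On the overlap, both densities are positive, and for any $t$ in this region,
\[
\frac{f(t-x)}{f(t-x')} \;=\; e^{(|t-x'|-|t-x|)/\sigma} \;\le\; e^{|x-x'|/\sigma} \;=\; e^\eps
\]
by the reverse triangle inequality. Integrating over $S \cap [x+1-R,\, x+R]$ yields the multiplicative $e^\eps$ part of the DP bound. It remains to show that the mass of $x+z$ in the ``problematic'' slab $[x-R,\, x-R+1)$ is at most $\delta$, since that upper-bounds $\Pr[x+z \in S] - e^\eps \Pr[x'+z \in S]$.

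A direct integration gives
\[
\Pr\sbra{z \in [-R,-R+1)} \;=\; \frac{1}{Z}\int_{-R}^{-R+1} e^{z/\sigma}\, dz \;=\; \frac{e^{-R/\sigma}\,(e^{1/\sigma}-1)}{2\,(1-e^{-R/\sigma})}.
\]
Substituting $\sigma = 1/\eps$ and setting this expression equal to $\delta$, writing $u = e^{-R\eps}$, reduces to $u(e^\eps - 1) = 2\delta(1-u)$, which solves to $e^{R\eps} = 1 + (e^\eps-1)/(2\delta)$ — precisely the prescribed $R = \frac{1}{\eps}\ln\pbra{1 + \frac{e^\eps - 1}{2\delta}}$. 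The symmetric direction $\Pr[x'+z \in S] \le e^\eps \Pr[x+z \in S] + \delta$ follows identically from the mirror slab $(x+R,\, x+R+1]$, which has the same mass under $x'+z$ by the evenness of the truncated Laplace density.

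There is no real obstacle: the argument is a calibration calculation, and the only conceptual point is recognizing that the full asymmetry between the two shifted supports is confined to one length-$1$ slab, so matching the mass in that slab to $\delta$ pins down $R$ exactly. I would write the two halves as a single computation, since the overlap bound is immediate from the exponential form and the slab bound is the displayed integral above.
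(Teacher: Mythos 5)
Your proof is correct and is exactly the standard calibration argument for the truncated Laplace mechanism; the paper itself imports this lemma from \cite{DBLP:conf/aistats/GengDGK20} without reproducing a proof, and your derivation matches that source's approach. The decomposition into the overlapping region (bounded multiplicatively by $e^\eps$ via the reverse triangle inequality) plus the single length-$1$ slab where only one shifted density has support, followed by showing the slab mass is exactly $\delta$ at the prescribed $R$, is the right way to do it, and your algebra checks out. One small point worth making explicit: the computation $\Pr\sbra{z \in [-R,-R+1)} = \frac{1}{Z}\int_{-R}^{-R+1} e^{z/\sigma}\,dz$ uses $|z| = -z$ on the entire interval, which requires $R \ge 1$; this holds iff $\delta \le 1/2$, and a one-line remark covering (or excluding) the degenerate $\delta > 1/2$ regime would make the argument airtight.
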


Our algorithm will also make use of the celebrated \emph{sparse vector technique} \cite{DBLP:conf/stoc/DworkNRRV09}. For convenience, we apply it in a black-box way as the following oracle.

\begin{lemma}[Sparse Vector Technique, \cite{DBLP:conf/stoc/DworkNRRV09,DBLP:journals/fttcs/DworkR14}]\label{lem:svt}
There exists an $\eps$-DP algorithm\\
\Sparse{$\bx, \cbra{f_i};\eta,c,\tau,\eps$} such that:
\begin{itemize}
\item \textsc{Input.} A dataset $\bx$, an adaptively chosen stream $\cbra{f_i}_{i=1,\ldots,d}$ of sensitivity-$1$ queries, error probability $\eta > 0$, a cutoff point $c\ge1$, a threshold $\tau$, and a privacy bound $\eps>0$.\footnote{We say a query $f$ is \emph{sensitivity-1} if $\abs{f(\bx)-f(\bx')}\le1$ for any two neighboring inputs $\bx,\bx'$.}
\item \textsc{Output.} A stream  $\cbra{a_i}_{i=1,\ldots,d}\in\binBT^*$ of on-the-fly answers.
\item \textsc{Accuracy.}
Let $i^*$ be the index of the $c$th $\top$ in $\cbra{a_i}_{i\in[d]}$;
if there are less than $c$ $\top$'s, let $i^* = d$.
Then, for $\Delta = 8c/\eps\cdot\ln\pbra{2d/\eta}$, with probability at least $1-\eta$ the following holds for all $i\le i^*$:
If $a_i=\top$, then $f_i(\bx)\ge\tau-\Delta$;
otherwise (i.e., $a_i=\bot$) $f_i(\bx)<\tau+\Delta$.
\end{itemize}
\end{lemma}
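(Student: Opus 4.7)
The plan is to construct the standard Sparse Vector algorithm with Laplace noise on both the threshold and each per-query answer, then separately verify privacy via a coupling/hybrid argument and accuracy via Laplace tail bounds.

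\textbf{Algorithm.} Draw a ``threshold noise'' $\rho\sim\Lap(2c/\eps)$ once. Then process queries $f_1,f_2,\ldots$ one by one: for each $i$, draw a fresh $\nu_i\sim\Lap(4c/\eps)$ and output $a_i=\top$ if $f_i(\bx)+\nu_i\ge \tau+\rho$, otherwise $a_i=\bot$. Halt once $c$ $\top$'s have been produced (pad the remainder with $\bot$ if needed, since the accuracy guarantee only covers indices $i\le i^*$). Note that each $f_i$ may depend adaptively on $a_1,\ldots,a_{i-1}$, but critically \emph{not} on $\rho$ or on the realizations of the unreleased $\nu_j$.

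\textbf{Privacy ($\eps$-DP).} The key point is that the $\bot$ outputs, no matter how many, collectively leak at most $\eps/2$ via the single shared random shift $\rho$: for any neighboring $\bx,\bx'$, a sensitivity-$1$ query shifts the comparison threshold by at most $1$, and by the standard Laplace-shift argument with scale $2c/\eps$ this gives privacy loss $\eps/(2c)$ per $\top$-answer when integrated over $\rho$, but amortized \emph{zero} cost for $\bot$-answers. More precisely, one conditions on the positions of the (at most $c$) $\top$ outputs, couples $\rho$ between the two executions, and bounds the density ratio of the joint output. The threshold noise absorbs the $\bot$ portion at cost $\eps/2$; each of the $c$ $\top$ positions absorbs its own shift at cost $\eps/(2c)$ via the independent $\nu_i$. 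Summing yields total privacy loss $\eps/2+c\cdot\eps/(2c)=\eps$. This is the standard SVT analysis; I would invoke it verbatim from \cite{DBLP:journals/fttcs/DworkR14}.

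\textbf{Accuracy.} By the Laplace tail bound, $\Pr[|\rho|>(2c/\eps)\ln(4/\eta)]\le\eta/2$, and by a union bound over the $d$ queries, $\Pr[\max_i|\nu_i|>(4c/\eps)\ln(4d/\eta)]\le\eta/2$. On the complementary event, both are bounded by $\Delta_0:=(2c/\eps)\ln(4/\eta)+(4c/\eps)\ln(4d/\eta)\le 8c/\eps\cdot\ln(2d/\eta)=\Delta$. Then for every $i\le i^*$:
\begin{itemize}
\item If $a_i=\top$, by definition $f_i(\bx)\ge \tau+\rho-\nu_i\ge\tau-|\rho|-|\nu_i|\ge\tau-\Delta$.
\item If $a_i=\bot$, similarly $f_i(\bx)<\tau+\rho-\nu_i\le\tau+|\rho|+|\nu_i|\le\tau+\Delta$.
\end{itemize}
The overall failure probability is at most $\eta$ by a union bound.

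\textbf{Main obstacle.} The only subtle step is the privacy argument for $\bot$ answers under adaptive queries: one must verify that the adversary's ability to choose $f_{i+1}$ based on $a_1,\ldots,a_i$ does not leak additional information about $\rho$ beyond what the $\top$ events reveal. This is handled by the classical coupling argument: couple the Laplace draws across $\bx$ and $\bx'$ so that the shifted comparison $f_i(\bx)+\nu_i-\rho$ matches $f_i(\bx')+\nu_i'-\rho'$ up to a total $\ell_1$-shift of at most $2$ per $\top$ answer plus $1$ spread over all $\bot$ answers via $\rho$. Since this reduction is already worked out in \cite{DBLP:conf/stoc/DworkNRRV09,DBLP:journals/fttcs/DworkR14}, I would cite it rather than re-derive it.
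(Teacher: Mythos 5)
The algorithm you describe differs from the paper's (and from the one in the source you cite) in a way that undermines your privacy argument: you draw the threshold noise $\rho\sim\Lap(2c/\eps)$ once and keep it fixed, whereas the Sparse algorithm in \cite[Algorithm~2]{DBLP:journals/fttcs/DworkR14} (and Algorithm~3 in this paper) \emph{re-draws} the threshold noise $\tilde\tau\gets\tau+\Lap(2c/\eps)$ after every $\top$. Dwork--Roth's Theorem~3.25, to which the paper defers for privacy, is proved for the re-drawn version (it is a basic composition of $c$ independent AboveThreshold instances, each with budget $\eps/c$ and fresh threshold noise). You cannot "invoke it verbatim" for a fixed-threshold variant; you would instead need the separate analysis of the single-$\rho$ SVT variant (e.g.\ Lyu--Su--Li, "Understanding the Sparse Vector Technique for Differential Privacy", Algorithm~1/Theorem~1), or supply a proof yourself. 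Your coupling sketch is morally correct for the fixed-$\rho$ variant, but it is offered as a citation, not a proof, and the cited theorem does not cover your algorithm.

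Relatedly, your privacy-budget arithmetic is internally inconsistent with your own noise scale. You write that the threshold noise costs $\eps/2$, which is what you would get from $\Lap(2/\eps)$ (a unit shift under $\Lap(b)$ costs $1/b$, so $\eps/2$); but you explicitly set $\rho\sim\Lap(2c/\eps)$, whose unit-shift cost is $\eps/(2c)$. The mismatch suggests you copied the $2c/\eps$ scale from the re-drawn parameterization while mentally using the fixed-$\rho$ accounting. The conclusion ($\leq\eps$-DP) survives because you over-noised the threshold, giving total $\eps/(2c)+\eps/2\leq\eps$, but the stated bookkeeping does not match the algorithm.

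Your accuracy argument is fine: the Laplace tail bounds and union bound are applied correctly, and $(2c/\eps)\ln(4/\eta)+(4c/\eps)\ln(4d/\eta)\le (8c/\eps)\ln(2d/\eta)$ holds for $\eta\le d^2/4$, i.e.\ for all relevant parameter regimes. The paper's own accuracy proof is slightly cleaner because it bounds every noise (both threshold re-draws and query noises) by a single $\Delta/2$, but the substance is the same. To fully repair your proof, either switch to re-drawing the threshold noise (so Theorem~3.25 applies directly, as the paper does), or keep your variant and cite/prove the fixed-threshold SVT analysis.
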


The \Sparse{} algorithm is in \cite[Algorithm 2]{DBLP:journals/fttcs/DworkR14} with $\delta=0$, where its privacy is proved in \cite[Theorem 3.25]{DBLP:journals/fttcs/DworkR14}. The accuracy part is also immediate from the algorithm description. For completeness, we give a proof in \Cref{app:svt}.
\section{Threshold-Based  Utility}\label{sec:util_formulation}

As mentioned in \cref{sec:proof_overview}, we consider a probabilistic utility guarantee with a threshold value supplied at each node in the tree.  
Then we relate it to the original $\alpha$-$\mRMSE$ notion. 

\begin{problem}[Tree Aggregation with Thresholds]\label{prob:estimation}
Consider the tree aggregation problem (i.e., \Cref{prob:tree_counting_problem}), wherein additionally, we have a threshold $\tau_u\ge0$ corresponding to each node $u\in\nodes(\Tcal)$ (both internal nodes and leaves).
The  desired output for the problem is an estimate
$\tbw \in \Rbb^{\nodes(\Tcal)}$ of $\bw$ as before.

For parameters $\eta, \alpha \in [0, 1)$, we say that an algorithm for tree aggregation is \emph{$(\alpha, \eta)$-accurate} (w.r.t. the given thresholds) if its output vector $\tbw \in \R^{\nodes(\cT)}$ satisfies the following:
\vspace{-15pt}
\begin{equation}
\label{eq:thresh}
\text{For all } u \in \nodes(\Tcal) ~~:~~ \Pr\bigg[\abs{\tilde w_u-w_u}\le\alpha\cdot\max\cbra{w_u,\tau_u} \bigg] ~\geq~ 1 - \eta.
\end{equation}
\end{problem}
\vspace{-5pt}

Unless otherwise specified, for the rest of the paper, we assume that the input to the tree aggregation problem includes a threshold at each node in the tree.  Let  $\taumin := \min_{u\in\nodes(\Tcal)}\tau_u$ and $\taumax:=\max_{u\in\nodes(\Tcal)} \tau_u$.  Now, we show that any algorithm with low $\alpha$-$\mRMSE$ also yields a certain utility guarantee in the sense of \Cref{eq:thresh}.

\begin{lemma}[Proof in \Cref{apx:proof_from_mrmse}]\label{lem:from_mrmse}
For any $\alpha' > \alpha \geq 0$, $\eta > 0$, any tree aggregation algorithm with $\alpha$-$\mRMSE$ at most $(\alpha' - \alpha)\sqrt\eta\cdot \taumin$ is also $(\alpha', \eta)$-accurate.
\end{lemma}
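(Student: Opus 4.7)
The plan is to derive the high-probability statement node-by-node from the squared-error bound using Markov's inequality. Fix any node $u \in \nodes(\Tcal)$. By the hypothesis and the definition of $\mRMSE_\alpha$, for every input $\bx$ we have
\[
\E\sbra{\pbra{\max\cbra{\abs{\tw_u - w_u} - \alpha \cdot w_u, 0}}^2} \;\le\; \mRMSE_\alpha(\cM; \bW^{\Tcal})^2 \;\le\; (\alpha' - \alpha)^2 \,\eta\, \taumin^2 \;\le\; (\alpha' - \alpha)^2 \,\eta\, \tau_u^2,
\]
where the last inequality follows from $\taumin \le \tau_u$.

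Next I would apply Markov's inequality to the non-negative random variable $\pbra{\max\cbra{\abs{\tw_u - w_u} - \alpha w_u, 0}}^2$ with threshold $(\alpha' - \alpha)^2 \tau_u^2$. This yields
\[
\Pr\sbra{\max\cbra{\abs{\tw_u - w_u} - \alpha w_u, 0} \;\ge\; (\alpha' - \alpha)\,\tau_u} \;\le\; \eta.
\]
Hence, with probability at least $1 - \eta$, we have $\abs{\tw_u - w_u} \le \alpha \cdot w_u + (\alpha' - \alpha) \cdot \tau_u$.

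It remains to show that $\alpha \cdot w_u + (\alpha' - \alpha) \cdot \tau_u \le \alpha' \cdot \max\cbra{w_u, \tau_u}$, which I would verify by a two-case split. If $w_u \ge \tau_u$, then the left-hand side is at most $\alpha w_u + (\alpha' - \alpha) w_u = \alpha' w_u$; if $w_u < \tau_u$, then it is at most $\alpha \tau_u + (\alpha' - \alpha) \tau_u = \alpha' \tau_u$. Either way the bound equals $\alpha' \cdot \max\cbra{w_u, \tau_u}$, which gives the desired $(\alpha', \eta)$-accuracy at node $u$. Since $u$ was arbitrary, the conclusion holds for all nodes.

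There is no real obstacle here: the argument is essentially a Markov-plus-case-analysis reduction from the $L^2$-type quantity inside $\RMSE_\alpha$ to a tail bound. The only point requiring any care is matching the constants so that the $(\alpha' - \alpha)\sqrt\eta \cdot \taumin$ bound converts exactly to the additive slack $(\alpha' - \alpha)\tau_u$ after Markov, which is why the statement of the lemma places a $\sqrt\eta$ factor (not $\eta$) in the $\mRMSE$ hypothesis.
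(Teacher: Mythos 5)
Your proof is correct and takes essentially the same approach as the paper: both apply Markov's inequality to the squared quantity inside $\RMSE_\alpha$ with threshold $(\alpha'-\alpha)^2\tau_u^2$, use $\taumin \le \tau_u$, and verify $\alpha w_u + (\alpha'-\alpha)\tau_u \le \alpha'\max\{w_u,\tau_u\}$ via the two-case split on $w_u$ versus $\tau_u$. The only cosmetic difference is direction: the paper starts from $\Pr[\abs{\tw_u-w_u}>\alpha'\max\{w_u,\tau_u\}]$ and chains inequalities down to $\eta$, while you derive the tail bound from the moment bound and then translate it into the accuracy event.
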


In contrast to the above bound, our algorithms will satisfy much stronger exponential tail bounds, which will be clear in the next section.
To complement \Cref{lem:from_mrmse}, we show that any algorithm that is $(\alpha,\eta)$-accurate, as per \Cref{eq:thresh}, can be made having small $\alpha$-$\mRMSE$.

\begin{lemma}[Proof in \Cref{apx:proof_to_mrmse}]\label{lem:to_mrmse}
If there is an $(\eps/2, \delta/2)$-DP algorithm that is $(\alpha, \eta)$-accurate for tree aggregation, then there is an $(\eps, \delta)$-DP algorithm for tree aggregation with $\alpha$-$\mRMSE$ at most $O\pbra{\alpha\cdot\taumax + d\sqrt\eta\cdot\pbra{1+\frac1\eps\log\pbra{\frac1\delta}}}$.
\end{lemma}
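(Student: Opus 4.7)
The plan is to combine the hypothesized $(\eps/2,\delta/2)$-DP algorithm $\Acal$ with a second $(\eps/2,\delta/2)$-DP ``safety net'' mechanism $\Bcal$ whose output $\hat{\bw}$ satisfies the deterministic bound $|\hat w_u-w_u|\le R$ at every node $u$, with $R=O\!\pbra{\frac{d}{\eps}\log(1/\delta)}$. The final mechanism outputs, for each $u$, the projection $\bar w_u$ of $\tilde w_u$ (the output of $\Acal$) onto the interval $[\hat w_u-T_u,\hat w_u+T_u]$, where $T_u:=\alpha\max\cbra{\hat w_u+R,\tau_u}+R$. Since $T_u$ depends only on $\hat{\bw}$ and the public thresholds, the whole procedure is $(\eps,\delta)$-DP by basic composition (\Cref{fct:basic_composition}) of $\Acal$ and $\Bcal$ followed by post-processing (\Cref{fct:post-processing}).

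The mechanism $\Bcal$ itself I would build by adding independent truncated Laplace noise (\Cref{lem:truncated_laplace_mechanism}) to $w_u$ at every node. Nodes at a common depth have disjoint leaf supports, so parallel composition (\Cref{fct:parallel_composition}) absorbs an entire level into one privacy budget, and basic composition over the $d$ levels then requires each coordinate only to be $(\eps/(2d),\delta/(2d))$-DP. Substituting these parameters into \Cref{lem:truncated_laplace_mechanism} gives noise scale $O(d/\eps)$ and truncation range $R=O\!\pbra{\frac{d}{\eps}\log(1/\delta)}$, yielding the deterministic envelope claimed above.

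For the error bound I would fix a node $u$ and analyze two cases. On the ``good'' event of probability at least $1-\eta$ where \Cref{eq:thresh} holds, the triangle inequality combined with $\hat w_u+R\ge w_u$ gives $|\tilde w_u-\hat w_u|\le\alpha\max\cbra{w_u,\tau_u}+R\le T_u$, so no projection happens, $\bar w_u=\tilde w_u$, and $\max\cbra{|\bar w_u-w_u|-\alpha w_u,0}\le\alpha\tau_u\le\alpha\taumax$. On the complementary ``bad'' event the projection still enforces $|\bar w_u-\hat w_u|\le T_u$; unpacking $T_u$ and using $\hat w_u\le w_u+R$ gives $|\bar w_u-w_u|\le\alpha w_u+\alpha\taumax+O(R)$, so $\max\cbra{|\bar w_u-w_u|-\alpha w_u,0}\le\alpha\taumax+O(R)$. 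Plugging both cases into \Cref{def:err-rmse} yields $\RMSE_\alpha(\bar w_u,w_u)^2\le(\alpha\taumax)^2+\eta\cdot(\alpha\taumax+O(R))^2$, and a square root gives the stated $O\!\pbra{\alpha\taumax+d\sqrt\eta\pbra{1+\frac{1}{\eps}\log(1/\delta)}}$ bound.

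The main delicate point is the design of $T_u$: it may only depend on $\hat{\bw}$ and $\tau_u$ (to remain a post-processing of the two DP outputs), yet must simultaneously be (i) large enough for the projection to be a no-op in the good event and (ii) small enough to cap the bad-event deviation $|\bar w_u-w_u|-\alpha w_u$ by $O(\alpha\taumax+R)$, independent of $w_u$. The choice $\alpha\max\cbra{\hat w_u+R,\tau_u}+R$ is tuned precisely for both. Beyond this, the proof reduces to the privacy accounting for $\Bcal$ (where parallel plus basic composition is what keeps the log factor of $R$ at $\log(1/\delta)$ rather than $\log(d/\delta)$) and a mechanical combination of the two error contributions.
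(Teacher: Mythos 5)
Your proposal is correct and follows essentially the same architecture as the paper's proof: run the $(\alpha,\eta)$-accurate algorithm alongside a truncated-Laplace fallback whose noise is bounded by $R=O\pbra{d\cdot\pbra{1+\frac{1}{\eps}\log\frac{1}{\delta}}}$ (privacy-budgeted via parallel composition across each depth level and basic composition over the $d$ levels), clip the first output toward the second, and split the $\alpha$-$\RMSE$ into a good event (probability $\ge 1-\eta$, contributing $O(\alpha\taumax)$) and a bad event (probability $\le\eta$, contributing $O(R)$). The one cosmetic difference is the clipping rule: the paper projects $\tw'_u$ onto the \emph{fixed-width} interval $[\tw''_u-R,\tw''_u+R]$ and then uses that projection onto a convex set containing $w_u$ is a contraction toward $w_u$ — so it never needs to argue the projection is a no-op, and $|\tw_u-w_u|\le 2R$ always; you instead project onto a wider, data-dependent interval of radius $T_u=\alpha\max\{\hat w_u+R,\tau_u\}+R$ precisely so that the projection is a no-op on the good event, and bound the bad-event deviation by $T_u+R$. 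Your bad-event bound picks up an extra $\alpha\taumax+O(\alpha R)$ term relative to the paper's clean $2R$, but since $\alpha<1$ and a spare $\alpha\taumax$ is absorbed into the final big-$O$, both yield the stated bound; the paper's simpler interval is just a bit tighter and saves you from having to engineer $T_u$ to satisfy the two competing constraints you flag at the end.
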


With the above two lemmas in mind, it essentially suffices for us to consider the accuracy notion in \Cref{eq:thresh}, which will be convenient for the rest of the paper.

\section{Upper Bounds}\label{sec:upper_bounds}

In this section, we present a new algorithm for tree aggregation in the approximate-DP setting, achieving a significant improvement over the baseline algorithm (i.e., \Cref{cor:baseline_algorithms}).

\begin{theorem}\label{thm:upper_bound}
Let $\alpha>0$ be a parameter.
For any $\eps>0$ and $\delta\in(0,1)$, there is an $(\eps,\delta)$-DP algorithm for tree aggregation with $\alpha$-$\mRMSE$ at most $O\pbra{\frac1{\alpha^3}\cdot\pbra{\frac1\eps\log\pbra{\frac{2d}\delta}+1}}$.
\end{theorem}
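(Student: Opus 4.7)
The plan is to reduce the target $\alpha$-$\mRMSE$ bound to a threshold-based accuracy problem via \Cref{lem:to_mrmse}. If I design an $(\eps/2, \delta/2)$-DP algorithm that is $(\alpha, \eta)$-accurate with uniform thresholds $\tau_u \equiv \tau^*$ in the sense of \Cref{eq:thresh}, then \Cref{lem:to_mrmse} produces an algorithm with $\alpha$-$\mRMSE$ at most $O(\alpha \tau^* + d\sqrt{\eta} \cdot (1 + \frac{1}{\eps}\log(1/\delta)))$. Choosing $\tau^* = \Theta\pbra{\frac{1}{\alpha^4}\pbra{\frac{1}{\eps}\log(d/\delta)+1}}$ and $\eta = 1/\poly(d)$ makes the $\alpha\tau^*$ term match the theorem's target expression while rendering the $d\sqrt\eta$ term negligible.

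For the threshold-accurate subroutine I construct a geometric ladder $T_0 > T_1 > \cdots > T_L$ with ratio $T_{\ell-1}/T_\ell = 1 + \Theta(\alpha)$, starting from a $T_0$ that upper-bounds the root weight and descending to $T_L = \Theta(\alpha \tau^*)$. The subroutine processes levels in decreasing threshold order, maintaining the invariant that every still-unclassified node has weight strictly below the most recently processed threshold. At level $\ell$ it invokes \Sparse\ (\Cref{lem:svt}) on each unclassified subtree separately, querying nodes in a top-down adaptive order via $f_u(\bx) = w_u - T_\ell$ and descending into a child only when its parent tests $\top$; a \emph{transition node} is one that tests $\top$ while all its children test $\bot$. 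Thanks to the invariant, each unclassified subtree has root weight below $T_{\ell-1}$, and since transition nodes form an antichain their weights sum to at most $T_{\ell-1}$; this bounds their count per subtree by $T_{\ell-1}/T_\ell = O(1)$, so a constant \Sparse\ cutoff suffices. Every unclassified node is labeled above-$T_\ell$ iff it is (or is an ancestor of) a transition node, and its final estimate is the smallest $T_\ell$ at which it carries this label; nodes never labeled above $T_L$ receive estimate $0$.

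The privacy accounting combines parallel composition over disjoint subtrees within one level (\Cref{fct:parallel_composition}) with basic composition across levels (\Cref{fct:basic_composition}), allocating budget $\eps_\ell$ geometrically so that the smaller (more sensitive) thresholds receive larger budgets and the total telescopes to $\eps/2$; the resulting algorithm is therefore pure $(\eps/2, 0)$-DP. For accuracy, \Cref{lem:svt} guarantees SVT noise $O(\log(d/\eta_\ell)/\eps_\ell)$ at level $\ell$ with failure probability $\eta_\ell = \eta/L$; I require this noise to stay below $\alpha T_\ell$ so that every classification is correct, and a union bound over the $L$ levels then yields overall $(\alpha,\eta)$-accuracy. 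A correctly classified node with $w_u \in [T_\ell, T_{\ell-1})$ thus has estimate $T_\ell$ with error at most $T_{\ell-1} - T_\ell = O(\alpha w_u)$, while nodes with $w_u < T_L$ contribute absolute error at most $T_L = O(\alpha \tau^*)$, satisfying the threshold-accuracy definition.

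The main obstacle I anticipate is the careful parameter balancing that yields the $1/\alpha^3$ dependence cleanly: the ladder length $L = O(\frac{1}{\alpha}\log(T_0/T_L))$, the geometric privacy sum $\sum_\ell \eps_\ell = \Theta(\frac{1}{\alpha^2 T_L}\log(d/\eta))$, and the $T_L$-to-$\tau^*$ inflation (needed to absorb the absolute error for $w_u < T_L$ into the multiplicative budget) each contribute one factor of $1/\alpha$, and these must align so that the privacy sum stays below $\eps/2$ while every SVT's error remains below its level's threshold. A secondary subtlety is choosing $T_0$ without knowing the root weight a priori; this can be handled either by a data-independent upper bound that enters $L$ only logarithmically (hence absorbable into the $\log(d/\delta)$ factor) or by a preliminary noisy root-weight estimate used to cap the ladder.
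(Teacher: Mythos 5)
Your high-level structure matches the paper's: reduce via \Cref{lem:to_mrmse} to the threshold-accurate formulation of \Cref{prob:estimation}, build a geometric ladder of thresholds, and at each rung use \Sparse{} to locate transition nodes inside the still-unclassified subtrees, allocating privacy geometrically across rungs. However, the core claim that ``a constant \Sparse{} cutoff suffices'' is wrong, and the error breaks the bound you need. The cutoff $c$ in \Cref{lem:svt} governs the number of $\top$ \emph{answers} the SVT can certify, not the number of transition nodes. You correctly observe that each unclassified subtree has $O(T_{\ell-1}/T_\ell)=O(1/\alpha)$ transition nodes, but in your top-down adaptive traversal (query $f_u = w_u - T_\ell$, descend only on $\top$) \emph{every} ancestor of a transition node inside the subtree also answers $\top$: the $\top$-set is the union of root-to-transition-node paths. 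A single transition node at depth $d$ already forces a chain of $\Theta(d)$ consecutive $\top$'s, so in the worst case you see $\Theta(d/\alpha)$ $\top$'s. With $c=O(1)$, the guarantee of \Cref{lem:svt} covers only the first $O(1)$ of them, and the rest of your classification has no accuracy guarantee at all. Setting $c=\Theta(d/\alpha)$ to fix this inflates $\Delta = 8c/\eps\cdot\ln(\cdot)$ by a factor of $d$ and yields $\alpha$-$\mRMSE$ that is $\Omega(d)$ rather than $O(\log d)$.

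The paper's classification (\Cref{alg:classification}) sidesteps this by scanning depths \emph{bottom-up}: one SVT query per depth on $f_i = \max_{u\in\Scal_i} w_u$, and upon a $\top$ it applies \emph{truncated Laplace} to all nodes at that depth (disjoint $\Rightarrow$ parallel composition) to identify the heavy ones, marking them $\top$ and pruning every ancestor from the $\Scal_j$'s. Each SVT $\top$ then certifies a distinct heavy node, these nodes form an antichain because ancestors are pruned, and since an antichain's weights sum to at most $M$, the number of $\top$'s is bounded by $M/\tau = O(1/\alpha)$ with no extra factor of $d$. Note this also means the classification is genuinely $(\eps,\delta)$-DP (the truncated Laplace of \Cref{lem:truncated_laplace_mechanism} requires $\delta>0$), not pure $(\eps/2,0)$-DP as you assert; the truncated Laplace is needed exactly so that the per-node error is \emph{always} bounded, avoiding a union bound over the (possibly exponentially many) nodes at a depth when deciding which are heavy.
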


By \Cref{lem:to_mrmse}, it suffices to design an efficient algorithm for \Cref{prob:estimation} with small thresholds for every node in the tree.
This will be the focus of the section.
To this end, we will reduce the estimation problem to a classification problem, and design algorithms for the classification task.
We first present the classification algorithm in \cref{sec:a_classification_problem}, then describe the reduction in \cref{sec:a_reduction}, and finally put them together in \cref{sec:putting_everything_together}.

\subsection{A Classification Problem}\label{sec:a_classification_problem}

For later reduction, the classification task here needs to have a stronger notion of success: the classification for nodes should be correct \emph{simultaneously} with high probability, 
whereas \Cref{prob:estimation} only requires the estimation of any fixed node to be correct with high probability.

\begin{problem}\label{prob:classification}
Let $\Tcal$ be a tree of depth $d$ and arity $k$. Let $\tau\ge0$ be the (common) threshold for all nodes. Let $M>0,\eta\in[0,1/2),\alpha\in[0,1)$ be parameters.

The input to the problem is non-negative integer values $x_v$ for each $v\in\leaves(\Tcal)$. For each node $u$, its weight $w_u$ is $w_u=\sum_{v\text{\rm{ is a leaf under }}u}x_v$.

The desired output is a vector $\bw'\in\binBT^{\nodes(\Tcal)}$ that with probability at least $1-\eta$ satisfies the following: 
when the weight of the root of $\Tcal$ is at most $M$, for each $u\in\nodes(\Tcal)$,
\begin{itemize}
\item if $w_u\ge(1+\alpha)\cdot\tau$, then $w'_u=\top$;
\item if $w_u<(1-\alpha)\cdot\tau$, then $w'_u=\bot$;
\item otherwise (i.e., $(1-\alpha)\cdot\tau\le w_u<(1+\alpha)\cdot\tau$), $w'_u$ can be arbitrary.
\end{itemize}
\end{problem}

We now present our algorithm for this classification problem and its guarantees.

\begin{lemma}\label{lem:classification}
There is an $(\eps,\delta)$-DP algorithm \Classification{$\Tcal;M,\eta,\alpha,\tau,\eps,\delta$} such that it solves \Cref{prob:classification} assuming
$
\tau~\ge~\sqrt{\frac{2M}{\alpha\eps}} \cdot \max\cbra{
	\sqrt{48 \ln\pbra{\frac{2d}\eta}},\ 
	\sqrt{6 \ln\pbra{1+\frac{e^{\eps/2}-1}\delta}}
}.
$
\end{lemma}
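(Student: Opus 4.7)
The plan is to reduce classification to \emph{finding transition nodes}---nodes $u$ with $w_u \ge \tau$ whose children all satisfy $w_v < \tau$---and then determine each node's label deterministically: $u$ is $\top$ iff $u$ is itself a transition or has a transition as a descendant. The combinatorial observation that makes the $\sqrt{M/(\alpha\eps)}$ scaling possible is that transition subtrees are pairwise disjoint (no transition is a proper ancestor of another), each of weight $\ge \tau$, so there are at most $M/\tau$ of them---which is exactly the value I will use as the SVT cutoff.

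The algorithm will split the $(\eps, \delta)$ budget equally across two stages. The first stage runs an $(\eps/2)$-DP sparse vector instance (\Cref{lem:svt}) with cutoff $c = O(M/\tau)$ by traversing $\Tcal$ in DFS order and feeding it the sensitivity-$1$ threshold queries ``$w_u \ge \tau$''; a $\bot$ response prunes the subtree rooted at $u$ and labels it below, while a $\top$ response triggers descent into $u$'s children. Upon completion, the candidate transitions are the $\top$ nodes whose children all responded $\bot$. The second stage uses the truncated Laplace mechanism (\Cref{lem:truncated_laplace_mechanism}) with the remaining $(\eps/2, \delta)$ budget, distributed via basic composition across the $\le M/\tau$ candidates so that each individual verification has privacy $\approx (\tau\eps/M, \tau\delta/M)$; each candidate's weight is re-noised and compared to $\tau$ to lock in the label, with deterministic bounded error $R = O\!\pbra{M\ln(M/(\tau\delta))/(\tau\eps)}$ because the truncated Laplace has finite support. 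The final output $\bw' \in \binBT^{\nodes(\Tcal)}$ is obtained by post-processing the confirmed transition set, labeling each node as $\top$ iff it is a transition or has a transition as a descendant.

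Privacy then follows by combining the two stages via basic composition (\Cref{fct:basic_composition}) and post-processing (\Cref{fct:post-processing}) into an overall $(\eps, \delta)$-DP mechanism. For accuracy, \Cref{lem:svt} guarantees every SVT decision is correct within margin $\Delta_{\mathrm{SVT}} = 16 c \ln(2q/\eta)/\eps$ with probability $\ge 1 - \eta/2$; plugging $c = O(M/\tau)$ and $q = \poly(d, M/\tau)$, the first branch of the lemma's hypothesis $\tau \ge \sqrt{2M/(\alpha\eps)} \cdot \sqrt{48\ln(2d/\eta)}$ is exactly what forces $\Delta_{\mathrm{SVT}} \le \alpha\tau$. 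Symmetrically, $R \le \alpha\tau$ under the second branch $\tau \ge \sqrt{2M/(\alpha\eps)} \cdot \sqrt{6\ln(1 + (e^{\eps/2}-1)/\delta)}$. Together these imply every node with $w_u \ge (1+\alpha)\tau$ receives label $\top$ and every node with $w_u < (1-\alpha)\tau$ receives label $\bot$---with buffer-zone nodes free to take either label, exactly as \Cref{prob:classification} allows---with total failure probability at most $\eta$ by a union bound over the two stages.

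The main obstacle will be arranging the DFS traversal so that the SVT cutoff can truly be taken as $O(M/\tau)$ rather than $O(dM/\tau)$: a naive top-down DFS that treats every above-threshold ancestor as a $\top$ response would inflate the cutoff by a factor of $d$ (the path length from root to each transition) and incur a corresponding factor of $d$ outside the logarithm in the bound on $\tau$. Overcoming this requires a careful interleaving of top-down pruning (to avoid querying below-threshold subtrees at all) with a charging argument that maps each $\top$ response back to a unique transition via the disjoint-subtree structure, echoing techniques developed in \cite{bolot2013private,DBLP:conf/esa/FichtenbergerHO21,epasto2023differentially} for continual-release settings.
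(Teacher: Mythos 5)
There is a genuine gap, and you have actually put your finger on it in your own final paragraph: a top-down DFS that issues the query ``$w_u \ge \tau$'' at each visited node cannot have SVT cutoff $c = O(M/\tau)$. The reason is structural and cannot be fixed by a charging argument: the SVT cutoff $c$ is not an amortized accounting device but literally bounds the number of $\top$ responses the mechanism can give before its accuracy guarantee (and its privacy analysis) expires. In a top-down DFS, \emph{every} ancestor of a transition node satisfies $w_u \ge \tau$, so it produces a genuine $\top$ response that must be counted; there are up to $d$ such ancestors per transition and up to $M/\tau$ transitions, forcing $c = \Omega(dM/\tau)$. Since $\Delta_{\mathrm{SVT}} = \Theta(c \ln(q/\eta)/\eps)$ and you need $\Delta_{\mathrm{SVT}} \le \alpha\tau$, this yields $\tau = \Omega(\sqrt{dM\log(\cdot)/(\alpha\eps)})$ — an extra $\sqrt{d}$ that collapses the whole improvement over the Gaussian baseline. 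Re-mapping each $\top$ to a unique transition ``via the disjoint-subtree structure'' does not change how many $\top$'s SVT emits; the only way to reduce the cutoff is to not pose those queries.

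The paper avoids this by changing what is queried, not by charging differently. It processes the tree \emph{bottom-up by depth}, asking a single sensitivity-$1$ SVT query per level, $f_i = \max_{u \in \Scal_i} w_u$ over the still-undetermined nodes at depth $i$. Whenever the truncated-Laplace confirmation step certifies a node as $\top$, all of its ancestors are deleted from the future sets $\Scal_j$, $j < i$, so they are never queried again. Consequently the nodes that ever drive a $\top$ response across all levels form an \emph{antichain}; since each has true weight $\gtrsim \tau$, the antichain (and hence the number of $\top$'s) has size at most $\sim M/\tau$, which is exactly the cutoff the bound needs. This is a materially different decomposition (per-depth max-queries, bottom-up, with ancestor removal) rather than a refinement of per-node DFS; without it the target $c = O(M/\tau)$ is unattainable. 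Your second stage (truncated Laplace verification with budget split across $\le M/\tau$ candidates, parallel composition within a level) is the right idea and matches the paper, but it cannot rescue the first stage as proposed.
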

\begin{proof}
Without loss of generality we assume $\alpha\le1/2$.
Note that if $M<\tau$, then we can simply set $w_u'\gets\bot$ for all $u\in\nodes(\Tcal)$.
Therefore we assume without loss of generality $M\ge\tau$ from now on.
For any node $u$, let $\mathsf{depth}(u)$ denote the number of nodes on the path from the root to $u$.
\Cref{alg:classification} contains the formal
description. 

\begin{algorithm}[ht]
\caption{\textsf{Classification} \label{alg:classification}}
\DontPrintSemicolon
\LinesNumbered
\KwIn{$\Tcal$ and parameters $M,\eta,\alpha,\tau,\eps,\delta$ described in \Cref{prob:classification}}
\KwOut{$w_u'\in\binBT$ for all $u\in\nodes(\Tcal)$}\vspace{3mm}
\lIf{$M<\tau$}{set $w_u' \gets \bot$ for all $u\in\nodes(\Tcal)$ and \Return $\bw'$}
Set $c\gets\frac M{(1-\alpha)\tau}$ and
\vspace{-10pt}
$$
\Delta\gets\tfrac{16c}\eps\ln\pbra{\tfrac{2d}\eta},
\quad
R\gets\tfrac{2c}\eps\ln\pbra{1+\tfrac{c\cdot(e^{\eps/(2c)}-1)}\delta}
$$
\vspace{-20pt}\;
Define $\Scal_i\gets\cbra{u\in\nodes(\Tcal) \,\mid\, \mathsf{depth}(u) = i}$ for each $i\in[d]$\;
\ForEach{$i=d$ \KwTo $1$}{
\lIf{$\Scal_i = \emptyset$}{\textbf{continue}}
Define query $f_i\gets\max_{u\in\Scal_i}w_u$\;
Get $a_i\gets\Sparse(\bx,f_i;\eta,c,\tau,\eps/2)$
\tcc*{$\bx$ is the values of $\leaves(\Tcal)$}
\eIf{$a_i=\bot$}{
Set $w'_u\gets\bot$ for all $u\in\Scal_i$ and update $\Scal_i\gets\emptyset$
}(\tcc*[f]{$a_i = \top$}){
\ForEach{$u\in\Scal_i$}{
Compute $\tilde w_u\gets w_u+\TruncLap(2c/\eps,R)$\;
\eIf{$\tilde w_u\ge\tau-\Delta-R$}{
Set $w'_v\gets\top$ and remove $v$ from $\Scal_{i_v}$ for each $u$'s ancestor $v$ (including $u$ itself) where $i_v = \mathsf{depth}(v)$.
}(\tcc*[f]{$\tilde w_u<\tau-\Delta-R$}){
Set $w'_u\gets\bot$ and remove $u$ from $\Scal_i$
}
}
}
}
\Return{$\bw'$}
\end{algorithm}

We first prove the privacy bound.
By \Cref{lem:svt}, \textsf{Line 7} is $\eps/2$-DP.
On the other hand, by \Cref{lem:truncated_laplace_mechanism} and \Cref{fct:parallel_composition}, \textsf{Lines 11--18} are $(\eps/(2c),\delta/c)$-DP; and since they are executed at most $c$ times, they are $(\eps/2,\delta)$-DP in total.
Therefore by \Cref{fct:basic_composition}, \Cref{alg:classification} is $(\eps,\delta)$-DP.

Now we turn to the correctness of \Cref{alg:classification}.
Define $i^*$ to be the index of the $c$th $\top$ in $a_d,a_{d-1},\ldots$.
If there are less than $c$ $\top$'s, let $i^*\ge1$ be the index of the last query.
Define $\Ecal$ to be the event that the following holds for any $i\ge i^*$:\footnote{Note that $i$ goes from $d$ down to $1$ in our algorithm.}
If $a_i=\top$, then $f_i\ge\tau-\Delta$; and if $a_i=\bot$, then $f_i<\tau+\Delta$.
By \Cref{lem:svt}, $\Pr\sbra{\Ecal}\ge1-\eta$.

We first show, conditioned on $\Ecal$, there are always less than $c$ $\top$'s.
Assume towards contradiction that there are $c$ $\top$'s.
Then, conditioned on $\Ecal$, for any $a_i=\top$, there exists some $u\in\Scal_i$ such that $w_u=f_i\ge\tau-\Delta$. 
Therefore $\tilde w_u\ge w_u-R\ge\tau-\Delta-R$, which implies it will be assigned to $\top$ on \textsf{Line 14}.
By design, all these $u$'s satisfying \textsf{Line 13} form a subset of $\Tcal$ where none of them is an ancestor of another. 
Therefore the weight of the root is lower bounded by the total weights of these nodes, which is at least $c\cdot(\tau-\Delta)$ but at most $M$. On the other hand, by the assumption on $M$ and assuming $\Delta<\alpha\cdot\tau$, we also have $c>\frac M{\tau-\Delta}$, which gives a contradiction.

Then for the correctness part, it suffices to show for any fixed node $v\in\nodes(\Tcal)$, we have $w'_v=\top$ if $w_v\ge(1+\alpha)\cdot\tau$, and $w'_v=\bot$ if $w_v<(1-\alpha)\cdot\tau$:
\begin{itemize}
\item \textsc{Case $w_v\ge(1+\alpha)\cdot\tau$.}
Assume towards contradiction that $w'_v=\bot$. 
Let $i_v = \mathsf{depth}(v)$. Then it means when $i=i_v$ on \textsf{Line 4}, $v\in\Scal_i$. Thus $f_i\ge w_v\ge(1+\alpha)\cdot\tau$.
On the other hand since $w'_v=\bot$, we must proceed to \textsf{Line 9}.
Conditioned on $\Ecal$, this implies $f_i<\tau+\Delta$,
which is a contradiction assuming $\Delta\le\alpha\cdot\tau$.
\item \textsc{Case $w_v<(1-\alpha)\cdot\tau$.}
Assume towards contradiction that $w'_v=\top$. Let $r\in\nodes(\Tcal)$ be the deepest node in the subtree below $v$ that is assigned $\top$. Let $i_r = \mathsf{depth}(r)$. Then it means when $i=i_r$ we execute \textsf{Line 14} for $r$.
Thus $w_r+R\ge\tilde w_r\ge\tau-\Delta-R$.
Meanwhile, we also have $w_r\le w_v<(1-\alpha)\cdot\tau$, which is a contradiction assuming $\Delta+2R\le\alpha\cdot\tau$.
\end{itemize}

Thus it suffices to make sure $\Delta,R\le\alpha\cdot\tau/3$.
Since $M\ge\tau$, we have $c\ge1$ and $c\cdot\pbra{e^{\eps/(2c)}-1}\le e^{\eps/2}-1$,
which gives the assumption in the statement by rearranging terms and noticing $1-\alpha\ge1/2$.
\end{proof}

Eventually, we will use \Classification{$\Fcal;M,\eta,\alpha,\tau,\eps,\delta$} algorithm on a forest $\Fcal$ of disjoint trees with the same set of parameters.
There we do not need all nodes in $\Fcal$ to be classified correctly. 
Instead, it suffices to have all nodes in any $\Tcal\in\Fcal$ classified correctly.

\begin{problem}\label{prob:classification_forest}
Let $\Fcal=\cbra{\Tcal_1,\Tcal_2,\ldots}$ be a forest of disjoint trees of depth $d$ and arity $k$. Let $\tau\ge0$ be the threshold for every node. Let $M>0,\eta\in[0,1/2),\alpha\in[0,1)$ be parameters.

The input to the problem is non-negative integer values $x_v$ for each leaf $v$ in $\Fcal$. For each node $u$, its weight $w_u$ is $w_u=\sum_{v\text{\rm{ is a leaf under }}u}x_v$.

The desired output is a vector $\bw'\in\binBT^{\nodes(\Fcal)}$ that, for any $\Tcal\in\Fcal$ with probability at least $1-\eta$, satisfies the following: 
when the root of $\Tcal$ has weight at most $M$, for each $u\in\nodes(\Tcal)$,
\begin{itemize}
\item if $w_u\ge(1+\alpha)\cdot\tau$, then $w'_u=\top$;
\item if $w_u<(1-\alpha)\cdot\tau$, then $w'_u=\bot$;
\item otherwise (i.e., $(1-\alpha)\cdot\tau\le w_u<(1+\alpha)\cdot\tau$), $w'_u$ can be arbitrary.
\end{itemize}
\end{problem}

The algorithm for \Cref{prob:classification_forest} is simply running \Classification{$\Tcal;M,\eta,\alpha,\tau,\eps,\delta$} for each $\Tcal\in\Fcal$.
Since the trees are disjoint, the privacy bound follows from \Cref{fct:parallel_composition}.
Therefore we omit the proof and summarize the following.

\begin{corollary}\label{cor:classification}
There is an $(\eps,\delta)$-DP algorithm \Classification{$\Fcal;M,\eta,\alpha,\tau,\eps,\delta$} such that it solves \Cref{prob:classification_forest} assuming 
$\tau\ge\sqrt{\frac{2M}{\alpha\eps}} \cdot \max\cbra{
	\sqrt{48\ln\pbra{\frac{2d}\eta}},
	\sqrt{6\ln\pbra{1+\frac{e^{\eps/2}-1}\delta}}
}$.
\end{corollary}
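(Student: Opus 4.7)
The plan is to apply \Classification{$\Tcal; M, \eta, \alpha, \tau, \eps, \delta$} from \Cref{lem:classification} independently to each tree $\Tcal \in \Fcal$ with the same parameters, and then concatenate the per-tree outputs into a single vector $\bw' \in \binBT^{\nodes(\Fcal)}$. Since the hypothesis on $\tau$ in \Cref{cor:classification} is identical to the hypothesis on $\tau$ in \Cref{lem:classification}, each invocation is valid and its guarantees can be used out of the box.

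The first thing I would verify is privacy. Because the trees in $\Fcal$ are disjoint, their leaf sets are disjoint subsets of the overall input coordinates, and each per-tree invocation depends only on the values of leaves within its own tree. Hence \Cref{fct:parallel_composition} applies to the collection of $(\eps,\delta)$-DP mechanisms across trees, and the overall algorithm remains $(\eps,\delta)$-DP. In particular, we do not invoke basic composition, so there is no degradation in the privacy parameters as a function of $|\Fcal|$.

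Next I would verify correctness. The key observation is that \Cref{prob:classification_forest} only asks, for \emph{each} $\Tcal \in \Fcal$ separately, that with probability at least $1-\eta$ the classification is correct on every $u \in \nodes(\Tcal)$ (provided the root weight of $\Tcal$ is at most $M$). This is exactly what \Cref{lem:classification} provides for that tree. Thus the per-tree accuracy guarantee transfers directly to the forest setting --- no union bound over $\Tcal \in \Fcal$ is needed.

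The main ``obstacle'' is really just this bookkeeping point: had \Cref{prob:classification_forest} demanded simultaneous success across all trees, we would need to shrink $\eta$ by a factor of $|\Fcal|$ (which can be as large as $k^d$), spoiling the bound on $\tau$. Because the forest-level guarantee is stated in the per-tree sense, the corollary is immediate from \Cref{lem:classification} together with \Cref{fct:parallel_composition}, with no changes to the threshold condition.
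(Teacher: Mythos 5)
Your proposal is correct and matches the paper's approach exactly: run \Classification{} independently on each tree of the forest, use \Cref{fct:parallel_composition} for privacy (since the trees are disjoint), and note that the per-tree accuracy guarantee in \Cref{prob:classification_forest} means the guarantee of \Cref{lem:classification} transfers directly without any union bound. The paper gives the same one-line argument and explicitly omits a detailed proof for this reason.
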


\subsection{A Reduction from Estimation to Classification}\label{sec:a_reduction}

Now we present the reduction algorithm from the estimation problem (i.e., \Cref{prob:estimation}) to the classification problem (i.e., \Cref{prob:classification_forest}).
The reduction here is given with large flexibility for choosing parameters.
Later we will design geometric convergent sequences for simplicity of calculation and derive the final bounds.

\begin{lemma}\label{lem:reduction}
Let $\ell\ge1$ be an integer.
Let $M,M_0$, and $(M_i,\eta_i,\alpha_i,\tau_i,\eps_i,\delta_i)_{i\in[\ell]}$ be a sequence of parameters.
There is an $(\eps, \delta)$-DP algorithm \Reduction{$\Tcal;\ell,\pbra{M_i,\eta_i,\alpha_i,\tau_i,\eps_i,\delta_i}_{i\in[\ell]}$}, where $(\eps,\delta)=\pbra{\sum_{i=1}^\ell\eps_i,\sum_{i=1}^\ell\delta_i}$ such that it solves \Cref{prob:estimation} by carefully combining results from \Classification{$\cdot;M_i,\eta_i,\alpha_i,\tau_i,\eps_i,\delta_i$}'s and assuming the weight of the root of $\Tcal$ is at most $M$ and
\begin{align}
&\tau_i~\ge~ \sqrt{\tfrac{2M_i}{\alpha_i\eps_i}} \cdot \max\cbra{\sqrt{48\ln\pbra{\tfrac{2d}{\eta_i}}},\sqrt{6\ln\pbra{1+\tfrac{e^{\eps_i/2}-1}{\delta_i}}}}
\quad\forall i\in[\ell],
\label{eq:reduction_1}\\
&\eta~\ge~\sum_{i=1}^\ell\eta_i,
\label{eq:reduction_2}\\
&M_i~\ge~(1+\alpha_{i+1})\cdot\tau_{i+1}
\quad\forall i=0,1,\ldots,\ell-1
\quad\text{and}\quad
M_\ell~\ge~ M,
\label{eq:reduction_4}\\
&(1-\alpha_i)\cdot\tau_i~\le~ M_i~\le~(1+\alpha)(1-\alpha_i)\cdot\tau_i
\quad\forall i\in[\ell],
\label{eq:reduction_5}\\
&0~\le~ M_0~\le~\alpha\cdot\taumin.
\label{eq:reduction_6}
\end{align}
\end{lemma}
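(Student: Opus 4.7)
My plan is to construct \Reduction{} as an iterative algorithm processing $i=\ell,\ell-1,\ldots,1$, where at each step $i$ the algorithm invokes \Classification{} on a forest $\Fcal_i$ of ``still-active'' subtrees whose roots are known to have weight at most $M_i$. I initialize $\Fcal_\ell := \cbra{\Tcal}$, whose unique root has weight $\le M \le M_\ell$ by \Cref{eq:reduction_4}. At step $i$, I call \Classification{$\Fcal_i;M_i,\eta_i,\alpha_i,\tau_i,\eps_i,\delta_i$} to obtain labels $w'_u \in \binBT$ for every $u \in \nodes(\Fcal_i)$; I set $\tilde w_u \gets (1-\alpha_i)\tau_i$ whenever $w'_u = \top$, and I form $\Fcal_{i-1}$ as the forest of subtrees of $\Fcal_i$ rooted at the \emph{maximal $\bot$-nodes} (the $\bot$-labeled nodes whose parent in $\Fcal_i$ is $\top$, or which are themselves roots of $\Fcal_i$). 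Once $i=1$ finishes, I set $\tilde w_u \gets 0$ for every $u \in \nodes(\Fcal_0)$. Privacy is immediate: each \Classification{} call is $(\eps_i,\delta_i)$-DP by \Cref{cor:classification}, so \Cref{fct:basic_composition} together with \Cref{fct:post-processing} give overall $(\sum_i\eps_i,\sum_i\delta_i)$-DP.

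For the utility analysis, I fix an arbitrary node $u \in \nodes(\Tcal)$ and let $\Ecal_{\ge i}$ be the event that, for every $j \ge i$ with $u \in \nodes(\Fcal_j)$, the call to \Classification{} at step $j$ succeeded on the specific tree in $\Fcal_j$ containing $u$. I will show $\Pr[\Ecal_{\ge i}] \ge 1 - \sum_{j \ge i}\eta_j$ by downward induction on $i$. The inductive step requires checking the preconditions of \Cref{cor:classification} at step $i$ \emph{conditioned on $\Ecal_{\ge i+1}$}: the $\tau_i$-bound is exactly \Cref{eq:reduction_1}, while the root-weight bound holds because, under $\Ecal_{\ge i+1}$, each root of a tree in $\Fcal_i$ is either $\Tcal$'s root (so weight $\le M \le M_\ell$) or was correctly classified $\bot$ at step $i+1$ (so weight $<(1+\alpha_{i+1})\tau_{i+1} \le M_i$ by \Cref{eq:reduction_4}). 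Because the accuracy guarantee of \Cref{cor:classification} is \emph{per-tree}, only a single $\eta_i$-term accumulates at each step, yielding $\Pr[\Ecal_{\ge 1}] \ge 1 - \sum_i\eta_i \ge 1 - \eta$ by \Cref{eq:reduction_2}.

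Conditioning on $\Ecal_{\ge 1}$, I bound $|\tilde w_u - w_u|$ by case analysis on whether $u$ is ever labeled $\top$. If $u$ is labeled $\top$ at some step $i^*$, the $\top$-label forces $w_u \ge (1-\alpha_{i^*})\tau_{i^*}$, while $u \in \nodes(\Fcal_{i^*})$ combined with weight monotonicity along root-to-leaf paths gives $w_u \le M_{i^*} \le (1+\alpha)(1-\alpha_{i^*})\tau_{i^*}$ by \Cref{eq:reduction_5}; thus $|\tilde w_u - w_u| \le M_{i^*} - (1-\alpha_{i^*})\tau_{i^*} \le \alpha(1-\alpha_{i^*})\tau_{i^*} \le \alpha w_u$. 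Otherwise $u$ is labeled $\bot$ at every step it appears in, so $u \in \nodes(\Fcal_0)$ and $\tilde w_u = 0$; in particular the $\bot$-label at step $1$ gives $w_u < (1+\alpha_1)\tau_1 \le M_0 \le \alpha\cdot\taumin \le \alpha\cdot\tau_u$ by \Cref{eq:reduction_4} and \Cref{eq:reduction_6}, so $|\tilde w_u| < \alpha\tau_u$. In either case $|\tilde w_u - w_u| \le \alpha\cdot\max\cbra{w_u,\tau_u}$, as \Cref{prob:estimation} demands.

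The main technical hurdle will be cleanly executing the joint induction that couples the root-weight bound on $\Fcal_i$ (needed as a precondition for \Cref{cor:classification} at step $i$) with the correctness of the earlier classifications (which is what establishes that bound). Critically, the ``for any $\Tcal \in \Fcal$'' per-tree phrasing in \Cref{prob:classification_forest} is what keeps the cumulative failure probability at $\sum_i\eta_i$ rather than $\sum_i |\Fcal_i|\cdot\eta_i$ as a naive union bound over all trees in every $\Fcal_i$ would give. A secondary subtlety I will need to address is that a $\top$-labeled node $u$ can be reclassified in a later iteration when some ancestor of $u$ in $\Fcal_i$ happens to be $\bot$; but because every $u \in \nodes(\Fcal_i)$ automatically satisfies $w_u \le M_i$ by weight monotonicity, any $\top$-label at any step $i$ places $w_u \in [(1-\alpha_i)\tau_i, M_i]$, and the multiplicative-$\alpha$ bound is preserved regardless of which $\top$-labeling is ultimately recorded.
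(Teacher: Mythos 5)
Your proposal follows the same approach as the paper: iterate from coarse to fine thresholds, call \Classification{} on a forest of still-active subtrees, and propagate the $\bot$-labeled part of each tree to the next iteration, with the per-tree guarantee of \Cref{prob:classification_forest} keeping the failure probability at $\sum_i \eta_i$ rather than $\sum_i |\Fcal_i|\cdot\eta_i$. The cosmetic differences (setting $\tilde w_u \gets (1-\alpha_{i})\tau_{i}$ instead of $M_{i}$, and $0$ instead of $M_0$) are fine and provable by the same chain of inequalities. Your formulation of $\Fcal_{i-1}$ via ``maximal $\bot$-nodes'' is in fact a bit cleaner than the paper's ``transition node'' formulation, since it gracefully handles the edge case where the root of a tree in $\Fcal_i$ is itself labeled $\bot$ (the paper's pseudocode, read literally, lets such a tree fall out of the recursion entirely).

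There is, however, a genuine gap in the last paragraph. You correctly observe that without further assumptions a node could be labeled $\top$ at step $i$ but nonetheless sit below a $\bot$-labeled ancestor in $\Fcal_i$, hence be propagated into $\Fcal_{i-1}$. You argue that any subsequent re-label $\top$ at a later step also yields a valid $\tilde w_u$; but you do not address the remaining possibility that a once-$\top$ node is labeled $\bot$ at \emph{all} subsequent steps and lands in $\Fcal_0$, in which case you assign $\tilde w_u \gets 0$ even though $w_u \geq (1-\alpha_i)\tau_i$, which can be much larger than $\alpha\cdot\max\{w_u,\tau_u\}$. Your case analysis in the utility argument silently assumes that ``$u$ is labeled $\top$ at some step'' implies ``$u \notin \nodes(\Fcal_0)$,'' and that implication needs a reason. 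The reason exists, but it must be invoked: the paper's \Classification{} (\Cref{alg:classification}, Line 14) always produces a $\top$-label set that is \emph{upward-closed} in the processed tree (every ancestor of a $\top$-node is also $\top$). Under this property, a node labeled $\top$ at step $i$ cannot have a $\bot$-ancestor in $\Fcal_i$, so it cannot lie below any maximal $\bot$-node and therefore never enters $\Fcal_{i-1}$; the ``secondary subtlety'' then disappears. You should replace your attempted robustness argument with an explicit appeal to this structural property of \Cref{alg:classification}, since the reduction as you have written it does not work for an arbitrary classifier satisfying only the interface of \Cref{prob:classification_forest}.
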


The reduction algorithm is formalized in \Cref{alg:reduction} and analyzed in \Cref{app:proof_of_lem:reduction}.

\begin{algorithm}[ht]
\caption{ \textsf{Reduction}}\label{alg:reduction}
\DontPrintSemicolon
\LinesNumbered
\KwIn{$\Tcal$ and parameters $\ell,M_0,\pbra{M_i,\eta_i,\alpha_i,\tau_i,\eps_i,\delta_i}_{i\in[\ell]}$ described above}
\KwOut{$\tilde w_u\in\Rbb$ for all node $u\in\nodes(\Tcal)$}
Initialize $\Fcal_\ell\gets\cbra{\Tcal}$\;
\ForEach{$i=\ell$ \KwTo $1$}{
Initialize $\Fcal_{i-1}\gets\emptyset$\;
Compute $\bw'\gets\Classification{$\Fcal_i;M_i,\eta_i,\alpha_i,\tau_i,\eps_i,\delta_i$}$\;
For each node $u$ in $\Fcal_i$, let $\Tcal_u$ be the subtree of $u$ in $\Fcal_i$\;
\ForEach{node $u$ satisfying $w'_u=\top$ and $w'_v=\bot$ for all $v\in\nodes(\Tcal_u)\setminus\cbra{u}$}{
Set $\tilde w_v\gets M_i$ for each $u$'s ancestor $v$ (including $u$ itself) in $\Tcal$\;
Update $\Fcal_{i-1}\gets\Fcal_{i-1}\cup\cbra{\Tcal_1,\Tcal_2,\ldots}$ where $\Tcal_1,\Tcal_2,\ldots$ are the disjoint trees of $\Tcal_u\setminus\cbra{u}$
}
}
Set $\tilde w_v\gets M_0$ for each node $v$ in $\Fcal_0$
\end{algorithm}

\subsection{Putting Everything Together}\label{sec:putting_everything_together}

Now we give the algorithm for \Cref{prob:estimation}.
To this end, we carefully choose parameters and apply \Cref{lem:reduction}, where the required upper bound $M$ is privately estimated with \Cref{lem:truncated_laplace_mechanism}.

\begin{corollary}[Proof in \Cref{app:proof_of_cor:estimation}]\label{cor:estimation}
There is an $(\eps,\delta)$-DP algorithm\\
\Estimation{$\Tcal,\alpha,\eps,\delta,\eta$} such that it solves \Cref{prob:estimation} assuming
$$
\taumin\ge\frac{324\cdot(1+\alpha)^2}{\alpha^4\cdot\eps}\cdot\max\cbra{8\ln\pbra{\frac{4d}{\eta}},\ln\pbra{1+\frac{2\cdot\pbra{e^{\eps/4}-1}}{\delta}}}.
$$
\end{corollary}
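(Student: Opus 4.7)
The plan is to invoke \Cref{lem:reduction} after privately estimating an upper bound $M$ on the root's weight, then to verify all five constraints \cref{eq:reduction_1}--\cref{eq:reduction_6} under a carefully chosen geometric schedule of parameters.

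First, I would release a private upper bound $M$ on $w_{\text{root}}$ using the truncated Laplace mechanism (\Cref{lem:truncated_laplace_mechanism}) with budget $(\eps/2,\delta/2)$: output $\hat w_{\text{root}} + z + R$ where $z\sim \TruncLap(2/\eps,R)$ with $R=\tfrac{2}{\eps}\ln(1+\tfrac{e^{\eps/4}-1}{\delta})$, so that $M\ge w_{\text{root}}$ holds deterministically (truncation makes $z\ge -R$) while $M - w_{\text{root}}\le 2R = O(\tfrac1\eps \log(1/\delta))$. This uses $(\eps/2,\delta/2)$-DP and contributes a term of size $O(\tfrac1\eps\log\tfrac1\delta)$ to the lower bound we need on $\taumin$.

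Next, I would set up the geometric schedule fed into \Cref{lem:reduction} with remaining budget $(\eps/2,\delta/2)$ and confidence $\eta/2$ (the other $\eta/2$ absorbs the event that $M$-estimation fails if we ever allowed it to). Fix $\alpha_i := \alpha/6$ and a ratio $\gamma := (1+\alpha)(1-\alpha/6)/(1+\alpha/6)$; note $\gamma \ge 1+\alpha/3$ for small $\alpha$, so the thresholds $\tau_i = \tau_1\cdot \gamma^{i-1}$ grow geometrically and the constraint \cref{eq:reduction_4} $M_i = (1+\alpha)(1-\alpha_i)\tau_i \ge (1+\alpha_{i+1})\tau_{i+1}$ holds by construction. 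Choose $\tau_1$ equal to the RHS of the corollary's bound on $\taumin$, set $M_0 := \alpha\cdot \taumin$ (so \cref{eq:reduction_6} holds and \cref{eq:reduction_4} for $i=0$ holds because $M_0\ge (1+\alpha_1)\tau_1$ up to absorbing a $(1+\alpha)$ factor into $\tau_1$), and let $\ell := \lceil \log_\gamma(M/\tau_1)\rceil = O(\tfrac{1}{\alpha}\log(M/\taumin))$ so that $\tau_\ell \ge M \ge w_{\text{root}}$ and \cref{eq:reduction_4} at the top holds. Split the budgets geometrically: $\eps_i \propto \gamma^{-(i-1)}$ (similarly for $\delta_i,\eta_i$), normalized to sum to $\eps/2,\delta/2,\eta/2$, so that the product $\alpha_i\eps_i\tau_i$ is essentially constant in $i$ and controlled by $\alpha\eps\tau_1$ up to a $\Theta(\alpha)$ factor coming from the $(1-\gamma^{-1})$ normalization.

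With these choices, the only nontrivial constraint is \cref{eq:reduction_1}. Substituting $M_i \le (1+\alpha)\tau_i$ and squaring, it reduces to
\[
\tau_i \;\ge\; \frac{C(1+\alpha)}{\alpha_i\eps_i}\cdot\max\!\left\{\ln\!\pbra{\tfrac{2d}{\eta_i}},\;\ln\!\pbra{1+\tfrac{e^{\eps_i/2}-1}{\delta_i}}\right\}
\]
for a universal constant $C$. The geometric normalization bounds $1/(\alpha_i\eps_i) \le O(1/(\alpha\cdot \alpha\eps))$ uniformly in $i$ (one factor of $\alpha$ from $\alpha_i=\alpha/6$; one from the normalization $(1-\gamma^{-1})=\Theta(\alpha)$), while $\eta_i\ge \eta\cdot\gamma^{-(\ell-1)}\cdot\Theta(\alpha)$ and similarly for $\delta_i$, which contributes at most an extra $\log(1/\alpha)$ that can be absorbed. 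Dividing both sides by $\tau_i$ and using $\tau_i\ge \tau_1$ gives exactly the corollary's lower bound on $\taumin$ with the $(1+\alpha)^2/\alpha^4$ prefactor (one $1/\alpha^2$ from $1/(\alpha_i\eps_i)$ and another $1/\alpha^2$ from the squaring in \cref{eq:reduction_1} combined with the $1/\alpha$ in the ratio). Finally, privacy follows from basic composition (\Cref{fct:basic_composition}) applied to the $M$-estimation step and the reduction, and accuracy follows from a union bound over the failure of $M$-estimation and the reduction's output guarantee.

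The main obstacle is Step~2: choosing $\gamma$, the $\alpha_i$'s, and the geometric allocation of $(\eps_i,\delta_i,\eta_i)$ so that (i) the ``ratio'' constraint \cref{eq:reduction_4} is compatible with $\gamma>1$ (forcing $\alpha_i$ strictly smaller than $\alpha$), and (ii) the strictest instance of \cref{eq:reduction_1} (at $i=1$) exactly matches the target $\taumin$-bound, rather than picking up extra $\log(M/\taumin)$ factors. This is what forces the somewhat delicate geometric normalization rather than a uniform split $\eps_i=\eps/(2\ell)$.
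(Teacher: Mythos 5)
Your high-level plan matches the paper's: privately estimate an upper bound $M$ on $w_{\mathsf{root}}$ via truncated Laplace with half the budget, then feed a geometric schedule into \Cref{lem:reduction} with the rest. However, there is a genuine gap in Step~2, specifically in your choice $\eps_i,\delta_i,\eta_i\propto\gamma^{-(i-1)}$ (uniformly geometric).

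With that schedule, after substituting $\tau_i\propto\gamma^{i-1}$, $M_i\propto\tau_i$, and $\eps_i\propto\gamma^{-(i-1)}$ into \cref{eq:reduction_1} and squaring, all $\gamma$ powers cancel and the constraint at step $i$ becomes
$\tau_1\gtrsim\frac{1}{\alpha_i\,\eps_1}\max\{\ln(2d/\eta_i),\ \ln(1+(e^{\eps_i/2}-1)/\delta_i)\}$.
But with $\eta_i=\eta_1\gamma^{-(i-1)}$ you have $\ln(1/\eta_i)=\ln(1/\eta_1)+(i-1)\ln\gamma$, which \emph{grows linearly in $i$}; the constraint is therefore tightest at $i=\ell$, not at $i=1$, and $(\ell-1)\ln\gamma\approx\ln(M/\taumin)$. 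The resulting requirement $\taumin\gtrsim\frac{1}{\alpha^2\eps}\bigl(\ln(d/\eta)+\ln(M/\taumin)\bigr)$ is \emph{not} ``an extra $\log(1/\alpha)$ that can be absorbed'': it is a genuine $\log(M/\taumin)$ (i.e., a $\log \|\bx\|_1$) dependence, which contradicts the corollary's $n$-independent bound and the paper's stated goal of avoiding exactly such dependence.

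The paper avoids this with a subtler schedule that you did not reproduce: it sets $\eta_i=\eta/2^i$ and $\eps_i,\delta_i\propto i\cdot\gamma^{-(i-1)}$ (note the extra factor of $i$), normalized by $C=\sum_{j\ge 1} j\,\gamma^{-(j-1)}$. The point of the extra $i$ is that it turns the $i$-dependent piece of the constraint into $\tfrac{1}{i}\bigl(\ln(2d/\eta)+i\ln 2\bigr)=\tfrac{\ln(2d/\eta)}{i}+\ln 2$, which is \emph{maximized at $i=1$} and bounded by $\ln(4d/\eta)$ uniformly in $\ell$. This is the mechanism that makes the final bound independent of $M$ (at the cost of a factor $C=\Theta(1/\alpha^2)$ folded into the $1/\alpha^4$ prefactor, which your $1/\alpha^2$ accounting also undercounts). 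The choice $\delta_i\propto i\gamma^{-(i-1)}$ similarly lets one bound $\frac{e^{\eps_i/2}-1}{\delta_i}\le\frac{e^{\eps/2}-1}{\delta}$ via the monotonicity $T(e^{a/T}-1)\le e^a-1$ for $T\ge 1$. Your writeup identifies that the strictest $i$ should be $i=1$ and that $\log(M/\taumin)$ factors must be dodged, but the schedule you propose does not accomplish either; fixing it requires the $i\cdot\gamma^{-(i-1)}$ weighting and the $\eta/2^i$ choice as in the paper.

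Two smaller points: your $\alpha_i=\alpha/6$ is close in spirit to the paper's $\beta=\alpha/(6+5\alpha)$ (either works as long as $\alpha_i<\alpha/(2+\alpha)$ so $\gamma>1$), and your $M$-estimation step is essentially \cref{alg:estimation}'s \textsf{Line 2}; those parts are fine.
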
 

Now we complete the proof of \Cref{thm:upper_bound} using \Cref{lem:to_mrmse} and \Cref{cor:estimation}.

\begin{proof}[Proof of \Cref{thm:upper_bound}]
We first note that if $\alpha\ge1$ then the $\alpha$-$\mRMSE$ is trivially zero by outputting the all-zeros vector. Therefore we assume without loss of generality $\alpha\in(0,1)$.

Let $C>0$ be a constant to be optimized later.
Fix $\eta=d^{-2}$ and 
$\tau=\frac C{\alpha^4}\cdot\pbra{\frac1\eps\log\pbra{\frac{2d}\delta}+1}$.
Let $\eps'=\eps/2$ and $\delta'=\delta/2$.
Since $\alpha\in(0,1)$, $d\ge1$, and $\delta\in(0,1]$, we have
\begin{align*}
\tau^*&:=\tfrac{324\cdot(1+\alpha)^2}{\alpha^4\cdot\eps'}\cdot\max\cbra{8\ln\pbra{\tfrac{4d}{\eta}},\ln\pbra{1+\tfrac{2\cdot\pbra{e^{\eps'/4}-1}}{\delta'}}}
\le O\pbra{\tfrac1{\alpha^4}\cdot\pbra{\tfrac1\eps\log\pbra{\tfrac{2d}\delta}+1}}.
\end{align*}
We set $C$ large enough such that $\tau\ge\tau^*$.
By \Cref{cor:estimation}, there is an $(\eps',\delta')=(\eps/2,\delta/2)$-DP algorithm for \Cref{prob:estimation} when $\tau_u\equiv\tau$ for all $u\in\nodes(\Tcal)$.

The desired $\alpha$-$\mRMSE$ bound now follows from \Cref{lem:to_mrmse} and the parameters above.
\end{proof}
\section{Lower Bounds}\label{sec:lower_bounds}

In this section we prove lower bounds for DP tree aggregation algorithms.
In particular, \Cref{thm:lower_bound} proves pure-DP lower bounds for all $\alpha$-$\mRMSE$ whenever $\alpha\in[0,1)$; and \Cref{thm:add-err-lb} proves approximate-DP lower bounds for additive-only error, i.e., $(\alpha=0)$-$\mRMSE$.

\begin{theorem}[Pure-DP Lower Bound]\label{thm:lower_bound}
Let $\alpha\in[0,1)$ be a parameter.
For any $\eps>0$, any $\eps$-DP algorithm for tree aggregation on the complete depth-$d$ binary tree must incur $\alpha$-$\mRMSE$ at least $\Omega\pbra{(1-\alpha)^2\cdot d/\eps}$.
\end{theorem}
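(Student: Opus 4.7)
The plan is to execute the packing argument sketched in \Cref{sec:proof_overview}. By \Cref{lem:from_mrmse}, it suffices to prove a matching lower bound on $\taumin$ under the probabilistic accuracy notion \eqref{eq:thresh}: fixing $\alpha' = (1+\alpha)/2$ and a small constant $\eta_0 \in (0, 1/4)$ (say $\eta_0 = 1/10$), \Cref{lem:from_mrmse} turns any $\eps$-DP algorithm with $\alpha$-$\mRMSE \le C$ into an $(\alpha', \eta_0)$-accurate algorithm whenever $\taumin \ge 2C/((1-\alpha)\sqrt{\eta_0})$, so it suffices to show that no $\eps$-DP algorithm can be $(\alpha', \eta_0)$-accurate on the complete depth-$d$ binary tree $\Tcal$ with uniform threshold $\tau < c\cdot(1-\alpha)d/\eps$, for a suitable constant $c > 0$. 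For the packing family, for each leaf $v$ of $\Tcal$ let $\bx^{(v)}$ have $x_v = N$ and zeros elsewhere, where $N$ is a parameter I will set just above $\alpha' \tau/(1-\alpha')$; then $w_u = N$ on the ancestors of $v$ (including $v$) and $w_u = 0$ on every other node.

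The technical heart is a randomized decoder. Given the algorithm's output $\tbw$, walk from the root; at each internal node $r$, independently flip a coin and, with probability $\beta \in (0, 1)$, descend to the child of $r$ with the larger $\tw$-value, and with probability $1 - \beta$, descend to a uniformly random child. Fix a leaf $v$, write the root-to-$v$ path as $\Acal_1, \ldots, \Acal_d$, let $\Bcal_i$ denote the sibling of $\Acal_{i+1}$, and define $Z_i = \indicator[\tw_{\Acal_{i+1}} > \tw_{\Bcal_i}]$ and $K = \sum_{i=1}^{d-1} Z_i$. Since $N > \alpha' \tau/(1-\alpha')$, applying $(\alpha', \eta_0)$-accuracy to each of the two sibling nodes at step $i$ (a union bound over just those two nodes) gives $\Pr[Z_i = 1] \ge 1 - 2\eta_0$, whence $\E[K] \ge (d-1)(1 - 2\eta_0)$. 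Conditional on $\tbw$ the coin flips are independent across levels, so the conditional probability that the walk ends at $v$ equals $2^{-(d-1)} (1+\beta)^K (1-\beta)^{d-1-K}$; taking expectations and applying Jensen's inequality to the convex map $K \mapsto ((1+\beta)/(1-\beta))^K$ yields
\[
p \;:=\; \Pr[\mathrm{decoder}(\cM(\bx^{(v)})) = v] \;\ge\; 2^{-(d-1)} \exp\bigl((d-1)\cdot g(\beta, \eta_0)\bigr),
\]
where $g(\beta, \eta_0) := (1 - 2\eta_0) \ln(1+\beta) + 2\eta_0 \ln(1-\beta)$ is a positive constant for $\beta = 1/2$ and $\eta_0 = 1/10$.

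The packing conclusion is then standard. Group privacy for pure $\eps$-DP together with $\|\bx^{(v)} - \bzero\|_1 = N$ gives $\Pr[\mathrm{decoder}(\cM(\bzero)) = v] \ge e^{-\eps N} \cdot p$ for every leaf $v$, and summing these disjoint events over all $2^d$ leaves forces $1 \ge 2^d \cdot p \cdot e^{-\eps N}$, i.e.\ $\eps N \ge (d-1)\,g(\beta, \eta_0) + \ln 2 = \Omega(d)$. Combining with $N = \Theta(\alpha' \tau/(1-\alpha'))$ and $\alpha' = (1+\alpha)/2$ gives $\tau \ge \Omega((1-\alpha)\cdot d/\eps)$, and unwinding \Cref{lem:from_mrmse} yields $C \ge \Omega((1-\alpha)^2\cdot d/\eps)$, as desired. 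The main obstacle --- and the reason a randomized decoder is needed --- is keeping $\eta_0$ a positive constant: a deterministic ``follow the larger child'' decoder would demand that \emph{all} $d$ ancestors be correctly classified simultaneously, forcing a $d$-fold union bound and hence $\eta_0 = O(1/d)$, which through the $\sqrt{\eta_0}$ factor in \Cref{lem:from_mrmse} would cost a $\sqrt{d}$ and only yield the weaker $\Omega((1-\alpha)^2\sqrt{d}/\eps)$. The biased walk instead absorbs the per-node error into the \emph{exponent} of the success probability, which the packing inequality trades against the $2^d$ leaves while still leaving $\Omega(d)$ slack.
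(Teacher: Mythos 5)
Your proof is correct and follows the same high-level strategy as the paper: reduce via \Cref{lem:from_mrmse} to ruling out $(\alpha',\eta_0)$-accuracy with small threshold, build a packing family of single-leaf datasets, decode with a biased random walk whose per-step success is controlled by per-node accuracy on just two siblings (sidestepping the union-bound over the exponentially many nodes), and close with group privacy. Your execution, however, is cleaner than the paper's in three respects, and each is a genuine simplification rather than a gap. First, your decoder compares siblings directly (``follow the larger $\tw$ with probability $\beta$'') rather than comparing each estimate against a fixed threshold $\taumax$ as in \Cref{lem:estimation_lower_bound_rwalk}; this removes a case split and makes the decoder oblivious to $\taumax$. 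Second, you lower-bound $\E\sbra{2^{-(d-1)}(1+\beta)^K(1-\beta)^{d-1-K}}$ by a single application of Jensen's inequality to $K\mapsto((1+\beta)/(1-\beta))^K$, whereas the paper conditions on a good event via Markov's inequality and a union bound, which both complicates the constants and forces the restriction $\eta\le 1/8$ in \Cref{lem:estimation_lower_bound_rwalk} --- in turn requiring the median-of-$s$-copies amplification in \Cref{cor:estimation_lower_bound_rwalk}, which you avoid entirely by choosing $\eta_0=1/10<1/8$ up front. Third, you compare each $\bx^{(v)}$ to $\bzero$ and sum the disjoint decoding events, which is a touch more direct than the paper's averaging argument to find a worst $i^*$. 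One trivial slip: the complete depth-$d$ binary tree has $2^{d-1}$ leaves, not $2^d$, but this only shifts the constant in $\eps N \ge (d-1)g(\beta,\eta_0)$ and does not affect the $\Omega\pbra{(1-\alpha)^2 d/\eps}$ conclusion.
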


\begin{theorem}[Approximate-DP Lower Bound for $\alpha = 0$]\label{thm:add-err-lb}
For any $\eps > 0$ and any $\delta > 0$ sufficiently small depending on $\eps$, there is a constant $C_{\eps, \delta} > 0$ that any $(\eps, \delta)$-DP algorithm for tree aggregation on the complete depth-$d$ binary tree must incur $\mRMSE$ at least $C_{\eps, \delta} \cdot \sqrt{d}$.
\end{theorem}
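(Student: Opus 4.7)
The plan follows the two-step strategy sketched in \Cref{sec:proof_overview}: first reduce the $\mRMSE$ lower bound to a factorization-norm lower bound on the workload $\bW^\cT$, and then establish the latter via an explicit dual witness.

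\textbf{Reduction to a factorization norm.} Because a pointwise maximum dominates a mean, $\mRMSE(\cM;\bW)\ge\errltsq(\cM;\bW)$ for every workload $\bW$. I would then invoke (with a mild modification to match the normalization of $\errltsq$ used here) the Edmonds-Nikolov-Ullman characterization of the optimal $\ell_2^2$-error for approximate-DP linear queries: for every $(\eps,\delta)$-DP mechanism $\cM$ with $\delta$ small enough in terms of $\eps$, $\errltsq(\cM;\bW)\ge C_{\eps,\delta}\cdot\gamma(\bW)$ for a suitable factorization norm $\gamma(\bW)$ (the $\gamma_2$-norm or a close variant). This reduces the theorem to proving $\gamma(\bW^\cT)=\Omega(\sqrt{d})$ for the complete depth-$d$ binary tree workload.

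\textbf{Lower bounding $\gamma(\bW^\cT)$ via SDP duality.} To lower bound $\gamma(\bW^\cT)$, I would use the Mathias-Lee-Shraibman dual characterization, which expresses $\gamma(\bW)^2$ as the maximum of $2\langle\bW,X\rangle$ over matrices $X$ for which there exist $P,Q$ with $\begin{pmatrix} P & X \\ X^\top & Q \end{pmatrix}\succeq 0$ and diagonal entries of $P,Q$ bounded by $1$. Any feasible triple $(P,Q,X)$ thereby certifies $\gamma(\bW)\ge\sqrt{2\langle\bW,X\rangle}$. My candidate witness exploits the $d$ distinct scales of the tree: weight the rows according to their depth so that the diagonal budget of $P$ is evenly distributed across the $d$ levels, and pick a balanced column-side witness for $Q$ on the leaves. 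Under such a depth-dependent scaling, $\langle\bW^\cT,X\rangle$ decomposes into $d$ contributions of roughly the same order---one per level---and aggregates to $\Omega(\sqrt{d})$.

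\textbf{Main obstacle.} The hard part will be constructing the depth-weights so that both (i) the inner product reaches $\Omega(\sqrt{d})$, and (ii) the PSD completion remains feasible with the required diagonal bounds. The latter reduces to bounding the operator norm of a rescaled version of $(\bW^{\cT})^\top\bW^\cT$, whose $(v,v')$-entry equals the depth of the least common ancestor of leaves $v,v'$; I expect to handle this spectral calculation via the Haar basis adapted to the binary tree. Simpler witnesses do not suffice: the trivial spectral bound $\gamma_2(\bW^\cT)\ge\|\bW^\cT\|_{\mathrm{op}}/\sqrt{mn}$ yields only $\Omega(1)$, and embedding an upper-triangular $d\times d$ submatrix of $\bW^\cT$ gives only $\Omega(\log d)$ via the continual-counting lower bound. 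Reaching the target $\Omega(\sqrt{d})$ therefore genuinely requires a witness that couples all $d$ scales together, following the template of prior range-query lower bounds referenced in \Cref{sec:proof_overview}.
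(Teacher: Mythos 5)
Your second step---lower bounding $\gamma_2(\bW^{\cT_\textsf{bin}})$ via the Mathias--Lee--Shraibman dual with a depth-weighted row witness and a uniform column witness, then handling the resulting Gram matrix spectrally in a Haar-type basis---is exactly the route the paper takes, and your intuition that the witness must couple all $d$ scales is on point. The issue lies entirely in your first step.

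The chain $\mRMSE(\cM;\bW)\ge\errltsq(\cM;\bW)\ge C_{\eps,\delta}\cdot\gamma(\bW)$ is too lossy to give $\Omega(\sqrt d)$, because for $\bW=\bW^{\cT_\textsf{bin}}$ the optimal $\errltsq$ is only $O_{\eps,\delta}(1)$. Concretely, take the diagonal factorization $\bR=D$, $\bA=D^{-1}\bW$ with $D_{ii}\propto 2^{-\depth(i)/4}$: then $\|D\|_F^2=\Theta(2^{d/2})$ and $\|\bA\|_{\linfltwo}^2=\Theta(2^{d/2})$, so the factorization mechanism (Gaussian noise calibrated to $\|\bA\|_{\linfltwo}$, post-multiplied by $\bR^\top$) achieves $\errltsq=\frac{\sigma}{\sqrt m}\|D\|_F\|\bA\|_{\linfltwo}=O_{\eps,\delta}(1)$. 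The mean over $m\approx 2^d$ rows averages away the large noise near the root against the exponentially many cheap leaf queries, so whatever factorization norm EdmondsNU20 certifies as an $\errltsq$ lower bound must itself be $O(1)$ here---it is \emph{not} $\gamma_2$, which equals $\Theta(\sqrt d)$. Passing through $\errltsq$ therefore only gives you $\mRMSE=\Omega(1)$. What is actually needed, and what the paper does, is to rerun the EdmondsNU20 argument with $\|\bR\|_{\linfltwo}$ in place of $\|\bR\|_F$; this is not a cosmetic normalization change but the step that trades the row-average for a row-max and thereby (i) makes the quantity being bounded exactly $\mRMSE$ and (ii) makes the resulting lower bound precisely $\gamma_2(\bW)=\min_{\bR^\top\bA=\bW}\|\bR\|_{\linfltwo}\|\bA\|_{\linfltwo}$, on which your dual-witness calculation then bites. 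With that corrected reduction in hand, your Step 2 goes through as sketched.
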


\Cref{thm:lower_bound} is proved in \cref{sec:pure_lb}.
The proof of \Cref{thm:add-err-lb} relies on results from \cite{EdmondsNU20} and the factorization norm of the binary tree matrix, which we defer to \Cref{app:apx_lb}.

\subsection{Pure-DP Lower Bound}\label{sec:pure_lb}

To prove \Cref{thm:lower_bound}, by \Cref{lem:from_mrmse} it suffices to rule out DP algorithms for \Cref{prob:estimation} with small thresholds.
Since \Cref{prob:estimation} is interesting on its own, we will present its lower bound in the approximate-DP setting for full generality.

\begin{lemma}\label{lem:estimation_lower_bound_rwalk}
Assume $\Tcal$ in \Cref{prob:estimation} is a complete binary tree of depth $d$. 
Let $D=2\cdot\ceilbra{\taumax/(1-\alpha)}$.
If $\Acal$ is an $(\eps,\delta)$-DP algorithm for \Cref{prob:estimation} and suppose $\eta\le1/8$ and
\begin{equation}\label{eq:estimation_lower_bound_rwalk_assumption}
\delta\cdot\tfrac{e^{\eps\cdot D}-1}{e^\eps-1}\le\tfrac18\cdot2^{-(d-1)\cdot\Hcal(4\eta)},
\end{equation}
then
$$
\taumax=\Omega\pbra{(1-\alpha)\cdot\pbra{d-3-(d-1)\cdot\Hcal(4\eta)}/\eps},
$$
where $\Hcal(x)=x\log\pbra{1/x}+(1-x)\log\pbra{1/(1-x)}$ is the binary entropy function.
\end{lemma}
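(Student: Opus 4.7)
The plan is a packing argument tailored to the fact that \Cref{prob:estimation} only promises per-node (not uniform) accuracy. For each leaf $v$ of the complete depth-$d$ binary tree $\Tcal$, define the extremal dataset $\bx^{(v)}$ placing value $D/2=\ceilbra{\taumax/(1-\alpha)}$ at leaf $v$ and $0$ elsewhere. These are $2^{d-1}$ datasets at pairwise $\ell_1$-distance exactly $D$; under $\bx^{(v)}$, every node $u$ on the root-to-$v$ path has weight $D/2 \geq \tau_u/(1-\alpha)$, while every off-path node has weight $0$. I decode an output $\tbw$ by running a biased random walk from the root: at each internal node with children $c,c'$, go to whichever of $\tw_c,\tw_{c'}$ is larger (breaking ties uniformly) with probability $p:=1-4\eta$, and to the other child with probability $1-p$. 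The walk outputs the leaf it reaches, so the events ``walk outputs $v$'' are disjoint across $v$.

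To lower-bound $L_v:=\Pr[\text{walk on }\Acal(\bx^{(v)})\text{ outputs }v]$, call an internal node $u$ on the root-to-$v$ path \emph{correctly oriented} if its on-path child $c_1$ satisfies $\tw_{c_1}>\tw_{c_0}$. The $(\alpha,\eta)$-accuracy at $c_1$ and at $c_0$, combined with $(1-\alpha)(D/2)\geq \tau_{c_0}>\alpha\tau_{c_0}$, shows via a union bound that $u$ is correctly oriented with probability at least $1-2\eta$. Let $Y$ be the number of incorrectly oriented path nodes, so $\E[Y]\leq 2\eta(d-1)$; Markov's inequality gives $\Pr[Y\leq 4\eta(d-1)]\geq 1/2$. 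Conditioned on this event, the walk reaches $v$ with probability at least
\[
p^{(d-1)-Y}(1-p)^Y \;\geq\; p^{d-1}\pbra{\tfrac{1-p}{p}}^{4\eta(d-1)} \;=\; 2^{-(d-1)\Hcal(4\eta)},
\]
where the equality uses $p=1-4\eta$ (valid since $\eta\leq 1/8$). Thus $L_v \geq \tfrac{1}{2}\cdot 2^{-(d-1)\Hcal(4\eta)}$.

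Summing $L_v$ over all $2^{d-1}$ leaves and applying group privacy (\Cref{fct:group_privacy}) at distance $D$ from any fixed reference $v_0$,
\[
\tfrac{1}{2}\cdot 2^{(d-1)(1-\Hcal(4\eta))}\;\leq\;\sum_v L_v\;\leq\;e^{D\eps}\sum_v\Pr[\text{walk on }\Acal(\bx^{(v_0)})\text{ outputs }v]+2^{d-1}\delta\cdot\tfrac{e^{D\eps}-1}{e^\eps-1},
\]
and the first sum on the right is $1$ by disjointness. By \Cref{eq:estimation_lower_bound_rwalk_assumption} the $\delta$-term is at most $\tfrac{1}{8}\cdot 2^{(d-1)(1-\Hcal(4\eta))}$, leaving $e^{D\eps}\geq \tfrac{3}{8}\cdot 2^{(d-1)(1-\Hcal(4\eta))}$. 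Taking logarithms yields $D\eps = \Omega(d-3-(d-1)\Hcal(4\eta))$, which together with $\taumax\geq (1-\alpha)(D-2)/2$ gives the claimed bound.

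The delicate point is that the classical $\errlinf$-style decoder (pick the leaf of largest estimated weight) is unavailable here: per-node accuracy with failure probability $\eta$ does not union-bound favorably over the $2^{d-1}$ leaves, and a single misclassified on-path node would kill any deterministic decoder. The random walk paired with the Markov step converts the per-node guarantee into an average-case statement about $Y$, at the cost of a $2^{-(d-1)\Hcal(4\eta)}$ factor in the success probability --- precisely the slack that the $\delta$-assumption \Cref{eq:estimation_lower_bound_rwalk_assumption} is designed to absorb in the packing step.
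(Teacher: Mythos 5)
Your proposal is correct and follows essentially the same packing-plus-random-walk strategy as the paper: the same extremal datasets, a biased random walk decoder favoring the "correct" direction with probability $1-4\eta$, an expectation/Markov step to control the number of badly-oriented path nodes, and group privacy to close the packing. The only (welcome) cosmetic differences are that your decoder compares siblings' estimates directly rather than thresholding each against $\taumax$ (the paper's version has a three-way case split), which lets you track a single quantity $Y$ and apply Markov once, and that you finish by summing $L_v$ over leaves and using disjointness instead of the paper's "pick the least likely leaf by averaging" step --- all equivalent in substance.
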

\begin{proof}
We define input datasets $\bx^1,\ldots,\bx^{2^d}$ where $\bx^i$ assigns $D/2$ to the $i$th leaf and $0$ to the remaining leaves. Let $\Pcal_i$ be the path from root to the $i$th leaf.  We define a randomized decoding algorithm $\Dec$ as follows:
\begin{itemize}
\item $\Dec$ takes the output $\tbw$ of $\Acal$ as input and starts from the root of $\Tcal$.
\item Assume $\Dec$ is at node $u\in\nodes(\Tcal)$.
\begin{itemize}
\item If $u$ is a leaf, then output the index of $u$ among all the leaves.
\item Otherwise let $u_0,u_1$ be the children of $u$ and we divide into the following cases.
\begin{itemize}
\item If $\tilde w_{u_0}\ge\taumax$ and $\tilde w_{u_1}\ge\taumax$, then we move to $u_0$ or $u_1$ with equal probability.
\item If $\tilde w_{u_0}<\taumax$ and $\tilde w_{u_1}<\taumax$, then we move to $u_0$ or $u_1$ with equal probability.
\item Otherwise let $p\in\bin$ be such that $\tilde w_{u_p}\ge\taumax$ and $\tilde w_{u_{1-p}}<\taumax$, then we move to $u_p$ with probability $\kappa$ and to $u_{1-p}$ with probability $1-\kappa$, where
$\kappa = 1 - 4\eta \in[1/2,1]$.
\end{itemize}
\end{itemize}
\end{itemize}

Now we fix an index $i\in[2^d]$.
Let $\Pcal_i$ be $u_1,\ldots,u_d$.
Then for each $j\in[d-1]$, let $u_j^0,u_j^1$ be the children of $u_j$ and assume without loss of generality $u_j^0=u_{j+1}$; then we define the following indicators:
\begin{itemize}
\item $a_j = \ind[ (\tilde w_{u_j^0}\ge\taumax \mbox{ and } \tilde w_{u_j^1}\ge\taumax)$ or $(\tilde w_{u_j^0}<\taumax \mbox{ and } \tilde w_{u_j^1}<\taumax) ]$.
\item $b_j = \ind[ \tilde w_{u_j^0}\ge\taumax \mbox{ and } \tilde w_{u_j^1}<\taumax ]$.
\item $c_j = \ind[ \tilde w_{u_j^0}<\taumax \mbox{ and } \tilde w_{u_j^1}\ge\taumax ]$.
\end{itemize}
Let $A=\sum_ja_j$, $B=\sum_jb_j$, and $C=\sum_jc_j$.
Then it is easy to see $A+B+C=d-1$ and
\begin{equation}\label{eq:estimation_lower_bound_rwalk_1}
\Pr\sbra{\Dec(\Acal(\bx^i))=i}
=\E\sbra{2^{-A}\kappa^B(1-\kappa)^C}
=\kappa^{d-1}\E\sbra{1/(2\kappa)^{d-1-B}(2-2\kappa)^C}.
\end{equation}

Now we further fix the input to be $\bx^i$ defined above.
Then $w_u=D/2$ for $u\in\Pcal_i$ and $w_u=0$ if otherwise.
For $p\in\bin$, consider the event $\Ecal_j^p$: $\abs{\tilde w_{u_j^p}-w_{u_j^p}}\le\alpha\cdot\max\cbra{w_{u_j^p},\tau_{u_j^p}}$.
Hence when $\Ecal_j^0$ happens, we have
$$
\tilde w_{u_j^0}
\ge w_{u_j^0}-\alpha\cdot\max\cbra{w_{u_j^0},\tau_{u_j^0}}
\ge D/2-\alpha\cdot\max\cbra{D/2,\taumax}
=(1-\alpha)\cdot D/2\ge\taumax.
$$
Similarly when $\Ecal_j^1$ happens, we have
$\tilde w_{u_j^1}
\le w_{u_j^1}+\alpha\cdot\max\cbra{w_{u_j^1},\tau_{u_j^1}}
\le\alpha\cdot\taumax<\taumax$.
Meanwhile by the definition of \Cref{prob:estimation}, we know $\Pr\sbra{\Ecal_j^p}\ge1-\eta$. Thus
$$
\Pr\sbra{b_j=1}\ge\Pr\sbra{\Ecal_j^0 \land\Ecal_j^1}\ge1-2\eta,
\quad\text{and}\quad
\Pr\sbra{c_j=1}\le\Pr\sbra{\neg\Ecal_j^0\land\neg\Ecal_j^1}\le\Pr\sbra{\neg\Ecal_j^0}\le\eta,
$$
which implies $\E\sbra{d-1-B}\le2\eta\cdot(d-1)$ and $\E\sbra{C}\le\eta\cdot(d-1)$.
Note that $d-1-B\ge0$. 
Define the event $\Ecal$: $d-1-B\le4\eta\cdot(d-1)$ and $C\le4\eta\cdot(d-1)$.
Then by Markov's inequality and a union bound, we have $\Pr\sbra{\Ecal}\ge1-1/2-1/4=1/4$.
Plugging into \Cref{eq:estimation_lower_bound_rwalk_1}, we have
\begin{align*}
\Pr\sbra{\Dec(\Acal(\bx^i))=i}
&\ge\kappa^{d-1}/4\cdot\E\sbra{1/(2\kappa)^{d-1-B}(2-2\kappa)^C ~\mid~ \Ecal}\\
&\ge\kappa^{d-1}/4\cdot1/(2\kappa)^{4\eta\cdot(d-1)}\cdot(2-2\kappa)^{4\eta\cdot(d-1)}
\tag{since $\kappa\in[1/2,1]$}\\
&=\frac14\cdot2^{-(d-1)\cdot\Hcal(4\eta)}.
\tag{setting $\kappa=1-4\eta\in[1/2,1]$}
\end{align*}

Since $\sum_{i'}\Pr\sbra{\Dec(\Acal(\bx^i))=i'}=1$, by an averaging argument there exists an $i^*$ such that $\Pr\sbra{\Dec(\Acal(\bx^i))=i^*}\le2^{-d}$.
Since $\vabs{\bx^i-\bx^{i^*}}_1\in\cbra{0,D}$, by \Cref{fct:group_privacy} we have 
$$
\Pr\sbra{\Dec(\Acal(\bx^{i^*}))=i^*}
\le
\Pr\sbra{\Dec(\Acal(\bx^i))=i^*}\cdot e^{\eps\cdot D}+\delta\cdot\tfrac{e^{\eps\cdot D}-1}{e^\eps-1}
\le
2^{-d}\cdot e^{\eps\cdot D}+\delta\cdot\tfrac{e^{\eps\cdot D}-1}{e^\eps-1}.
$$
In all, we have
\begin{align*}
\tfrac14\cdot2^{-(d-1)\cdot\Hcal(4\eta)}
&\le\Pr\sbra{\Dec(\Acal(\bx^{i^*}))=i^*}
\le2^{-d}\cdot e^{\eps\cdot D}+\delta\cdot\tfrac{e^{\eps\cdot D}-1}{e^\eps-1},
\end{align*}
which proves the bound after plugging in Assumption \Cref{eq:estimation_lower_bound_rwalk_assumption} and rearranging the terms.
\end{proof}

To deal with general $\eta$ from \Cref{prob:estimation}, we simply run independent copies of the algorithm to decrease the error probability.

\begin{corollary}[Proof in \Cref{app:cor:estimation_lower_bound_rwalk}]\label{cor:estimation_lower_bound_rwalk}
Assume $\Tcal$ in \Cref{prob:estimation} is a complete binary tree of depth $d$. 
Define $D=2\cdot\ceilbra{\frac{\taumax}{1-\alpha}}$ and $s=\ceilbra{\frac{\ln(4/\kappa)}{2\cdot(1/2-\eta)^2}}$ for any parameter $\kappa\in(0,1/2]$.
If $\Acal$ is an $(\eps,\delta)$-DP algorithm for \Cref{prob:estimation} and suppose 
$s\cdot\delta\cdot\frac{e^{s\cdot\eps\cdot D}-1}{e^{s\cdot\eps}-1}\le\frac18\cdot2^{-(d-1)\cdot\Hcal(\kappa)}$,
then
$$
\taumax=\Omega\pbra{\frac{(1-\alpha)\cdot(1/2-\eta)^2\cdot(d-3-(d-1)\Hcal(\kappa))}{\eps\cdot\ln(4/\kappa)}}.
$$
\end{corollary}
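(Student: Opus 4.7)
\textbf{Proof plan for \Cref{cor:estimation_lower_bound_rwalk}.}
The plan is to bootstrap \Cref{lem:estimation_lower_bound_rwalk} by a standard privacy-preserving probability amplification: run $s$ independent copies of $\Acal$ and aggregate them coordinate-wise to obtain a new algorithm $\Acal'$ that satisfies the hypotheses of \Cref{lem:estimation_lower_bound_rwalk} at error probability $\kappa/4 \le 1/8$. Concretely, $\Acal'(\bx)$ samples independent copies $\tbw^{(1)},\ldots,\tbw^{(s)}$ of $\Acal(\bx)$, and for each $u\in\nodes(\Tcal)$ outputs $\tilde w_u := \Median\bigl(\tilde w_u^{(1)},\ldots,\tilde w_u^{(s)}\bigr)$.

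First I would verify the privacy of $\Acal'$: since each copy is $(\eps,\delta)$-DP and the copies use fresh randomness on the same input, \Cref{fct:basic_composition} (applied $s$ times) plus post-processing (\Cref{fct:post-processing}) for the median yields that $\Acal'$ is $(s\eps,s\delta)$-DP.  Next I would verify the utility: for any fixed $u$, each coordinate $\tilde w_u^{(j)}$ lands in the target interval $[w_u - \alpha \max\{w_u,\tau_u\},\ w_u + \alpha \max\{w_u,\tau_u\}]$ with probability $\ge 1 - \eta > 1/2$ independently across $j$.  A standard argument (a median of samples lies outside an interval only if at least $s/2$ of the samples do) combined with Hoeffding's inequality gives that $\tilde w_u = \Median_j(\tilde w_u^{(j)})$ lies in this interval except with probability at most $\exp\bigl(-2 s (1/2-\eta)^2\bigr)$.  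With the stated choice $s = \lceil \ln(4/\kappa)/(2(1/2-\eta)^2)\rceil$ this probability is at most $\kappa/4$, so $\Acal'$ is an $(s\eps,s\delta)$-DP algorithm for \Cref{prob:estimation} with error parameter $\eta' := \kappa/4 \le 1/8$.

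Now I would apply \Cref{lem:estimation_lower_bound_rwalk} to $\Acal'$ with parameters $(\eps',\delta',\eta') = (s\eps, s\delta, \kappa/4)$.  The hypothesis \Cref{eq:estimation_lower_bound_rwalk_assumption} becomes
\[
s\delta \cdot \frac{e^{s\eps\cdot D}-1}{e^{s\eps}-1} \ \le\ \frac{1}{8}\cdot 2^{-(d-1)\cdot\Hcal(4\eta')} \ =\ \frac{1}{8}\cdot 2^{-(d-1)\cdot\Hcal(\kappa)},
\]
which is exactly the assumption stated in the corollary (with $4\eta' = \kappa$). The conclusion of \Cref{lem:estimation_lower_bound_rwalk} then yields
\[
\taumax \ =\ \Omega\!\pbra{(1-\alpha)\cdot(d-3-(d-1)\cdot\Hcal(\kappa))/(s\eps)}.
\]
Finally, substituting $s = O\bigl(\ln(4/\kappa)/(1/2-\eta)^2\bigr)$ and rearranging produces the desired bound
\[
\taumax \ =\ \Omega\!\pbra{\frac{(1-\alpha)\cdot(1/2-\eta)^2\cdot(d-3-(d-1)\Hcal(\kappa))}{\eps\cdot\ln(4/\kappa)}}.
\]

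There is no serious technical obstacle here; the argument is routine amplification. The only point that requires care is the bookkeeping of parameters through composition: one must verify that the blown-up privacy $(s\eps,s\delta)$ of $\Acal'$ is exactly what the corollary's hypothesis absorbs (i.e.\ $D$ remains the same because it depends on $\taumax$ and $\alpha$ only, while $\eps$ and $\delta$ scale by $s$), and that $\kappa \le 1/2$ ensures $\eta' \le 1/8$ as required by \Cref{lem:estimation_lower_bound_rwalk}.
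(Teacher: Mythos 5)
Your proposal matches the paper's proof essentially line for line: run $s$ independent copies, take the entry-wise median, argue $(s\eps,s\delta)$-DP via basic composition and post-processing, reduce the error probability to $\kappa/4 \le 1/8$ by a Chernoff--Hoeffding bound (using that the median is off only when at least half the copies are off), and apply \Cref{lem:estimation_lower_bound_rwalk} with $\eta' = \kappa/4$ so that $4\eta' = \kappa$. No gaps; this is the same argument as in \Cref{app:cor:estimation_lower_bound_rwalk}.
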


\begin{remark}[General Tree Structures]
The proof above works almost identically for binary tree $\Tcal$ that is not necessarily complete, where the bound is simply replacing $d-3$ with $\log(|\leaves(\Tcal)|/8)$.
To deal with general arity $k$, we can embed a binary tree $\Tcal'$ into $\Tcal$ where we say $\Tcal$ \emph{embeds} a tree $\Tcal'$ if we can obtain $\Tcal'$ from $\Tcal$ by deleting nodes and edges.
Then we can ignore nodes in $\nodes(\Tcal)\setminus\nodes(\Tcal')$ and obtain lower bounds for $\Tcal'$.
\end{remark}

Now we are ready to establish \Cref{thm:lower_bound} using \Cref{lem:from_mrmse} and \Cref{cor:estimation_lower_bound_rwalk}.

\begin{proof}[Proof of \Cref{thm:lower_bound}]
Let $\Tcal$ be the complete binary tree of depth $d$.
Let $C,\tau,\eta$ be parameters to be optimized later.
We consider \Cref{prob:estimation} where $\tau_u\equiv\tau$ for all $u\in\nodes(\Tcal)$.

Assume towards contradiction that there is an $\eps$-DP tree aggregation algorithm with $\alpha$-$\mRMSE$ at most $C\cdot(1-\alpha)^2d/\eps$.
Then by \Cref{lem:from_mrmse}, the algorithm is also $(\alpha',\eta)$-accurate for \Cref{prob:estimation} if $\alpha'>\alpha$ and $(\alpha'-\alpha)\sqrt\eta\cdot\tau\le C\cdot(1-\alpha)^2d/\eps$.

Now we set $\eta=1/4$ and apply \Cref{cor:estimation_lower_bound_rwalk} with $\delta=0,\kappa=1/4$. 
This gives a lower bound $\tau=\Omega\pbra{(1-\alpha')\cdot d/\eps}$,
which means $\frac{C\cdot(1-\alpha)^2\cdot d}\eps\ge\Omega\pbra{\frac{(\alpha'-\alpha)(1-\alpha')\cdot d}\eps}$.
Then we set $\alpha'=(1+\alpha)/2>\alpha$ and $C=O(1)$ small enough to derive a contradiction.
\end{proof}

\begin{remark}[$\log(1/\delta)$ Factor in the Approximate-DP Algorithm]\label{rmk:optimality_apx}
As mentioned in \cref{sec:our_results}, our improved $(\eps,\delta)$-DP algorithm (See \Cref{thm:upper_bound_informal}) has a $\log(1/\delta)$ factor, which is worse than the $\sqrt{\log(1/\delta)}$ factor in the Gaussian mechanism (see \Cref{cor:baseline_algorithms}). One may wonder if we can further improve the dependency on $\delta$ to, say, $\sqrt{\log(1/\delta)}$, without influencing the other parameters.
Combining \Cref{cor:estimation_lower_bound_rwalk}, we show this is in some sense impossible.

Consider the complete binary tree of depth $d$.
Let $\delta=2^{-\Omega(d)}$.
We consider the case where $\alpha,\eta$ are constants and all the $\tau_u$'s are equal to $\tau$.
In the approximate-DP setting, we naturally seek bounds better than the ones in the pure-DP setting (recall we can obtain $\tau=O(d/\eps)$ from \Cref{cor:baseline_algorithms}). Thus we assume $\tau=O(d/\eps)$ in advance.

Then for a suitable choice of $\kappa=\Theta(1)$, the condition in \Cref{cor:estimation_lower_bound_rwalk} holds, which gives a lower bound $\tau=\Omega(d/\eps)$ for \Cref{prob:estimation}.
Then by \Cref{lem:from_mrmse}, this rules out the possibility of improving the dependency on $\delta$ in \Cref{thm:upper_bound_informal} without worsening the dependency on $d$.
\end{remark}
\section{Conclusions}\label{sec:conclusion}

We study the problem of privately estimating counts in hierarchical data, and give several algorithms and lower bounds.
We propose a new error measure that takes the multiplicative error into account.
The commonly used $\ell_2^2$-error measure in evaluating utilities of DP mechanisms allows some queries to have huge error.
On the other hand, the standard measure $\ell_\infty$-error has a ``union bound issue'' on particularly long output vector (which is the case in Census and Ads applications).

To mitigate these weaknesses, we propose $\alpha$-multiplicative root mean squared error ($\alpha$-$\mRMSE$).
Then we examine the standard Laplace mechanism for pure-DP and Gaussian mechanism for approximate-DP, and prove their optimality.
Informally, we show Laplace mechanism already achieves optimal bounds in the pure-DP setting for all multiplicative factor $\alpha$ and Gaussian mechanism is optimal in the approximate-DP setting when $\alpha=0$ (i.e., additive-only error).

For the remaining case where we allow $\alpha>0$ and an approximate-DP algorithm, we design a new algorithm with exponential improvements over Gaussian mechanism.
More precisely, Gaussian mechanism incurs $\alpha$-$\mRMSE$ of $O_{\alpha,\eps,\delta}(\sqrt d)$ while our algorithm gives improved bounds of $O_{\alpha,\eps,\delta}(\log(d))$ (and $O\pbra{\frac1{\alpha^3}\cdot\pbra{\frac1\eps\log\pbra{\frac{2d}\delta}+1}}$ specifically).
It remains an interesting question if the dependency on $d$ or $\alpha$ can be improved further.
Indeed, current lower bounds do not preclude bounds of the form $O_{\eps,\delta}\pbra{\frac1\alpha\cdot\log^*(d)}$ or even $O_{\eps,\delta}(1/\alpha)$.

Throughout this work, we assumed that the entries of the input $\bx$ are non-negative.
Another interesting direction is to extend the study to the case where the entries of $\bx$ can be negative. Here, the multiplicative error would be with respect to the absolute value of the true answer. Our algorithms do not apply here and it is unclear whether allowing a multiplicative error can help reduce the additive error in this setting.

\section*{Acknowledgements}
KW wants to thank Xin Lyu for helpful references on the sparse vector technique. We thank anonymous ITCS'23 and ICALP'23 reviewers for helpful feedback.  

\bibliographystyle{alpha}
\bibliography{ref}

\newcommand{\etalchar}[1]{$^{#1}$}
\begin{thebibliography}{EMM{\etalchar{+}}23}

\bibitem[AAC{\etalchar{+}}22]{census-arxiv}
John~M. Abowd, Robert Ashmead, Ryan Cumings{-}Menon, Simson~L. Garfinkel, Micah
  Heineck, Christine Heiss, Robert Johns, Daniel Kifer, Philip Leclerc, Ashwin
  Machanavajjhala, Brett Moran, William Sexton, Matthew Spence, and Pavel
  Zhuravlev.
\newblock The 2020 census disclosure avoidance system {TopDown} algorithm.
\newblock {\em Harvard Data Sci. Rev.}, 2022.
\newblock Special Issue 2.

\bibitem[AKM{\etalchar{+}}19]{census-main}
John Abowd, Daniel Kifer, Brett Moran, Robert Ashmead, Philip Leclerc, William
  Sexton, Simson Garfinkel, and Ashwin Machanavajjhala.
\newblock {Census topdown: Differentially private data, incremental schemas,
  and consistency with public knowledge}, 2019.
\newblock Available at
  \url{https://github.com/uscensusbureau/census2020-das-e2e/blob/master/doc/20190711_0945_Consistency_for_Large_Scale_Differentially_Private_Histograms.pdf}.

\bibitem[{App}17]{dp2017learning}
{Apple Differential Privacy Team}.
\newblock Learning with privacy at scale.
\newblock {\em Apple ML J.}, 2017.

\bibitem[AS19]{abowd2019economic}
John~M Abowd and Ian~M Schmutte.
\newblock An economic analysis of privacy protection and statistical accuracy
  as social choices.
\newblock {\em Amer. Econ. Rev.}, 109(1):171--202, 2019.

\bibitem[BBNS19]{BlasiokBNS19}
Jaroslaw Blasiok, Mark Bun, Aleksandar Nikolov, and Thomas Steinke.
\newblock Towards instance-optimal private query release.
\newblock In {\em SODA}, pages 2480--2497, 2019.

\bibitem[BDKT12]{BhaskaraDKT12}
Aditya Bhaskara, Daniel Dadush, Ravishankar Krishnaswamy, and Kunal Talwar.
\newblock Unconditional differentially private mechanisms for linear queries.
\newblock In {\em STOC}, pages 1269--1284, 2012.

\bibitem[BFM{\etalchar{+}}13]{bolot2013private}
Jean Bolot, Nadia Fawaz, Shanmugavelayutham Muthukrishnan, Aleksandar Nikolov,
  and Nina Taft.
\newblock Private decayed predicate sums on streams.
\newblock In {\em ICDT}, pages 284--295, 2013.

\bibitem[BNSV15]{BunThreshold}
Mark Bun, Kobbi Nissim, Uri Stemmer, and Salil~P. Vadhan.
\newblock Differentially private release and learning of threshold functions.
\newblock In {\em FOCS}, pages 634--649, 2015.

\bibitem[BUV18]{BunUV18}
Mark Bun, Jonathan~R. Ullman, and Salil~P. Vadhan.
\newblock Fingerprinting codes and the price of approximate differential
  privacy.
\newblock {\em {SIAM} J. Comput.}, 47(5):1888--1938, 2018.

\bibitem[CDMS21]{CohenDMS21}
Aloni Cohen, Moon Duchin, J.~N. Matthews, and Bhushan Suwal.
\newblock Census topdown: The impacts of differential privacy on redistricting.
\newblock In {\em FORC}, pages 5:1--5:22, 2021.

\bibitem[CDMS22]{Cohen2022Private}
Aloni Cohen, Moon Duchin, JN~Matthews, and Bhushan Suwal.
\newblock Private {Numbers} in {Public} {Policy}: Census, {Differential}
  {Privacy}, and {Redistricting}.
\newblock {\em Harvard Data Sci. Rev.}, 2022.

\bibitem[CPS{\etalchar{+}}12]{CormodePSSY12}
Graham Cormode, Cecilia~M. Procopiuc, Divesh Srivastava, Entong Shen, and Ting
  Yu.
\newblock Differentially private spatial decompositions.
\newblock In {\em ICDE}, pages 20--31, 2012.

\bibitem[CSS11]{ChanSS11}
T.{-}H.~Hubert Chan, Elaine Shi, and Dawn Song.
\newblock Private and continual release of statistics.
\newblock {\em {ACM} Trans. Inf. Syst. Secur.}, 14(3):26:1--26:24, 2011.

\bibitem[DKM{\etalchar{+}}06]{DBLP:conf/eurocrypt/DworkKMMN06}
Cynthia Dwork, Krishnaram Kenthapadi, Frank McSherry, Ilya Mironov, and Moni
  Naor.
\newblock Our data, ourselves: Privacy via distributed noise generation.
\newblock In {\em EUROCRYPT}, pages 486--503, 2006.

\bibitem[DKY17]{ding2017collecting}
Bolin Ding, Janardhan Kulkarni, and Sergey Yekhanin.
\newblock Collecting telemetry data privately.
\newblock In {\em NeurIPS}, pages 3571--3580, 2017.

\bibitem[DMNS16]{DBLP:journals/jpc/DworkMNS16}
Cynthia Dwork, Frank McSherry, Kobbi Nissim, and Adam~D. Smith.
\newblock Calibrating noise to sensitivity in private data analysis.
\newblock {\em J. Priv. Confidentiality}, 7(3):17--51, 2016.

\bibitem[DMT07]{DworkMT07}
Cynthia Dwork, Frank McSherry, and Kunal Talwar.
\newblock The price of privacy and the limits of {LP} decoding.
\newblock In {\em STOC}, pages 85--94, 2007.

\bibitem[DN03]{DinurN03}
Irit Dinur and Kobbi Nissim.
\newblock Revealing information while preserving privacy.
\newblock In {\em PODS}, pages 202--210, 2003.

\bibitem[DNPR10]{DworkNPR10}
Cynthia Dwork, Moni Naor, Toniann Pitassi, and Guy~N. Rothblum.
\newblock Differential privacy under continual observation.
\newblock In {\em STOC}, pages 715--724, 2010.

\bibitem[DNR{\etalchar{+}}09]{DBLP:conf/stoc/DworkNRRV09}
Cynthia Dwork, Moni Naor, Omer Reingold, Guy~N. Rothblum, and Salil~P. Vadhan.
\newblock On the complexity of differentially private data release: efficient
  algorithms and hardness results.
\newblock In {\em STOC}, pages 381--390, 2009.

\bibitem[DNRR15]{DworkNRR15}
Cynthia Dwork, Moni Naor, Omer Reingold, and Guy~N. Rothblum.
\newblock Pure differential privacy for rectangle queries via private
  partitions.
\newblock In {\em ASIACRYPT}, pages 735--751, 2015.

\bibitem[DR14]{DBLP:journals/fttcs/DworkR14}
Cynthia Dwork and Aaron Roth.
\newblock The algorithmic foundations of differential privacy.
\newblock {\em Found. Trends Theor. Comput. Sci.}, 9(3-4):211--407, 2014.

\bibitem[Dwo08]{DBLP:conf/tamc/Dwork08}
Cynthia Dwork.
\newblock Differential privacy: {A} survey of results.
\newblock In {\em TAMC}, pages 1--19, 2008.

\bibitem[EMM{\etalchar{+}}23]{epasto2023differentially}
Alessandro Epasto, Jieming Mao, Andres~Munoz Medina, Vahab Mirrokni, Sergei
  Vassilvitskii, and Peilin Zhong.
\newblock Differentially private continual releases of streaming frequency
  moment estimations.
\newblock In {\em ITCS}, pages 48:1--48:24, 2023.

\bibitem[ENU20]{EdmondsNU20}
Alexander Edmonds, Aleksandar Nikolov, and Jonathan~R. Ullman.
\newblock The power of factorization mechanisms in local and central
  differential privacy.
\newblock In {\em STOC}, pages 425--438, 2020.

\bibitem[FHO21]{DBLP:conf/esa/FichtenbergerHO21}
Hendrik Fichtenberger, Monika Henzinger, and Wolfgang Ost.
\newblock Differentially private algorithms for graphs under continual
  observation.
\newblock In {\em ESA}, pages 42:1--42:16, 2021.

\bibitem[GDGK20]{DBLP:conf/aistats/GengDGK20}
Quan Geng, Wei Ding, Ruiqi Guo, and Sanjiv Kumar.
\newblock Tight analysis of privacy and utility tradeoff in approximate
  differential privacy.
\newblock In {\em AISTATS}, pages 89--99, 2020.

\bibitem[Hon15]{537076}
James Honaker.
\newblock Efficient use of differentially private binary trees.
\newblock In {\em TPDP}, 2015.

\bibitem[HT10]{HardtT10}
Moritz Hardt and Kunal Talwar.
\newblock On the geometry of differential privacy.
\newblock In {\em STOC}, pages 705--714, 2010.

\bibitem[HUU23]{Henzinger-continual-counting}
Monika Henzinger, Jalaj Upadhyay, and Sarvagya Upadhyay.
\newblock Almost tight error bounds on differentially private continual
  counting.
\newblock In {\em SODA}, pages 5003--5039, 2023.

\bibitem[KLM{\etalchar{+}}20]{KaplanThreshold}
Haim Kaplan, Katrina Ligett, Yishay Mansour, Moni Naor, and Uri Stemmer.
\newblock Privately learning thresholds: Closing the exponential gap.
\newblock In {\em COLT}, pages 2263--2285, 2020.

\bibitem[KMS{\etalchar{+}}21]{DBLP:conf/icml/KairouzM00TX21}
Peter Kairouz, Brendan McMahan, Shuang Song, Om~Thakkar, Abhradeep Thakurta,
  and Zheng Xu.
\newblock Practical and private (deep) learning without sampling or shuffling.
\newblock In {\em ICML}, pages 5213--5225, 2021.

\bibitem[LMH{\etalchar{+}}15]{DBLP:journals/vldb/LiMHMR15}
Chao Li, Gerome Miklau, Michael Hay, Andrew McGregor, and Vibhor Rastogi.
\newblock The matrix mechanism: optimizing linear counting queries under
  differential privacy.
\newblock {\em {VLDB} J.}, 24(6):757--781, 2015.

\bibitem[LSS08]{LeeSS08}
Troy Lee, Adi Shraibman, and Robert Spalek.
\newblock A direct product theorem for discrepancy.
\newblock In {\em CCC}, pages 71--80, 2008.

\bibitem[Mat93]{mathias1993hadamard}
Roy Mathias.
\newblock The hadamard operator norm of a circulant and applications.
\newblock {\em SIAM J. Matr. Anal. Appl.}, 14(4):1152--1167, 1993.

\bibitem[MNT18]{rny033}
Jiří Matoušek, Aleksandar Nikolov, and Kunal Talwar.
\newblock Factorization norms and hereditary discrepancy.
\newblock {\em Intl. Math. Res. Not.}, 2020(3):751--780, 2018.

\bibitem[Nik15]{Nikolov15}
Aleksandar Nikolov.
\newblock An improved private mechanism for small databases.
\newblock In {\em ICALP}, pages 1010--1021, 2015.

\bibitem[Nik23]{Nikolov-JL}
Aleksandar Nikolov.
\newblock Private query release via the {Johnson--Lindenstrauss} transform.
\newblock In {\em SODA}, pages 4982--5002, 2023.

\bibitem[NTZ13]{NikolovTZ13}
Aleksandar Nikolov, Kunal Talwar, and Li~Zhang.
\newblock The geometry of differential privacy: the sparse and approximate
  cases.
\newblock In {\em STOC}, pages 351--360, 2013.

\bibitem[QYL13]{QardajiYL13}
Wahbeh~H. Qardaji, Weining Yang, and Ninghui Li.
\newblock Differentially private grids for geospatial data.
\newblock In {\em ICDE}, pages 757--768, 2013.

\bibitem[Vad17]{DBLP:books/sp/17/Vadhan17}
Salil~P. Vadhan.
\newblock The complexity of differential privacy.
\newblock In Yehuda Lindell, editor, {\em Tutorials on the Foundations of
  Cryptography}, pages 347--450. Springer International Publishing, 2017.

\bibitem[XXF{\etalchar{+}}14]{XiaoXFGL14}
Yonghui Xiao, Li~Xiong, Liyue Fan, Slawomir Goryczka, and Haoran Li.
\newblock {DPCube:} {D}ifferentially private histogram release through
  multidimensional partitioning.
\newblock {\em Trans. Data Priv.}, 7(3):195--222, 2014.

\bibitem[ZXX16]{ZhangXX16}
Jun Zhang, Xiaokui Xiao, and Xing Xie.
\newblock Privtree: {A} differentially private algorithm for hierarchical
  decompositions.
\newblock In {\em SIGMOD}, pages 155--170, 2016.

\end{thebibliography}

\appendix
\section{Proofs}

\subsection[Proof of Lemma 17]{Proof of \Cref{lem:svt}}\label{app:svt}

We explicitly present the algorithm in \Cref{alg:sparse}, which is almost identical to \cite[Algorithm 2]{DBLP:journals/fttcs/DworkR14} with $\delta=0$.
The only change is we do not halt the algorithm after report $c$ $\top$'s.
Instead, we output the default value $\bot$ after $c$ $\top$'s.

\begin{algorithm}[ht]
\caption{\textsf{Sparse}}\label{alg:sparse}
\DontPrintSemicolon
\LinesNumbered
\KwIn{$\bx,\cbra{f_i}$ and parameters $\eta,c,\tau,\eps$ described in \Cref{lem:svt}}
\KwOut{$a_i\in\binBT$ for each query $f_i$}
Initialize $\tilde\tau\gets\tau+\Lap(2c/\eps)$ and $t\gets0$\;
\ForEach{$i=1,2,\ldots$}{
\eIf{$t\ge c$}{
Answer $a_i\gets\bot$
}(\tcc*[f]{$t<c$}){
Set $\tilde f_i\gets f_i(\bx)+\Lap(4c/\eps)$\;
\eIf{$\tilde f_i<\tilde\tau$}{
Answer $a_i\gets\bot$
}(\tcc*[f]{$\tilde f_i\ge\tilde\tau$}){
Answer $a_i\gets\top$\;
Update $\tilde\tau\gets\tau+\Lap(2c/\eps)$ and $t\gets t+1$
}
}
}
\end{algorithm}

\begin{proof}[Proof of \Cref{lem:svt}]
The privacy guarantee is precisely \cite[Theorem 3.25]{DBLP:journals/fttcs/DworkR14}.
Thus we only prove the accuracy bound.

Set
$$
\Delta=\frac{8c}\eps\ln\pbra{\frac{2d}\eta},
$$
with foresight.
Let $\Ecal$ be the event that each noise $\Lap(4c/\eps)$ from \textsf{Line 6} and each noise $\Lap(2c/\eps)$ from \textsf{Lines 1, 11} has absolute value at most $\Delta/2$.
Note that we ignore \textsf{Line 11} when $i=d$. Therefore there are at most $d$ many $\Lap(2c/\eps)$.
By the definition of Laplace distribution and a union bound, we know
$$
\Pr\sbra{\Ecal}\ge1-d\cdot e^{-\frac{\eps\Delta}{8c}}-d\cdot e^{-\frac{\eps\Delta}{4c}}\ge1-\eta.
$$

By the definition of $i^*$, it suffices to consider each $i$ that uses the $\Lap(4c/\eps)$ on \textsf{Line 6}. 
Conditioned on $\Ecal$, we have 
$$
f_i-\frac\Delta2\le\tilde f_i\le f_i+\frac\Delta2
\quad\text{and}\quad
\tau-\frac\Delta2\le\tilde\tau\le\tau+\frac\Delta2.
$$
Thus
\begin{itemize}
\item If $a_i=\top$, then $\tilde f_i\ge\tilde\tau$, which implies $f_i\ge\tau-\Delta$ as desired.
\item If $a_i=\bot$, then $\tilde f_i<\tilde\tau$, which implies $f_i<\tau+\Delta$ as desired.
\qedhere
\end{itemize}
\end{proof}

\subsection[Proof of Lemma 19]{Proof of \Cref{lem:from_mrmse}}\label{apx:proof_from_mrmse}

\begin{proof}
For any $u\in\nodes(\Tcal)$, we have
\begin{align*}
\Pr\sbra{\abs{\tilde w_u-w_u} > \alpha'\cdot\max\cbra{w_u,\tau_u}} 
&\leq \Pr\sbra{\abs{\tilde w_u-w_u} - \alpha\cdot w_u > (\alpha' - \alpha)\cdot\tau_u} \\
&=\Pr\sbra{\max\cbra{\abs{\tilde w_u-w_u} - \alpha\cdot w_u , 0}>(\alpha' - \alpha)\cdot\tau_u}\\
&\leq \frac{\E\sbra{\pbra{\max\cbra{\abs{\tilde w_u-w_u} - \alpha\cdot w_u , 0}}^2}}{(\alpha' - \alpha)^2\cdot\tau_u^2} 
\tag{by Chebyshev's inequality}\\
&\leq \frac{\pbra{\RMSE_{\alpha}(\tw_u, w_u)}^2}{(\alpha' - \alpha)^2\cdot\taumin^2}
\tag{since $\taumin\le\tau_u$}\\
&\leq \eta.
\tag*{\qedhere}
\end{align*}
\end{proof}

\subsection[Proof of Lemma 20]{Proof of \Cref{lem:to_mrmse}}\label{apx:proof_to_mrmse}

\begin{proof}
Let $\eps' = \eps/(2d)$ and $\delta' = \delta/(2d)$.
The algorithm works as follows:
\begin{itemize}
\item Compute $\tw'_u$ for all $u \in \nodes(\cT)$ using the $(\eps/2, \delta/2)$-DP $(\alpha, \eta)$-accurate algorithm for tree aggregation.
\item For each $u \in \nodes(\cT)$, compute $\tw''_u = w_u + z_u$ where 
$$
z_u\sim\TruncLap\pbra{\frac{1}{\eps'}
,R}
\quad\text{and}\quad
R = \frac{1}{\eps'} \ln\pbra{1+\frac{e^{\eps'}-1}{2\delta'}}.
$$
Then output
\begin{align*}
\tw_u =
\begin{cases}
\tw'_u &\text{if } \tw'_u \in \sbra{\tw''_u - R, \tw''_u + R}, \\
\tw''_u - R  &\text{if } \tw'_u < \tw''_u - R, \\
\tw''_u + R &\text{otherwise}.
\end{cases}
\end{align*}
\end{itemize}

To see that the algorithm is $(\eps, \delta)$-DP, recall that $\Tcal$ has depth $d$ and the weights of nodes of the same depth depend on disjoint subsets of the values of the leaves.
Therefore by \Cref{fct:parallel_composition} and \Cref{lem:truncated_laplace_mechanism}, for any fixed depth, the computation of $\tw_u''$ for all $u\in\nodes(\Tcal)$ of the depth is $(\eps',\delta')$-DP.
Then by \Cref{fct:basic_composition}, the computation of $\tw_u''$ for all $u\in\nodes(\Tcal)$ is $(d\eps',d\delta')=(\eps/2,\delta/2)$-DP.
Thus, applying \Cref{fct:basic_composition} again together with \Cref{fct:post-processing}, the entire algorithm is $(\eps, \delta)$-DP.

For the utility guarantee, we fix an arbitrary $u\in\nodes(\Tcal)$ and it suffices to bound its $\alpha$-$\RMSE$.
By the definition of truncated Laplace distribution, we have $w_u \in \sbra{\tw''_u - R, \tw''_u + R}$, which means we always have 
$$
|\tw_u - w_u| \leq |\tw'_u - w_u|
\quad\text{and}\quad
|\tw_u - w_u| \leq 2R.
$$
Thus, we can bound its $\alpha$-$\RMSE$ by
\begin{align*}
\pbra{\RMSE_{\alpha}(\tw_u, w_u)}^2 
&=\E\sbra{\pbra{\max\cbra{\abs{\tw_u - w_u} - \alpha\cdot w_u, 0}}^2} \\
&\leq \Pr\sbra{\abs{\tw_u' - w_u} \leq \alpha\pbra{w_u+\tau_u}} \cdot \pbra{\alpha\tau_u}^2 \\
&\qquad\qquad
+  \Pr\sbra{\abs{\tw_u' - w_u} >\alpha\pbra{w_u+\tau_u}} \cdot 4R^2 
\tag{since $|\tw_u - w_u| \leq |\tw'_u - w_u|$}\\
&\leq \pbra{\alpha\cdot\tau_u}^2 
+  \Pr\sbra{\abs{\tw_u' - w_u} >\alpha\cdot\max\cbra{w_u,\tau_u}} \cdot 4R^2\\
&\leq \pbra{\alpha\cdot\tau_u}^2 + \eta\cdot O\pbra{\frac{d^2}{\eps^2}\cdot\log^2\pbra{1+\frac{2d\cdot\pbra{e^{\eps/(2d)}-1}}\delta}}\\
&\le\pbra{\alpha\cdot\tau_u}^2 + \eta\cdot O\pbra{\frac{d^2}{\eps^2}\cdot\log^2\pbra{1+\frac{e^\eps-1}\delta}}
\tag{since $T\cdot(e^{a/T}-1)\le e^a-1$ for all $a\ge0$ and $T\ge1$}\\
&\le\pbra{\alpha\cdot\tau_u}^2 + \eta\cdot O\pbra{\frac{d^2}{\eps^2}\cdot\log^2\pbra{\frac{e^\eps}\delta}}
\tag{since $\delta\le1$}\\
&\leq (\alpha \cdot \taumax)^2 + O\pbra{d^2\cdot\eta + \frac{d^2\log^2(1/\delta)}{\eps^2}\cdot\eta}.
\tag{since $\tau_u\le\taumax$}
\end{align*}
Thus, we can conclude that $\RMSE_{\alpha}(\tw_u, w_u) \leq O\pbra{\alpha\cdot\taumax + d\sqrt\eta\cdot\pbra{1+\log(1/\delta)/\eps}}$ as desired.
\end{proof}

\subsection[Proof of Lemma 26]{Proof of \Cref{lem:reduction}}\label{app:proof_of_lem:reduction}

\begin{proof}
Recall \Cref{alg:reduction}.
By \Cref{cor:classification}, each \textsf{Line 4} is an $(\eps_i,\delta_i)$-DP algorithm.\footnote{We remark that the privacy guarantee here does not rely on any assumptions on the choice of $M_i,\eta_i,\alpha_i,\tau_i$.}
Thus the privacy bound of \Cref{alg:reduction} follows naturally from \Cref{fct:basic_composition} and \Cref{fct:post-processing}.

Now we turn to the correctness part.
Let $u\in\nodes(\Tcal)$ be a fixed node.
For each $i=0,1,\ldots,\ell$, let $\Tcal^u_i\in\Fcal_i$ be, if exists, the tree containing $u$.
We first prove by induction for $i$ that with probability at least $1-\sum_{j>i}\eta_j$, for any $j\ge i$:
(1) The weight of the root of $\Tcal^u_j$, if well-defined, is at most $M_j$, and
(2) the classification algorithm is correct on $\Tcal^u_{j+1}$, if well-defined.
\begin{itemize}
\item \textsc{Base case $i=\ell$.} 
By Assumption \Cref{eq:reduction_4}, the weight of the root of $\Tcal^u_\ell=\Tcal$ is at most $M\le M_\ell$ as desired.
\item \textsc{Inductive Case $i<\ell$.} 
By the induction hypothesis, with probability at least $1-\sum_{j>i+1}\eta_j$, for any $j\ge i+1$ the weight of the root of $\Tcal^u_j$ is at most $M_j$, and the classification algorithm is correct on $\Tcal_{j+1}^u$.

In particular, the weight of the roots in $\Tcal^u_{i+1}$ is at most $M_{i+1}$.
We also note that for \Classification{$\Fcal_{i+1};M_{i+1},\eta_{i+1},\alpha_{i+1},\tau_{i+1},\eps_{i+1},\delta_{i+1}$},
Assumption \Cref{eq:reduction_1} satisfies the condition of \Cref{cor:classification}.
Thus by \Cref{cor:classification}, with probability at least $1-\eta_{i+1}$ the classification algorithm is correct on $\Tcal^u_{i+1}$.
Hence any node $v\in\nodes(\Tcal^u_{i+1})$ satisfying $w_v\ge(1+\alpha_{i+1})\cdot\tau_{i+1}$ has $w'_v=\top$.
Therefore by the selection method on \textsf{Line 6}, any subtree (including $\Tcal^u_i$) in $\Tcal^u_{i+1}$ put into $\Fcal_i$ has root assigned $\bot$ and has weight at most $(1+\alpha_{i+1})\cdot\tau_{i+1}$, which is at most $M_i$ by Assumption \Cref{eq:reduction_4} as desired.
\end{itemize}

We prove the correctness of our algorithm conditioned on the event that for any $i=0,1,\ldots,\ell$: If $\Tcal^u_i$ is well-defined, then
(1) the weight of the root of $\Tcal^u_i$ is at most $M_i$, and 
(2) the classification algorithm is correct on $\Tcal^u_i$.
By the analysis above and Assumption \Cref{eq:reduction_2}, this happens with probability at least $1-\sum_i\eta_i\ge1-\eta$.
Then we have the following two cases:
\begin{itemize}
\item If we set $\tilde w_u=M_i$ on \textsf{Line 7}, then $(1-\alpha_i)\cdot\tau_i\le w_u\le M_i$ by Assumption \Cref{eq:reduction_5} and
$$
\abs{\tilde w_u-w_u}\le M_i-(1-\alpha_i)\cdot\tau_i\le\alpha\cdot(1-\alpha_i)\cdot\tau_i\le\alpha\cdot w_u\le\alpha\cdot\max\cbra{w_u,\tau_u}.
$$
\item If we set $\tilde w_u=M_0$ on \textsf{Line 11}, then $0\le w_u\le M_0$.
By Assumption \Cref{eq:reduction_6}, this implies
\begin{equation*}
\abs{\tilde w_u-w_u}\le M_0\le\alpha\cdot\taumin\le\alpha\cdot\max\cbra{w_u,\tau_u}.
\tag*{\qedhere}
\end{equation*}
\end{itemize}
\end{proof}

\subsection[Proof of Corollary 27]{Proof of \Cref{cor:estimation}}\label{app:proof_of_cor:estimation}

Let $M$ be a parameter. We first prove \Cref{thm:upper_bound} with upper bound $M$ provided.
\begin{theorem}\label{thm:estimation}
There is an $(\eps,\delta)$-DP algorithm such that it solves \Cref{prob:estimation} assuming the weight of the root of $\Tcal$ is at most $M$ and
$$
\taumin\ge\frac{162\cdot(1+\alpha)^2}{\alpha^4\cdot\eps}\cdot\max\cbra{8\ln\pbra{\frac{4d}{\eta}},\ln\pbra{1+\frac{e^{\eps/2}-1}{\delta}}}.
$$
\end{theorem}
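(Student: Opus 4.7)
The plan is to instantiate \Cref{lem:reduction} with a geometric schedule of parameters that makes all of \Cref{eq:reduction_1}--\Cref{eq:reduction_6} simultaneously satisfiable under the stated lower bound on $\tau_{\min}$. I would fix a uniform per-round slack $\alpha_i := \alpha/3$ for $i \in [\ell]$ and identify the largest geometric ratio $r := M_{i+1}/M_i$ consistent with \Cref{eq:reduction_4} (requiring $(1+\alpha_{i+1})\tau_{i+1} \le M_i$) and \Cref{eq:reduction_5} (confining $\tau_i \in [M_i/((1+\alpha)(1-\alpha_i)),\,M_i/(1-\alpha_i)]$). A short algebra yields $r = (1+\alpha)(3-\alpha)/(3+\alpha)$, which is $>1$ precisely when $\alpha \in (0,1)$ and satisfies $r - 1 = \alpha(1-\alpha)/(3+\alpha) = \Theta(\alpha)$. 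I would set $M_0 := \alpha\tau_{\min}$ (saturating \Cref{eq:reduction_6}), $M_i := r^i M_0$, $\ell := \lceil \log_r(M/M_0)\rceil$ (so $M_\ell \ge M$), and $\tau_{i+1} := M_i/(1+\alpha_{i+1})$ (saturating the first clause of \Cref{eq:reduction_4} inside the interval from \Cref{eq:reduction_5}). The privacy and error budgets are then spread geometrically with the same ratio: $\eps_i := (r-1)\eps/r^i$, $\delta_i := (r-1)\delta/r^i$, and $\eta_i := (r-1)\eta/r^i$, whose telescoping sums $\sum_i (r-1)/r^i = 1$ yield $\sum \eps_i \le \eps$, $\sum \delta_i \le \delta$, and $\sum \eta_i \le \eta$; this gives overall $(\eps,\delta)$-DP via \Cref{lem:reduction} and discharges \Cref{eq:reduction_2}.

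The key invariances of this schedule are $M_i \eps_i = \alpha(r-1)\eps\tau_{\min}$ (constant in $i$) and $\eps_i/\delta_i = \eps/\delta$ (constant in $i$). Substituting these together with $\tau_i \propto M_i$ into the only remaining constraint \Cref{eq:reduction_1} and squaring reduces it to an essentially scale-free inequality of the form $\alpha^2(r-1)\eps\tau_{\min} \gtrsim \max\{\ln(2d/\eta_i),\,\ln(1+(e^{\eps_i/2}-1)/\delta_i)\}$ up to constants depending on $\alpha$. Using $r - 1 = \Theta(\alpha)$, this rearranges into $\tau_{\min} \gtrsim (1+\alpha)^2/(\alpha^4\eps) \cdot (\text{log-factor})$. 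The four factors of $\alpha^{-1}$ accumulate from (i) $M_0 = \alpha\tau_{\min}$ in \Cref{eq:reduction_6}, (ii) $\alpha_i = \Theta(\alpha)$, (iii) $r - 1 = \Theta(\alpha)$ forcing $\eps_1 = \Theta(\alpha\eps)$, and (iv) the $\sqrt{M_i/(\alpha_i\eps_i)}$ factor inside \Cref{eq:reduction_1}. Matching the precise numerical constant $162(1+\alpha)^2$ in the theorem then requires a careful audit of the $(1+\alpha)$, $(3\pm\alpha)$, and $(1-\alpha)$ terms appearing in $r$, $r-1$, and in the ratio $\tau_i/M_i$.

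The main obstacle is ensuring that the right-hand log factor comes out as $\max\{\ln(4d/\eta),\,\ln(1+(e^{\eps/2}-1)/\delta)\}$, i.e., $M$-independent. The second log term is already uniform in $i$ by the ratio invariance $\eps_i/\delta_i = \eps/\delta$. The first log term $\ln(2d/\eta_i)$ in the naive geometric schedule grows by $i \ln r$ and is worst at $i = \ell$, where it picks up an additive $\ln(M/M_0)$; to avoid contaminating the bound with $\ln M$ I would either (a) reshape $\eta_i$ into a polynomially decaying sequence such as $\eta_i \propto 1/i^2$, introducing only a $\ln\ln$ overhead that the numerical constant absorbs, or (b) exploit the fact that in the regime of interest the $\delta$-dependent second log term dominates the first anyway, so the $\max$ in the theorem picks up the uniform-in-$i$ term. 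Apart from this one delicate log-accounting step and the tight tracking of the four $\alpha^{-1}$ factors above, the remainder of the argument is routine numerical bookkeeping over the constraints of \Cref{lem:reduction}.
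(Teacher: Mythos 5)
Your proposal follows the same reduction-plus-geometric-schedule skeleton as the paper (instantiate \Cref{lem:reduction} with a geometric sequence of $M_i$, $\tau_i$, etc.), and your accounting of the $\alpha^{-4}$ factors is essentially right. However, there is a genuine gap in the privacy/error-parameter schedule, and you have correctly identified it as the crux but not resolved it.

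The issue is the one you flag in the last paragraph. With $\eps_i \propto r^{-i}$, $\eta_i \propto r^{-i}$ the geometric prefactor $\frac{1}{\eps_i\,r^{i-1}}$ in the rearranged version of \Cref{eq:reduction_1} is constant in $i$, so the constraint is governed entirely by $\ln(2d/\eta_i) = \ln(2d/\eta) + i\ln r + O(1)$, which is largest at $i=\ell$, contributing a $\Theta\pbra{\ln r \cdot \ell} = \Theta\pbra{\log(M/(\alpha\taumin))}$ additive term. Neither of your two proposed fixes removes this. Option (b) is not a proof: the $\delta$-term need not dominate (consider, e.g., moderate $\delta$ and $M$ doubly exponential in $\taumin$). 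Option (a), $\eta_i \propto 1/i^2$, replaces $i\ln r$ by $2\ln i$, giving at $i=\ell$ a leftover term $\Theta(\ln\ell) = \Theta\pbra{\ln\ln(M/(\alpha\taumin))}$; this is \emph{not} a numerical constant — it depends on $M$, which the theorem statement rules out. So the bound you obtain is strictly weaker than the claimed $M$-independent one.

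The paper's resolution is a different schedule with a crucial extra polynomial factor: $\eta_i = \eta/2^i$ together with $\eps_i, \delta_i \propto i \cdot \rho^{i-1}$ (where $\rho = \frac{1+\beta}{(1+\alpha)(1-\beta)} < 1$ with $\beta$ chosen so that the $M_i$ still grow geometrically). With this choice, the geometric factors cancel exactly as in your schedule, but there is an additional $1/i$ decay in the rearranged constraint. Since $\ln(2d/\eta_i) = \ln(2d/\eta) + i\ln 2$ grows \emph{linearly}, dividing by $i$ yields $\frac{\ln(2d/\eta)}{i} + \ln 2$, which is maximized at $i=1$, producing the $M$-independent bound $\ln(4d/\eta)$. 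In short: the missing idea in your plan is the $\times\,i$ factor in $\eps_i$ (and $\delta_i$) that precisely absorbs the linear-in-$i$ growth of $\ln(1/\eta_i)$ and shifts the worst case to $i=1$. Without it you cannot avoid some form of $M$-dependence. A secondary, cosmetic difference is your choice $\alpha_i = \alpha/3$ versus the paper's $\beta = \alpha/(6+5\alpha)$ (both are $\Theta(\alpha)$; the paper's choice is tuned to hit the stated constant $162$).
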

\begin{proof}
We apply \Cref{lem:reduction} with parameters $\ell,M_0$, and $(M_i,\eta_i,\alpha_i,\tau_i,\eps_i,\delta_i)_{i\in[\ell]}$ determined as follows.

For simplicity, we will set $\alpha_i\equiv\beta$ for some parameter $\beta$ to be optimized later.
Then define $M_0=\alpha\cdot\taumin$ and $M_i=(1+\beta)\cdot\tau_{i+1}=(1+\alpha)(1-\beta)\cdot\tau_i$.
Thus
$$
M_i=\alpha\cdot\taumin\cdot\pbra{\frac{(1+\alpha)(1-\beta)}{1+\beta}}^i
\quad\text{and}\quad
\tau_i=\frac{\alpha\cdot\taumin}{1+\beta}\cdot\pbra{\frac{(1+\alpha)(1-\beta)}{1+\beta}}^{i-1}.
$$
Note that we will pick $\beta$ to make sure $(1+\alpha)(1-\beta)>1+\beta$, i.e., $\beta<\frac\alpha{2+\alpha}$.
Then it suffices to set
$$
\ell=\ceilbra{\ln\pbra{\frac M{\alpha\cdot\taumin}}\middle/\ln\pbra{\frac{(1+\alpha)(1-\beta)}{1+\beta}}}.
$$

Now that Assumptions \Cref{eq:reduction_4}, \Cref{eq:reduction_5}, and \Cref{eq:reduction_6} are satisfied, we turn to $\eps_i,\delta_i,\eta_i$.
Rearranging terms in Assumption \Cref{eq:reduction_1}, it is equivalent to
$$
\taumin\ge\tfrac{2(1+\alpha)(1-\beta^2)}\beta\cdot\pbra{\tfrac{1+\beta}{(1+\alpha)(1-\beta)}}^{i-1}\cdot\max\cbra{\frac{48}{\alpha\cdot\eps_i}\ln\pbra{\frac{2d}{\eta_i}},\frac6{\alpha\cdot\eps_i}\ln\pbra{1+\frac{e^{\eps_i/2}-1}{\delta_i}}}.
$$
Then we set
$$
\eta_i=\eta/2^i,
\quad
\eps_i=\frac\eps C\cdot i\cdot\pbra{\frac{(1+\alpha)(1-\beta)}{1+\beta}}^{i-1},
\quad
\delta_i=\frac\delta C\cdot i\cdot\pbra{\frac{(1+\alpha)(1-\beta)}{1+\beta}}^{i-1},
$$
where $C$ is the normalizing factor computed by
$$
C=\sum_{i=1}^{+\infty}i\cdot\pbra{\frac{(1+\alpha)(1-\beta)}{1+\beta}}^{i-1}
=\pbra{1-\frac{(1+\alpha)(1-\beta)}{1+\beta}}^{-2}
=\pbra{\frac{(2+\alpha)\cdot\beta-\alpha}{1+\beta}}^{-2}.
$$
Since $T\cdot\pbra{e^{a/T}-1}\le e^a-1$ holds for all $T\ge1$ and $a\ge0$, we have
$$
\frac{e^{\eps_i/2}-1}{\delta_i}\le\frac{e^{\eps/2}-1}\delta.
$$
Hence it suffices to satisfy
\begin{align*}
\taumin
&\ge\tfrac{2(1+\alpha)(1-\beta^2)}\beta\cdot\pbra{\tfrac{1+\beta}{(1+\alpha)(1-\beta)}}^{i-1}\cdot\max\cbra{\frac{48}{\alpha\cdot\eps_i}\ln\pbra{\frac{2d}{\eta_i}},\frac6{\alpha\cdot\eps_i}\ln\pbra{1+\frac{e^{\eps/2}-1}{\delta}}}\\
&=\tfrac{2(1+\alpha)(1-\beta^2)}\beta\cdot C\cdot\max\cbra{\frac{48}{\alpha\cdot\eps\cdot i}\cdot\pbra{\ln\pbra{\tfrac{2d}{\eta}}+i\ln(2)},\frac6{\alpha\cdot\eps\cdot i}\ln\pbra{1+\tfrac{e^{\eps/2}-1}{\delta}}}.
\end{align*}
The RHS is maximized at $i=1$ which gives
$$
\taumin\ge\frac{2(1+\beta)^2(1-\beta^2)}{\beta\cdot\pbra{(2+\alpha)\cdot\beta-\alpha}^2}\cdot\max\cbra{\frac{48\cdot(1+\alpha)}{\alpha\cdot\eps}\ln\pbra{\frac{4d}{\eta}},\frac{6\cdot(1+\alpha)}{\alpha\cdot\eps}\ln\pbra{1+\frac{e^{\eps/2}-1}{\delta}}}.
$$
Finally we set $\beta$ as $\beta=\frac\alpha{6+5\alpha}$ and notice $2(1+\beta)^2(1-\beta^2)\le2(1+\beta)^3$. 
Then the requirement on $\taumin$ becomes
\begin{equation*}
\taumin\ge\frac{27\cdot(1+\alpha)}{\alpha^3}\cdot\max\cbra{\frac{48\cdot(1+\alpha)}{\alpha\cdot\eps}\ln\pbra{\frac{4d}{\eta}},\frac{6\cdot(1+\alpha)}{\alpha\cdot\eps}\ln\pbra{1+\frac{e^{\eps/2}-1}{\delta}}}.
\tag*{\qedhere}
\end{equation*}
\end{proof}

We remark that the upper bound $M$ is not really necessary as it only logarithmically influences the runtime of our algorithm.
We can discard this assumption by performing a private estimation (e.g., \Cref{lem:truncated_laplace_mechanism}) of the weight of the root of $\Tcal$ in the beginning, which suffices for providing such $M$.
Now we conclude the proof of \Cref{cor:estimation}.

\begin{proof}[Proof of \Cref{cor:estimation}]
The details are given in \Cref{alg:estimation}.

\begin{algorithm}[ht]
\caption{\textsf{Estimation}}\label{alg:estimation}
\DontPrintSemicolon
\LinesNumbered
\KwIn{$\Tcal,\alpha,\eps,\delta,\eta$ described in \Cref{cor:estimation}}
\KwOut{$\tilde w_u\in\Rbb$ for all node $u\in\nodes(\Tcal)$}
Define $R\gets\frac2\eps\ln\pbra{1+\frac{e^{\eps/2}-1}{\delta}}$\;
Compute $M\gets w_\textsf{root}+R+\TruncLap(2/\eps,R)$ where \textsf{root} is the root of $\Tcal$\;
Let 
$$
\beta\gets\frac\alpha{6+5\alpha}
\quad\text{and}\quad
\ell=\ceilbra{\ln\pbra{\frac M{\alpha\cdot\taumin}}\middle/\ln\pbra{\frac{(1+\alpha)(1-\beta)}{1+\beta}}}
$$\;
Set
$$
M_i\gets\alpha\cdot\taumin\cdot\pbra{\frac{(1+\alpha)(1-\beta)}{1+\beta}}^i,
\quad\quad
\tau_i\gets\frac{\alpha\cdot\taumin}{1+\beta}\cdot\pbra{\frac{(1+\alpha)(1-\beta)}{1+\beta}}^i
$$
then set $C\gets\pbra{\frac{(2+\alpha)\cdot\beta-\alpha}{1+\beta}}^{-2}$ and
$$
\eta_i\gets\eta/2^i,
\quad
\eps_i=\frac\eps{2C}\cdot i\cdot\pbra{\frac{(1+\alpha)(1-\beta)}{1+\beta}}^{i-1},
\quad
\delta_i=\frac\delta{2C}\cdot i\cdot\pbra{\frac{(1+\alpha)(1-\beta)}{1+\beta}}^{i-1}
$$\;
\Return{\Reduction{$\Tcal,\ell,\pbra{M_i,\eta_i,\alpha_i,\tau_i,\eps_i,\delta_i}_{i\in[\ell]}$}}
\end{algorithm}

By \Cref{lem:truncated_laplace_mechanism}, \textsf{Line 2} is $(\eps/2,\delta/2)$-DP.
By the proof of \Cref{thm:estimation}, \textsf{Lines 3--5} are $(\eps/2,\delta/2)$-DP.
Thus the privacy guarantee follows immediately from \Cref{fct:basic_composition}.

On the other hand, by the definition of truncated Laplace distribution, $M$ is always an upper bound of the weight of the root of $\Tcal$. 
Then the choice of the parameters follows exactly as in the proof of \Cref{thm:estimation}.
Together with the assumption on $\taumin$, the conditions in \Cref{thm:estimation} are satisfied, which implies the correctness of the algorithm.
\end{proof}

\subsection[Proof of Corollary 31]{Proof of \Cref{cor:estimation_lower_bound_rwalk}}\label{app:cor:estimation_lower_bound_rwalk}

\begin{proof}
Let $\Acal_1,\ldots,\Acal_s$ be independent copies of $\Acal$.
Define algorithm $\Bcal$ to output the entry-wise median of $\Acal_1,\ldots,\Acal_s$.
By \Cref{fct:basic_composition} and \Cref{fct:post-processing}, $\Bcal$ is $(s\cdot\eps,s\cdot\delta)$-DP.

Let $w_u$ be the true weight of node $u$ and let $\bw^{(i)}$ be the output of each $\Acal_i$.
Then the output of $\Bcal$, denoted by $\tbw$, is $\Median\pbra{\bw^{(1)},\ldots,\bw^{(s)}}$.
Then for any $u\in\nodes(\Tcal)$ we have
\begin{align*}
\Pr\sbra{\abs{\tilde w_u-w_u}>t_u}
&\le\Pr\sbra{\frac1s\sum_{i=1}^s\indicator_{\abs{w^{(i)}_u-w_u}>t_u}<\frac12}
\tag{$t_u=\alpha\cdot\max\cbra{w_u,\tau_u}$}\\
&\le\exp\pbra{-2\cdot s\cdot\pbra{\frac12-\eta}^2}
\tag{by a Chernoff--Hoeffding bound}\\
&\le\frac\kappa4<\frac18.
\end{align*}

Then applying \Cref{lem:estimation_lower_bound_rwalk} to $\Bcal$, we have the desired bound
\begin{equation*}
\taumax=\Omega\pbra{\frac{(1-\alpha)\cdot\pbra{d-3-(d-1)\cdot\Hcal(\kappa)}}{s\cdot\eps}}. 
\tag*{\qedhere}
\end{equation*}
\end{proof}

\subsection[Proof of Theorem 29]{Proof of \Cref{thm:add-err-lb}}\label{app:apx_lb}

Here we prove \Cref{thm:add-err-lb}: Any $(\eps, \delta)$-DP algorithm that allows only additive-only error (i.e., $\mRMSE$) must incur an error of $\Omega_{\eps, \delta}(\sqrt{d})$, as stated more formally below. 
This implies that the Gaussian mechanism (i.e., \Cref{cor:baseline_algorithms}) is essentially optimal for this setting.

To prove \Cref{thm:add-err-lb}, we start by giving a generic lower bound for $\bW$-linear queries. To state the lower bound, we need to define the $\gamma_2$-norm: 
For any matrix $\bW$, define 
$$
\gamma_2(\bW) := \min_{\bR^\top \bA = \bW} \vabs{\bR}_{\linfltwo} \vabs{\bA}_{\linfltwo}.
$$

The following theorem essentially follows from \cite{EdmondsNU20}, which asserts that $\gamma_2(\bW)$ captures the minimum $\mRMSE$ achievable by $(\eps,\delta)$-DP algorithms. 
Note that below we only state the lower bound; the upper bound follows from the so-called \emph{factorization mechanism}~\cite{NikolovTZ13}\footnote{The factorization mechanism is an instantiation of the matrix mechanism \cite{DBLP:journals/vldb/LiMHMR15}. See also~\cite{EdmondsNU20} where it is explained with notation more similar to ours.}.

\begin{theorem}\label{thm:linear-queries-characterization}
For any $\eps > 0$ and any $\delta > 0$ sufficiently small depending on $\eps$, there is a constant $C_{\eps, \delta} > 0$ that any $(\eps, \delta)$-DP algorithm $\cM$ for $\bW$-linear query must incur $\mRMSE$ at least $C_{\eps, \delta} \cdot \gamma_2(\bW)$.
\end{theorem}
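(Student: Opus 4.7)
The plan is to derive the statement from the analogous tight characterization of the $\ell_2^2$-error for $\bW$-linear queries established in \cite{EdmondsNU20}, combined with a simple averaging inequality that translates $\errltsq$-lower bounds into $\mRMSE$-lower bounds. The proof is basically a two-line reduction once that characterization is in hand.

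The first step is a direct comparison of the two error measures. From the definitions,
\begin{align*}
\mRMSE(\cM;\bW)^2
&= \max_{\bx \in \Nbb^n} \max_{i \in [m]} \E\sbra{(\cM(\bx)_i - (\bW\bx)_i)^2} \\
&\geq \max_{\bx \in \Nbb^n} \frac{1}{m}\E\sbra{\vabs{\cM(\bx) - \bW\bx}_2^2}
= \errltsq(\cM;\bW)^2,
\end{align*}
since a max over coordinates is at least their average. Thus any lower bound on $\errltsq$ transfers directly to the same lower bound on $\mRMSE$.

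The second step is to invoke the $(\eps,\delta)$-DP lower bound for $\errltsq$ from \cite{EdmondsNU20}. That work shows, via fingerprinting-style packing arguments combined with the known equivalence between hereditary discrepancy and the $\gamma_2$ factorization norm, that for any $\eps > 0$ and any sufficiently small $\delta > 0$, every $(\eps,\delta)$-DP algorithm $\cM$ for the $\bW$-linear query problem satisfies $\errltsq(\cM;\bW) \geq C_{\eps,\delta}' \cdot \gamma_2(\bW)$ for some constant $C_{\eps,\delta}' > 0$ depending only on $\eps$ and $\delta$. Setting $C_{\eps,\delta} := C_{\eps,\delta}'$ and chaining with the previous inequality gives the claimed $\mRMSE$ lower bound.

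The main obstacle is bookkeeping rather than content: one must verify that the lower bound from \cite{EdmondsNU20} is quoted in exactly the right form, i.e., the neighboring relation matches the integer $\ell_1$-notion used here, the input domain matches $\Nbb^n$, and crucially no hidden $\polylog(m,n)$ factor remains on the $\gamma_2$ side once every function of $\eps,\delta$ is absorbed into $C_{\eps,\delta}$. Subject to these routine checks, the proof is essentially immediate; this is exactly the ``slight modification'' of \cite{EdmondsNU20} alluded to in \Cref{sec:proof_overview}.
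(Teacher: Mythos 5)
Your first inequality, $\mRMSE(\cM;\bW) \geq \errltsq(\cM;\bW)$, is correct. The gap is in the second step: the lower bound of \cite{EdmondsNU20} does \emph{not} say $\errltsq(\cM;\bW) \geq C'_{\eps,\delta}\cdot\gamma_2(\bW)$. Their characterization of $\errltsq$ is in terms of a different factorization norm, in which the left factor $\bR$ is measured by the normalized Frobenius norm $\vabs{\bR}_F/\sqrt{m}$ rather than by $\vabs{\bR}_{\linfltwo}$ as in $\gamma_2$. Since $\vabs{\bR}_F/\sqrt m \le \vabs{\bR}_{\linfltwo}$, these two factorization norms can differ by as much as a factor of $\sqrt{m}$, so chaining your two inequalities only yields a lower bound against the smaller (Frobenius-based) quantity, not against $\gamma_2(\bW)$. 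A concrete witness that such a gap is unavoidable: for $\bW = \operatorname{diag}(1,\ldots,1,M)$ with $m=n$ and $M \gg \sqrt m$, the per-coordinate Gaussian mechanism with $\sigma_i \propto W_{ii}$ achieves $\errltsq = \Theta_{\eps,\delta}(M/\sqrt{m})$ while $\gamma_2(\bW) = M$, so the inequality $\errltsq \geq C_{\eps,\delta}\,\gamma_2(\bW)$ is simply false for fixed $\eps,\delta$ once $m$ is large.

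This is exactly why the paper does not chain inequalities but instead opens up the proof of \cite[Theorem 28]{EdmondsNU20}: there a matrix $\bR={\bm\Sigma}^{1/2}$ appears, where $\bm\Sigma$ is the covariance of a Gaussian mechanism dominating $\cM$, and the \cite{EdmondsNU20} argument tracks $\vabs{\bR}_F$, which recovers $\errltsq$. Replacing that one quantity by $\vabs{\bR}_{\linfltwo}$ --- which equals $\mRMSE(\cM;\bW)$ because $\bR$ is symmetric --- and carrying it through the rest of the argument is what makes the $\gamma_2$-norm, rather than the Frobenius-based norm, emerge on the right-hand side. So the step you flag as ``bookkeeping'' is in fact the substantive content: the two norms genuinely differ, and a direct citation of the $\errltsq$ result cannot be massaged into the stated theorem.
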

\begin{proof}[Proof Sketch]
The proof is essentially the same as \cite[Theorem 30]{EdmondsNU20} which gives a similar characterization for the $\ell_2^2$-error. 
The only (non-trivial) change is that, in the proof of \cite[Theorem 28]{EdmondsNU20}, instead of considering $\vabs{\bR}_F$, we consider $\vabs{\bR}_{\linfltwo}$ and obtain
\begin{align*}
\vabs{\bR}_{\linfltwo} = \max_{i \in [m]} \sqrt{(\bR^\top \bR)_i} = \max_{i \in [m]} \sqrt{(\bR \bR^\top)_i} = \max_{i \in [m]} \sqrt{{\bm\Sigma}_{i, i}} = \mRMSE(\cM; \bW),
\end{align*}
where the second equality is due to the fact that $\bR = {\bm\Sigma}^{1/2}$ is symmetric.
\end{proof}

Let $\cT_\textsf{bin}$ be the complete binary tree of depth $d$, which has $m = 2^d - 1$ nodes and $n = 2^{d - 1}$ leaves. 
Recall that the tree aggregation problem can be viewed as the $\bW^{\cT_\textsf{bin}}$-linear query problem, where $\bW^{\cT_\textsf{bin}} \in \bin^{m \times n}$ is such that $W^{\cT_\textsf{bin}}_{i, j} = 1$ iff the node $i$ is an ancestor of (or itself) the leaf $j$. 
For each node $i$, we also define $\depth(i)$ to be the number of nodes on the path from root to $i$ minus one (e.g., $\depth(j) = d - 1$ for a leaf $j$).

To prove \Cref{thm:add-err-lb}, we need the following characterization of $\gamma_2$-norm based on the nuclear norm \cite{mathias1993hadamard}.
Here we adopt the presentation of \cite[Theorem 9]{LeeSS08}.
For a matrix $\bA$, let 
$\vabs{\bA}_*$ denote its \emph{nuclear norm}, i.e., the sum of the singular values of $\bA$.

\begin{theorem}[{\cite{mathias1993hadamard,LeeSS08}}] \label{thm:gamma2-dual}
For any matrix $\bW \in \R^{m \times n}$,
$$
\gamma_2(\bW) = \max_{\substack{\bu \in \R^n, \bv \in \R^m\\\vabs{\bu}_2, \vabs{\bv}_2 \leq 1}} \vabs{\bW \circ \bv\bu^\top}_*,
$$
where $\circ$ denotes the entrywise matrix product.
\end{theorem}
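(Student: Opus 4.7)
The plan is to prove this duality by rewriting $\gamma_2(\bW)$ as a semidefinite program, computing its Lagrangian dual, and simplifying the dual to match the stated maximum. First I would reformulate $\gamma_2$ in SDP form. By replacing a feasible factorization $(\bR, \bA)$ with $(s\bR, \bA/s)$ for appropriate $s > 0$ we may assume $\|\bR\|_{\infty,2}^2 = \|\bA\|_{\infty,2}^2$, so
$$
\gamma_2(\bW) \;=\; \min_{\bR^\top\bA = \bW} \max\left\{\|\bR\|_{\infty,2}^2,\ \|\bA\|_{\infty,2}^2\right\}.
$$
Setting $\bm{Y} = \bR^\top\bR$ and $\bm{Z} = \bA^\top\bA$, the Gram-matrix/Schur-complement identity says that $(\bm{Y}, \bm{Z}, \bW)$ admits such a simultaneous realization iff the block matrix $\left(\begin{smallmatrix}\bm{Y} & \bW \\ \bW^\top & \bm{Z}\end{smallmatrix}\right)$ is PSD. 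Hence
$$
\gamma_2(\bW) \;=\; \min\left\{\, t \;:\; \begin{pmatrix}\bm{Y} & \bW \\ \bW^\top & \bm{Z}\end{pmatrix} \succeq 0,\ Y_{ii} \leq t,\ Z_{jj} \leq t \,\right\}.
$$

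Next I would take the Lagrangian dual. Introduce nonnegative multipliers $\alpha_i$ for each $Y_{ii} \leq t$ and $\beta_j$ for each $Z_{jj} \leq t$, together with a PSD multiplier $\bm{M} = \left(\begin{smallmatrix}\bm{M}_{11} & \bm{M}_{12} \\ \bm{M}_{12}^\top & \bm{M}_{22}\end{smallmatrix}\right) \succeq 0$ for the block constraint. Stationarity in $t$ gives $\sum_i \alpha_i + \sum_j \beta_j = 1$; stationarity in the free off-diagonal entries of $\bm{Y}$ and $\bm{Z}$ forces $\bm{M}_{11} = \mathrm{diag}(\alpha)$ and $\bm{M}_{22} = \mathrm{diag}(\beta)$; and the dual objective collapses to $2\langle \bm{M}_{12}, \bW\rangle$ (with the sign absorbed by conjugating $\bm{M}$ with $\mathrm{diag}(\bm{I}, -\bm{I})$ if needed). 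Strong duality holds by Slater's condition: taking $\bm{Y} = \bm{Z} = C\bm{I}$ for large $C$ and $t = 2C$ strictly satisfies every primal inequality.

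The remaining step is to maximize the dual in closed form. When $\alpha, \beta > 0$, the block-PSD condition on $\bm{M}$ is equivalent to $\bm{M}_{12} = \mathrm{diag}(\sqrt{\alpha})\,\bm{N}\,\mathrm{diag}(\sqrt{\beta})$ with $\|\bm{N}\|_{\mathrm{op}} \leq 1$. Substituting and using the duality $\|\bm{X}\|_* = \max_{\|\bm{N}\|_{\mathrm{op}} \leq 1}\langle \bm{N}, \bm{X}\rangle$ yields
$$
\max_{\bm{M}_{12}}\, 2\langle \bm{M}_{12}, \bW\rangle \;=\; 2\,\bigl\|\mathrm{diag}(\sqrt{\alpha})\,\bW\,\mathrm{diag}(\sqrt{\beta})\bigr\|_* \;=\; 2\,\|\bW \circ \bv\bu^\top\|_*,
$$
where $v_i := \sqrt{\alpha_i}$ and $u_j := \sqrt{\beta_j}$ satisfy $\|\bv\|_2^2 + \|\bu\|_2^2 = 1$. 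The nuclear norm is invariant under signed-diagonal multiplication on either side, so the nonnegativity constraint on $\bv, \bu$ may be dropped. Then separate homogeneity of $\|\bW \circ \bv\bu^\top\|_*$ in $\bv$ and $\bu$, combined with AM--GM $2\|\bv\|_2\|\bu\|_2 \leq \|\bv\|_2^2 + \|\bu\|_2^2$ (equality at $\|\bv\|_2 = \|\bu\|_2$), converts the spherical constraint into the pair of unit-ball constraints $\|\bv\|_2, \|\bu\|_2 \leq 1$ while absorbing the prefactor $2$, giving exactly the stated identity.

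I expect the main technical subtlety to be the boundary case where some $\alpha_i$ or $\beta_j$ vanishes: there the explicit factorization $\bm{M}_{12} = \mathrm{diag}(\sqrt{\alpha})\,\bm{N}\,\mathrm{diag}(\sqrt{\beta})$ is not directly available, since the corresponding rows or columns of $\bm{M}_{12}$ are forced to zero by the PSD block constraint. I would handle this either by a continuity argument (perturb $\alpha \mapsto \alpha + \varepsilon\bone$, $\beta \mapsto \beta + \varepsilon\bone$, renormalize, and take $\varepsilon \downarrow 0$, using continuity of the nuclear norm in its argument) or by invoking the generalized Schur complement with the Moore--Penrose pseudoinverse. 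The remainder is bookkeeping: carefully tracking the factor of $2$ from the symmetric block structure and its eventual cancellation against the AM--GM step.
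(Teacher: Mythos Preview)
The paper does not give its own proof of this theorem: it is stated with a citation to \cite{mathias1993hadamard,LeeSS08} and used as a black box to lower-bound $\gamma_2(\bW^{\cT_\textsf{bin}})$. So there is nothing in the paper to compare your argument against.

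That said, your proposal is correct and is essentially the standard proof of this identity (as in Lee--Shraibman--\v{S}palek). The SDP reformulation via the Schur-complement block matrix, the Lagrangian dual yielding $M_{11}=\mathrm{diag}(\alpha)$, $M_{22}=\mathrm{diag}(\beta)$ with $\sum_i\alpha_i+\sum_j\beta_j=1$, the identification of the optimal $M_{12}$ with the nuclear-norm dual, and the final AM--GM step converting the constraint $\|\bv\|_2^2+\|\bu\|_2^2=1$ into $\|\bv\|_2,\|\bu\|_2\le 1$ while absorbing the factor~$2$ are all sound. Strong duality via Slater is immediate as you note. The boundary case where some $\alpha_i$ or $\beta_j$ vanishes is indeed harmless: when $\alpha_i=0$ the PSD block constraint forces the $i$th row of $M_{12}$ to be zero, which is exactly what your factorization $M_{12}=\mathrm{diag}(\sqrt{\alpha})\,N\,\mathrm{diag}(\sqrt{\beta})$ already produces, so no perturbation is actually needed (though the continuity argument you sketch also works). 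One small wording point: the sign-invariance you invoke to drop the nonnegativity of $\bv,\bu$ is correct because $\mathrm{diag}(\pm 1)$ is orthogonal and the nuclear norm is unitarily invariant.
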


The above bound allows us to give a lower bound on $\gamma_2(\bW)$ by carefully choosing $\bu$ and $\bv$. 
We remark that this approach is similar to previous works, e.g., for range queries~\cite{rny033}.

\begin{proof}[Proof of \Cref{thm:add-err-lb}]
From \Cref{thm:linear-queries-characterization}, it suffices to show that $\gamma_2(\bW^{\cT_\textsf{bin}}) \geq \Omega(\sqrt{d})$. 

To prove this, we will use the dual characterization from \Cref{thm:gamma2-dual}. 
We select $\bv$ and $\bu$ as follows: 
Let $\bu \in \R^n$ be such that all entries have values $1/\sqrt{n}$ and let $\bv \in \R^m$ be such that $v_i = 1/\sqrt{{2^{\depth(i)} \cdot d}}$. It is simple to verify that $\vabs{\bu}_2, \vabs{\bv}_2 \leq 1$. Let $\bU = \bW^{\cT_\textsf{bin}} \circ \bv\bu^\top$. Consider $\bU^\top\bU$. 
By the definition of $\bW^{\cT_\textsf{bin}}, \bu, \bv$, we have
\begin{align*}
(\bU^\top\bU)_{j, k} = \frac{1}{n\cdot d} \sum_{\ell = 0}^{d-1} \frac{1}{2^{\ell}} \cdot \ind[\text{leaves } j, k \text{ share the same ancestor at depth } \ell].
\end{align*}

For convenience, let $\lambda=1/(n\cdot d)$.
We directly compute the eigenvectors of $\bU^\top\bU$:
\begin{itemize}
\item First is the all-ones vector, corresponding to eigenvalue $\lambda\cdot\left(2^{d-1} + \frac{2^{d-2}}{2^1} + \cdots + \frac{1}{2^{d-1}}\right) \geq \lambda \cdot n$.
\item The remaining eigenvectors are $\bz^i$'s for every internal node $i$.
Let $i_L$ and $i_R$ be the left and right child node of $i$ respectively, then $\bz^i$ is defined as
\begin{align*}
z^i_j =
\begin{cases}
1 &\text{ if } j \text{ is a descendant of } i_L, \\
-1 &\text{ if } j \text{ is a descendant of } i_R, \\
0 &\text{ otherwise.}
\end{cases}
\end{align*}
Let $\ell=\depth(i)$. This gives an eigenvalue of 
$$
\lambda\cdot\left(\frac{2^{d-1-(\ell+1)}}{2^{\ell+1}} + \frac{2^{d-1-(\ell+2)}}{2^{\ell+2}} + \cdots + \frac{1}{2^{d-1}}\right) \geq \lambda \cdot 2^{d-2\ell-3} = \lambda \cdot n / 2^{2(\ell+1)}.
$$
\end{itemize}
To summarize, $\bU^\top\bU$ has an eigenvalue at least $\lambda \cdot n$ and has $2^\ell$ eigenvalues at least $\lambda \cdot n / 2^{2(\ell+1)}$ for each $\ell=0,1,\ldots,d-2$.

Since the singular values of $\bU$ are simply the square roots of eigenvalues of $\bU^\top\bU$, we compute its nuclear norm as:
\begin{align*}
\vabs{\bU}_* \geq \sum_{\ell = 0}^{d - 2} 2^\ell\cdot\sqrt{\lambda \cdot n / 2^{2(\ell+1)}} 
= \Omega(d \cdot \sqrt{\lambda \cdot n})
= \Omega(\sqrt{d}).
\end{align*}
Then by \Cref{thm:gamma2-dual}, we have $\gamma_2(\bW^{\cT_\textsf{bin}}) \geq \Omega(\sqrt{d})$ as desired.
\end{proof}

\section[Smoothed Relative Error from alpha-Multiplicative RMSE]{Smoothed Relative Error from $\alpha$-Multiplicative RMSE}
\label{app:err-comp}

In this section, we show that an estimator with small $\alpha$-$\RMSE$ also has a small ``smoothed'' relative error, which is often used in empirical evaluations (e.g., \cite{QardajiYL13,ZhangXX16}).
Recall that the smoothed relative error can be defined as
\begin{align*}
\REL_{\kappa}(\tz, z) := \frac{\E_{\tz}\left[|\tz - z|\right]}{\max\{z, \kappa\}},
\end{align*}
where $\kappa > 0$ is the ``smoothing factor''.

The relationship between this notion of error and ours is stated below.

\begin{lemma}
For any $\alpha, \kappa > 0$ and any estimator $\tz$ of $z \in \R$, we have 
\begin{align*}
\REL_{\kappa}(\tz, z) \leq \sqrt{2} \cdot \left(\frac{\RMSE_\alpha(\tz, z)}{\kappa} + \alpha\right).
\end{align*}
\end{lemma}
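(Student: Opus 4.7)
My plan is to bound $\E[|\tz - z|]$ by the sum of the ``large part'' (captured by $\RMSE_\alpha$) and the ``multiplicative part'' $\alpha z$, then divide by $\max\{z,\kappa\}$ and handle each piece separately. The key pointwise inequality is
\[
|\tz - z| \;\leq\; \max\{|\tz - z| - \alpha z,\,0\} \;+\; \alpha z,
\]
which holds whenever $z \geq 0$ (the regime in which $\RMSE_\alpha$ is defined). This is just the identity $x = \max\{x-c,0\} + \min\{x,c\}$ combined with $\min\{x,c\}\leq c$.

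Next, I would take expectations on both sides and apply Jensen's inequality (equivalently, Cauchy--Schwarz on the constant $1$) to the first term:
\[
\E\bigl[\max\{|\tz - z| - \alpha z,\,0\}\bigr]
\;\leq\; \sqrt{\E\bigl[(\max\{|\tz - z| - \alpha z,\,0\})^2\bigr]}
\;=\; \RMSE_\alpha(\tz, z).
\]
Combining with the previous display gives $\E[|\tz-z|] \leq \RMSE_\alpha(\tz,z) + \alpha z$.

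Finally, I would divide by $\max\{z,\kappa\}$ and split into two terms:
\[
\REL_\kappa(\tz,z)
\;=\; \frac{\E[|\tz-z|]}{\max\{z,\kappa\}}
\;\leq\; \frac{\RMSE_\alpha(\tz,z)}{\max\{z,\kappa\}} + \frac{\alpha z}{\max\{z,\kappa\}}
\;\leq\; \frac{\RMSE_\alpha(\tz,z)}{\kappa} + \alpha,
\]
using $\max\{z,\kappa\} \geq \kappa$ on the first term and $\max\{z,\kappa\} \geq z$ on the second. This already yields an even cleaner bound than stated, so the claimed inequality with the extra $\sqrt{2}$ factor follows immediately (indeed the $\sqrt{2}$ seems only needed if one is unwilling to split terms and instead applies a single $(a+b)^2 \leq 2a^2+2b^2$-style step).

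There is no real obstacle here: the argument is one line of algebra plus one application of Jensen. The only mild point to check is the regime of $z$: since the paper defines $\RMSE_\alpha$ only for $z\geq 0$, I would either assume $z\geq 0$ or, for $z<0$, note that ``$\max\{z,\kappa\} = \kappa$'' and handle that case by a direct bound $\E[|\tz-z|]\leq \RMSE_0(\tz,z)$ followed by division by $\kappa$. The $\sqrt 2$ buffer in the statement easily absorbs any such edge case.
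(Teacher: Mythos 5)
Your proof is correct, and it in fact proves a strictly stronger inequality: $\REL_\kappa(\tz,z) \leq \RMSE_\alpha(\tz,z)/\kappa + \alpha$, without the $\sqrt{2}$ factor. The paper takes a different route: it first applies $\E[|X|]\leq\sqrt{\E[X^2]}$ to the whole quantity, then inserts the pointwise decomposition $|\tz-z|\le\max\{|\tz-z|-\alpha z,0\}+\alpha z$ under the square, and is then forced to invoke $(a+b)^2\le 2a^2+2b^2$ followed by $\sqrt{a+b}\le\sqrt a+\sqrt b$ to separate the two terms --- this chain is precisely where the $\sqrt 2$ is lost. You instead apply the pointwise decomposition \emph{before} taking any norm, take expectations linearly, and only then apply Jensen to the single term that needs it ($\E[\max\{\cdot,0\}]\le\sqrt{\E[\max\{\cdot,0\}^2]}=\RMSE_\alpha$), after which the denominator splits cleanly via $\max\{z,\kappa\}\ge\kappa$ for one term and $\max\{z,\kappa\}\ge z$ for the other. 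Reordering ``decompose, then convexity'' rather than ``convexity, then decompose'' is what buys you the sharper constant. Your observation about $z<0$ is harmless but unnecessary here since $\RMSE_\alpha$ is defined only for $z\ge 0$.
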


\begin{proof}
We have
\begin{align*}
\E_{\tz}\left[|\tz - z|\right] &~\leq~ \sqrt{\E_{\tz}[|\tz - z|^2]} 
\tag{by convexity}\\
&~\leq~ \sqrt{\E_{\tz}[\pbra{\max\cbra{\abs{\tz - z} - \alpha\cdot z, 0} + \alpha \cdot z}^2]} \\
&~\leq~ \sqrt{\E_{\tz}[2\pbra{\max\cbra{\abs{\tz - z} - \alpha\cdot z, 0}}^2 + 2\pbra{\alpha \cdot z}^2]} 
\tag{since $(a+b)^2\le2a^2+2b^2$}\\
&~\leq~ \sqrt{\E_{\tz}[2\pbra{\max\cbra{\abs{\tz - z} - \alpha\cdot z, 0}}^2]} + \sqrt{2}\cdot(\alpha \cdot z)\\
&~=~ \sqrt{2} \cdot \RMSE_\alpha(\tz, z) + \sqrt{2}\cdot(\alpha \cdot z).
\end{align*}

Plugging this back to the definition of $\REL_{\kappa}$, we get
\begin{align*}
\REL_{\kappa}(\tz, z) \leq \sqrt{2} \cdot \left(\frac{\RMSE_\alpha(\tz, z) + \alpha\cdot z}{\max\{z, \kappa\}}\right)
\leq \sqrt{2} \cdot \left(\frac{\RMSE_\alpha(\tz, z)}{\kappa} + \alpha\right).
\tag*{\qedhere}
\end{align*}
\end{proof}

\end{document}